\newtheorem{theorem}{Theorem}
\newtheorem{lemma}[theorem]{Lemma}
\newtheorem{corollary}[theorem]{Corollary}
\theoremstyle{definition}
\newtheorem{definition}[theorem]{Definition} 
\newtheorem{example}{Example} 
\newtheorem{remark}{Remark} 
\newcommand{\tbl}[1]{\caption{#1}}
\newenvironment{tabnote}{
  \par\vskip0.5em
  \small\noindent\textit{Note:}\ }{}
\newtheorem{algo}{Algorithm}
\title{SOMA: A Novel Sampler for Bayesian \\ Inference from Privatized Data}
\author{Yifei Xiong$^1$ , Nianqiao Phyllis Ju$^2$\thanks{Corresponding author: nianqiao.ju@dartmouth.edu. Portions of this work were conducted while this author was at Purdue University, Department of Statistics.}}
\date{%
    $^1$Department of Statistics, Purdue University\\
    $^2$Department of Mathematics, Dartmouth College
}
\begin{document}
\maketitle

\begin{abstract}
Making valid statistical inferences from privatized data is a key challenge in modern analysis. In Bayesian settings, data augmentation MCMC (DAMCMC) methods impute unobserved confidential data given noisy privatized summaries, enabling principled uncertainty quantification. However, standard DAMCMC often suffers from slow mixing due to component-wise Metropolis-within-Gibbs updates. We propose the Single-Offer-Multiple-Attempts (SOMA) sampler. This novel algorithm improves acceptance rates by generating a single proposal and simultaneously evaluating its suitability to replace all components. By sharing proposals across components, SOMA rejects fewer proposal points. We prove lower bounds on SOMA’s acceptance probability and establish convergence rates in the two-component case. Experiments on synthetic and real census data with linear regression and other models confirm SOMA’s efficiency gains.
\end{abstract}

\tableofcontents

\section{INTRODUCTION}\label{sec:intro}

Imputing data from noisy summary statistics is a foundational problem with broad applications, especially in privacy-preserving data analysis.
In these applications, the data analysts cannot access the raw data directly, and only the noise-corrupted results from certain summary statistics queries are shared with them.
By sampling from the conditional distribution of the missing raw data given observed noisy summary statistics, data imputation enables downstream tasks that require individual-level data.
As the target distributions are multivariate and their densities can be evaluated only up to an unknown normalizing constant, 
data imputation typically requires Markov Chain Monte Carlo samplers.

\citet{ju2022data} proposes a data augmentation~\citep{van2001art} strategy to make inference from privatized data. The data augmentation MCMC (DAMCMC) strategy has since also been adopted by \citep{guo2024differentially,awan2024statistical} among others. 
DAMCMC consists of an inference step and an imputation step.
This gives a chain that approximates the joint distribution of confidential data and parameters given observed noisy summary statistics.
For the imputation step,
\citet{ju2022data} propose using the independent Metropolis-Hastings (IMH)~\citep{tierney1994markov,liu1996metropolized,lee2018optimal} sampler for each conditional update within a Gibbs sampler.
We refer to this strategy as independent-Metropolis-within-Gibbs (IMwG).
They show that the acceptance probability of IMwG proposed states are lower bounded, and connected to the magnitude of privacy noise.
This makes IMwG suitable for the problem of interest.
However, there remains an opportunity to design new sampling algorithms with a provably faster convergence than the IMwG method.

Our contributions are three folded. 
\begin{itemize}
    \item First, we proposes a novel Single-Offer-Multiple-Attempts (SOMA) sampler that exploits a permutation-invariant property of the data imputation distribution. To the best of our knowledge, no existing method has exploited this property.
    \item 
    Second, we investigate the theoretical properties of SOMA and IMwG. We show that the acceptance probability of SOMA is larger than that of IMwG. For the case of $n = 2$ components, we establish convergence rates of SOMA and IMwG by constructing a Markovian coupling. We also prove that SOMA becomes rejection-free as $n \to \infty.$
    \item Finally, we apply SOMA to the problems of Bayesian linear regression given privatized data and Bayesian inference from privatized compositional data on American Time Use Survey (ATUS) data.
\end{itemize}

The remainder of this paper is structured as follows: Section~\ref{sec:motivations} defines the data imputation problem, establishes the permutation-invariant property of the target distribution, and reviews the IMwG method and its limitations.
We present the SOMA algorithm in Section~\ref{sec:soma} and analyze its acceptance probability and convergence rate in Section~\ref{sec:theory}. Simulation and real data experiments are presented in Section~\ref{sec:exp} \&~\ref{sec:realdata}. The appendix contains proofs and additional experiments.

\section{PROBLEM FORMULATION AND CHALLENGES}\label{sec:motivations}
The primary motivation of SOMA is to make inferences from observed data that have been corrupted to satisfy privacy guarantees. 

\subsection{Differential Privacy and Data Release}\label{sec:privacy}
Consider a confidential dataset $X = (X_1, \ldots, X_n)^\top$ consisting of $n$ records, and each $X_i \in \mathbb{X}.$
When $X$ contains sensitive information, only sanitized results $S_{\mathrm{dp}} \in \mathbb{S}$ are released to the data users for analysis.
Note that $\mathbb{S}$ need not have the same dimension with $\mathbb{X}$ or $\mathbb{X}^n$. When $S_{\mathrm{dp}}$ is a sanitized database, we have $S_{\mathrm{dp}} \in \mathbb{X}^k$, where $k$ is the number of sanitized samples. When $S_{\mathrm{dp}}$ is some privatized query result, $S_{\mathrm{dp}}$ has the same dimension as the query. The methodology we develop is applicable to both instances.

A data release mechanism $\eta( \cdot \mid X)$ is a conditional distribution of $S_{\mathrm{dp}}$ given $X$.
This paper focuses on mechanisms that satisfy the definition of differential privacy (DP) from~\citet{dwork2006differential}.

\begin{definition}[Distance between datasets]
Let $x,x^{\prime}$ be two datasets of the same size.
Define $d(x,x^\prime)$ as the minimum number of components in which $x$ and $x^\prime$ differ, up to permutation of indices, i.e., 
\begin{equation*}
d(x, x^\prime) = \min_{\sigma \in S_n}\|x_{\sigma} - x'\|_{0},
\end{equation*}
where we use $x_{\sigma}$ to represent the vector obtained from permuting indices from $x$. We call $x,x^{\prime}$ neighboring datasets if $d(x,x^{\prime}) = 1.$
\end{definition}

\begin{definition}
A mechanism $S_{\mathrm{dp}} \sim \eta(\cdot \mid X)$ satisfies $\epsilon$-DP, if, for all pairs of neighboring databases $(x, x^\prime)$, we have
\begin{equation}\label{eqn:epsilon-dp}
\sup_{B \in \mathcal{B}} \frac{\eta(B \mid x)}{\eta(B \mid x')} \le \exp(\epsilon),  
\end{equation}
where $\mathcal{B}$ is the measurable ball on $\mathbb{S}$
where the parameter $\epsilon>0$ is called the \textit{privacy loss budget}. The ratio is defined to be 1 whenever the numerator and denominator are both 0.
\end{definition}

The Laplace mechanism is a common technique for adding noise to achieve privacy. Given a query function $s: \mathbb{X}^n \to \mathbb{S}$, and query result $s(X)$, we have $S_{\mathrm{dp}} = s(X) + \nu$ where components in $\nu$ are independent and identical samples from a zero-centered Laplace distribution. The variance is determined by the sensitivity of $s$ as well as the privacy loss budget $\epsilon$. We give some examples of privatized query results here. These mechanisms will be covered by our experiments in Sections~\ref{sec:exp}, \ref{sec:realdata}, and \ref{sec:exp_hist}.

\begin{example}[Perturbed histogram]\label{example:perturbedhistogram}

Let the confidential data domain be $\mathbb{X} = [0,1]^r$. 
We create $m$ bins in $\mathbb{X}$, and denote the $j$-th bin as $B_j$. Let $S_j$ be the number of $X_i$'s following in the $j$-th bin. Here $(S_1,\ldots,S_m)$ is a query result from $X$. 
Let $S_{\mathrm{dp},j} = S_j + \nu_j$ where $\nu_j$ are independent draws from 
$\mathrm{Laplace}(0, 2/\epsilon).$
The perturbed histogram mechanism~\citep{dwork2006differential} is to release $S_{\mathrm{dp}}$. One can also generate synthetic data from perturbed histograms~\citep{wasserman2010statistical}. Let $\tilde{S}_j = \max(0,S_{\mathrm{dp},j})$, $q_j = \tilde{S}_j / \left(\sum_{j=1}^k \tilde{S}_j\right)$. Sample $Z_i$ independently according to (unormalized) density $\sum_{j=1}^m q_j \mathbb{I}(z \in B_j)$ for $i = 1,\ldots, k$, gives a sanitized database $Z = (Z_1,\ldots,Z_k)$ of size $k$.
\end{example}

\begin{example}[Data release for privatized linear regression]\label{example:linearregression}
Consider a regression setting with predictors $x_i \in \mathbb{R}^p$ and a response $y_i \in \mathbb{R}$. The confidential data is a concatenated matrix $Z$ with rows $(x_i^\top, y_i)$.
To ensure finite sensitivity, each entry in $Z$ is first clamped to a pre-defined range. The summary statistics are the unique elements of the Gram matrix $Z^\top Z$.
A privatized version is released by adding iid Laplace noise, scaled to the sensitivity of these statistics with each unique element~\citep{bernstein2019differentially}.
\end{example}

\begin{example}[Private release of compositional data]\label{example:compositionaldata}
Let $\mathbb{X} = \{0 \le x_j \le 1, \sum_{i=1}^p x_j = 1\}$. 
The summary statistic is constructed from transformed data. 
First, each record $X_i$ is processed by element-wise clamping to an interval $[a, 1]$, followed by applying the logarithm. 
The sum of these transformed vectors, $S_{\log} = \sum_{i=1}^n \log([X_i]_a^1)$.
A privatized version is then released as $S_{\mathrm{dp}} = S_{\log} + \nu$, where the components of the noise vector $\nu$ are iid draws from a $\mathrm{Laplace}(0, -p\log(a)/\epsilon)$ distribution~\citep{guo2024differentially}.
\end{example}

\subsection{Inference from Privatized Data and Data Augmentation}\label{sec:imputation}
Our goal is to make inferences about the data-generating process $X_i \overset{iid}{\sim}f_{\theta}(\cdot) = f(\cdot \mid \theta)$ based on privatized data $S_{\mathrm{dp}}$.
Sometimes, a data analyst can only access some query result $s(x)$ about the confidential dataset, and the data maintainer might only share a privatized version of it, resulting in $s_{\mathrm{dp}} = s(x) + z$, where $z$ is some additive noise independent of $s(x).$
The observation density $\eta(s_{\mathrm{dp}} \mid s(x))$ depends on $s_{\mathrm{dp}} - s(x)$ and known privacy parameters.

In this setting, we need to make inferences about $\theta$ based on $s_{\mathrm{dp}}$, without direct access to $x$.
This article focuses on Bayesian inference, where the data analyst describes their prior belief about $\theta$ with some prior distribution $p(\theta).$
\citet{ju2022data} proposes a data augmentation strategy~\citep{van2001art} using a Markov Chain Monte Carlo algorithm to approximate the joint posterior distribution 
\begin{equation}\label{eqn:joint-posterior}
    \pi(\theta, x \mid s_{\mathrm{dp}}) \propto p(\theta)\cdot \prod_{i=1}^n f(x_i \mid \theta) \cdot  \eta( s_{\mathrm{dp}} \mid s(x)).
\end{equation}
From \eqref{eqn:joint-posterior}, the marginal distribution of $\theta$ is $\pi(\theta \mid s_{\mathrm{dp}})$. 
The algorithm alternates between two steps:
\begin{itemize}
\item Inference step: (approximate) sampling from the conditional distribution $\pi(\theta \mid x, s_{\mathrm{dp}});$
\item Imputation step: (approximate) sampling from the conditional distribution 
\begin{equation}\label{eqn:data-imputation-target}
\pi(x \mid \theta, s_{\mathrm{dp}}) \propto \eta(s_{\mathrm{dp}} \mid s(x)) \prod_{i=1}^n f(x_i \mid \theta)
\end{equation}
\end{itemize}
When focusing on the imputation step, $\theta$ and $s_{\mathrm{dp}}$ are fixed, and 
we can write $\pi(x \mid \theta, s_{\mathrm{dp}})$ as $\pi(x)$ to simplify the notation.
This work designs a sampler for the imputation step by exploiting the permutation invariant property of \eqref{eqn:data-imputation-target}.

\begin{definition}[Permutation invariance]\label{assump:exchange}
A distribution $\pi(x) = \pi(x_1,\ldots,x_n)$ is permutation invariant if, 
for any permutation $\sigma$ of the group $\{1,2,\ldots,n\}$, we have 
$
\pi(x_{\sigma(1)},x_{\sigma(2)}, \ldots,x_{\sigma(n)}) = \pi(x_1,x_2,\ldots,x_n).
$
In other words, $(X_1,\ldots,X_n) \sim \pi$ is an exchangeable sequence of random variables.
\end{definition}

\begin{lemma}
Whenever the query $s(x)$ is permutation invariant, the conditional distribution \eqref{eqn:data-imputation-target} is also permutation invariant.
\end{lemma}
\begin{proof} 
\vspace{-3pt}
Since $s(x)$ and $\prod_{i=1}^{n} f(x_i \mid \theta)$ are both permutation invariant, their product \eqref{eqn:data-imputation-target} is also permutation invariant.
\vspace{-3pt}
\end{proof}

Many queries like the mean, median, empirical distribution, or covariance are invariant to the ordering of the components. 
In particular, \Cref{example:perturbedhistogram,example:compositionaldata,example:linearregression} all lead to permutation-invariant imputation steps.
As a result, the imputation step is to sample from a permutation-invariant distribution. Previous work~\citep{ju2022data,awan2024statistical,xiong2025simulation} have not leveraged this property of $\pi(x \mid s_{\mathrm{dp}}, \theta).$

The methodology we propose is applicable to any multivariate distribution that is permutation invariant, which arises in various other problems in statistics and machine learning.

\subsection{Gibbs, Metropolis, and Metropolis-within-Gibbs}\label{sec:mwg}
When direct sampling from the joint distribution $\pi(x)$ is difficult, a Markov chain Monte Carlo strategy is the Gibbs sampler~\citep{geman1984stochastic,gelfand1990sampling}.
Gibbs sampling is ubiquitous in Bayesian modeling, and it is particularly useful when one can sample from the full conditional distributions 
\begin{equation}\label{eqn:fullconditional}
    \pi_i(x_i \mid x_{-i}) = \frac{\pi(x_i,x_{-i})}{\int \pi(x_i,x_{-i})\ \mathrm{d} x_i} \propto \pi(x_i, x_{-i}).
\end{equation}
Here $x_{-i}$ includes all components of $x$ except the $i$-th component, i.e., $x_{-i} = (x_1,\ldots,x_{i-1},x_{i+1},\ldots,x_n).$
Note that permutation invariance allows us to write the conditional as $\pi(x_i \mid x_{-i})$ instead of $\pi_i(x_i \mid x_{-i}).$

Gibbs samplers entails iterative updates according to $x_i \sim \pi(\cdot \mid x_{-i}).$
There are two main update schedules for the Gibbs sampler~\citep{andrieu2016random}: systematic-scan (SysScan, also called deterministic-scan) and random-scan (RanScan). The former method sequentially updates all components once at a time with a fixed schedule $i = 1,2,\ldots,n$, and the latter randomly selects a component $i$ to perform the conditional update in each iteration.

In many practical scenarios and especially for data imputation tasks
the full conditional distribution $\pi(x_i \mid x_{-i})$ is intractable and hence direct sampling is difficult.
In that case, we must resort to approximated Gibbs samplers~\citep{qin2025spectral} such as the Metropolis-within-Gibbs sampler~\citep{gilks1995adaptive}, which uses a Metropolis-Hastings kernel to approximate each conditional distribution.

The inherent symmetry from permutation invariance of \eqref{eqn:data-imputation-target} suggests a certain homogeneity among the full conditionals \eqref{eqn:fullconditional}. 
\citet{ju2022data} notices that it is reasonable to use the same Metropolis kernel in each Gibbs step. 
In fact, they propose to use the independent Metropolis-Hastings~\citep{tierney1994markov,liu1996metropolized,lee2018optimal,wang2022exact} sampler to approximate \eqref{eqn:fullconditional}, making the entire algorithm IMH within Gibbs (IMwG).
In general, IMH uses proposal distributions that are independent of the current state. 
In the context of IMwG, when updating the $i$-th component, IMH proposes a new state $y \sim q(\cdot)$ from $\mathbb{X}$, independent of the current state $x_i$. 
Replacing $x_i$ by $y$ with probability 
\begin{align}
\alpha^{\mathrm{IMwG}}_i(y, x_i \mid x_{-i}) &=   \min\left(1, \frac{\pi(y \mid x_{-i})q(x_i)}{\pi(x_i\mid x_{-i})q(y)}\right)= \min\left(1, \frac{\pi(y, x_{-i})q(x_i)}{\pi(x_i, x_{-i})q(y)}\right) \label{eqn:imh-acceptance}
\end{align}
leaves the (conditional) target distribution $\pi(x_i \mid x_{-i})$ invariant. 
If accepted, then the new state $x' = [x_{-i},y]$ is the result of replacing the $i$-th component of $x$ by $y$ and keeping all other components.

With full conditional \eqref{eqn:fullconditional} and proposing from the model $q(\cdot) = f(\cdot \mid \theta)$ simplifies \eqref{eqn:imh-acceptance} to 
\begin{equation}
    \label{eqn:imh-acceptance-simple}
    \min\left(1, \frac{\eta(s_{\mathrm{dp}} \mid s([x_{-i}, y]))}{\eta(s_{\mathrm{dp}} \mid s(x))}\right).
\end{equation} This computation can be completed in $\mathcal{O}(1)$ time when the privacy mechanism satisfies a `recording additivity' property~\citep[Assumption 2]{ju2022data}.

We will compare our proposed method with IMwG, under both RanScan and SysScan schedules. 
We refer to these two baselines as Ran-IMwG and Sys-IMwG, respectively.

\section{A SINGLE-OFFER-MULTIPLE-ATTEMPTS SAMPLER}\label{sec:soma}
We describe our novel SOMA sampler in this section. 
SOMA is motivated by the following observation about IMwG. When a proposed value $y$ to replace $x_i$ is rejected, it can still be a favorable candidate to swap out a different component $x_j$. 
In this sense, plain IMwG is not using the proposed samples efficiently.

Unlike the IMwG strategy of comparing $y \in \mathbb{X}$ with one component $x_i$ at a time, 
SOMA simultaneously considers a single proposed state $y$ as a candidate to replace multiple components. 
With a sample $y$, SOMA first randomly chooses a component $I$ from $1, 2 \ldots,n$ to replace. 
The selection probability is proportional to weights $w_i$ given by
\begin{equation}\label{eqn:wi}
w_i(y,x) = \eta(s_\mathrm{dp} \mid s([x_{-i},y])),\quad \text{for } i = 1,\ldots, n.
\end{equation}
The probability of choosing the $i$-th component is then $\mathbb{P}(I = i \mid y, x) = w_i(y,x) / \left(\sum_{i=1}^n w_i(y,x)\right)$. 
Using weights \eqref{eqn:wi} means that SOMA favors swapping out components that are less likely to have produced $s_\mathrm{dp}$. Notice that $w_i$ coincides with the denominator in \eqref{eqn:imh-acceptance-simple}.

In the second stage, given $I = i$, SOMA attempts to swap out $x_i$ and replace it by $y$ with probability 
\begin{equation}\label{eqn:acceptprob}
\alpha_i^{\mathrm{SOMA}}(y,x)=\min\left\{1, \frac{W(y, x)}{W(y, x) + w_0(y, x) - w_i(y, x)}\right\},
\end{equation}
where
\begin{equation}\label{eqn:w0}
w_0(y, x)=\eta(s_\mathrm{dp}\mid s(x))
\end{equation}
is defined analogously to \eqref{eqn:wi}.

\begin{algo}\label{alg:soma}
SOMA sampler for $\eqref{eqn:data-imputation-target}$.
\begin{tabbing}
\textbf{Stage 1}:\\
\quad Propose $y \sim q(\cdot) = f(\cdot \mid \theta)$, \\
\quad Calculate weights $w_i=\eta(s_\mathrm{dp} \mid s([x_{-i},y]))$, $w_0=\eta(s_\mathrm{dp}\mid s(x))$ and $W=\sum_{i=1}^n w_i$, \\
\quad Select random index $I$ from $\{1,2,\ldots,n\}$ with probability $\mathbb{P}(I=i) \propto w_i$, \\
\textbf{Stage 2}:  With probability $\alpha = \min \left\{1, \frac{W}{W+w_0-w_I}\right\}$, replace $x_I$ by $y$.
\end{tabbing}
\end{algo}

While IMwG exploits the symmetry of permutation invariant distributions by using the same independent proposal for each component, SOMA has pushed this idea further.
SOMA recognizes that an unfavorable proposal for $x_j$ might be a favorable proposal for some other $x_i$, and it seeks to find such an index $i$ by comparing all the weights given in \eqref{eqn:wi}.

The advantage of SOMA lies in its efficient use of the `single offer' from the proposal distribution $q$. 
Stage 1 of SOMA chooses a favorable component $i$ among the multiple choices to swap out $x_i$ by $y$.
Then Stage 2 compares the current state $x$ with the proposed state $x' = [x_{-i},y]$. 
The acceptance probability \eqref{eqn:acceptprob} accounts for the fact that $x'$ is selected from multiple attempts and is not a direct comparison like \eqref{eqn:imh-acceptance}.
We summarize SOMA in Algorithm~\ref{alg:soma} and show that it is reversible with respect to the target $\pi$ in Theorem~\ref{thm:rev}. The proof is deferred to Section~\ref{sec:proofs_of_section_3}.

\begin{theorem}[Reversibility]
\label{thm:rev}
Assume $\pi$ is permutation invariant. Stage 1 of Algorithm~\ref{alg:soma} generates a proposal kernel, which we denote by $K^{\mathrm{SOMA}}(x' \mid x)$. 
\begin{itemize}
    \item We have $K^{\mathrm{SOMA}}(x^\prime\mid x)=0$ for all $d(x,x^\prime)>1$. 
    \item 
    When $d(x,x^\prime) = 1$, suppose they only differ on the $i$-th entry, i.e., $x_i^\prime = y$ and $x_j^\prime = x_j$ for all $j\neq i$, then  $K^{\mathrm{SOMA}}(x^\prime\mid x)=q(y){w_i(y, x)}/{W(y, x)}.$
    \item Associating proposal kernel $K^{\mathrm{SOMA}}$ with acceptance probability \eqref{eqn:acceptprob} leads to a Markov transition kernel $P_{\mathrm{SOMA}}$ reversible with respect to $\pi$.
\end{itemize}
\end{theorem}

\begin{remark}
Theorem~\ref{thm:rev} shows that SOMA transforms the independent proposal $q$ into an informed proposal kernel $K^{\mathrm{SOMA}}$ that is dependent on the current state. This distinguishes SOMA from IMH. 
\end{remark}

\begin{corollary}[Sufficient conditions for reversibility]\label{coro:sufficientconditions} The following conditions are sufficient for reversibility.
\begin{itemize}
    \item $\pi$ is a permutation invariant distribution, and 
    \item The weights are `proper' in the sense
\begin{align*}
w_0(y,x) & = c \cdot \frac{\pi(x)}{\prod_{j=1}^n q(x_j)},\ \text{and } \\
w_i(y,x) & = c\cdot \frac{\pi([x_{-i},y])}{q(y)\prod_{j=1,j\neq i}^n q(x_j)},\ \forall i=1,\ldots,n
\end{align*}
for some fixed constant $c >0.$
\end{itemize}
\end{corollary}
This corollary shows that SOMA can be applied for any permutation invariant target. Applying SOMA for data imputation from privatized summary statistics is a special and important application. 

\begin{remark}\label{remark:soma-vs-imwgibbs}
We should also distinguish the SOMA sampler from the Ran-IMwG and Sys-IMwG.
To begin with, SOMA is not a Gibbs sampler - it is a Metropolis-Hastings type algorithm, consisting of a proposal step and an accept/reject step. Instead, Stage 1 of SOMA constructs a Metropolis kernel $K^{\mathrm{SOMA}}$ by converting the IMH proposal into a dependent one.
Second, although SOMA and Ran-IMwG randomly pick an index $i$ to perform a swap move, Ran-IMwG uses fixed weights independent of the current state, while SOMA uses weights~\eqref{eqn:wi} that adapt to the current state.
We can see SOMA as a multiproposal MCMC~\citep{pozza2025fundamental,lin2025quantum} method, and Ran-IMwG is the `single proposal' counterpart that SOMA should compare with. 
More discussions are in Section~\ref{sec:multiproposal-1}.
\end{remark}

\begin{remark}
Multiple-try Metropolis (MTM)~\citep{liu2000multiple} also entails multiple evaluations in one Metropolis iteration. MTM would require multiple proposed points, independent or correlated~\citep{craiu2007acceleration}. The Multiple-try Metropolis Independent sampler (MTM-IS)~\citep{yang2023convergence} bears resemblance to SOMA in using the IMH kernel. 
MTM-IS generates multiple samples $\{y_1,\ldots,y_n\}$ from an independent proposal $q$, selects one index from them, and generates `balancing trials' to satisfy reversibility. On the other hand, SOMA only requires one sample $y$ from $q$. The multiple attempts for SOMA are created by comparing $y$ with each component in the current state $x$. SOMA would require much less computation and memory cost than MTM-IS.
\end{remark}

\begin{remark}Per-iteration cost of SOMA is $\mathcal{O}(n).$
It is natural to modify SOMA to simultaneously consider a subset of $m < n$ components, and thus reducing the cost. Let $M$ be the index set. In Stage 1, one can compute only the weights $w_0$ and $w_i$ for $ i \in M$. In Stage 2, the acceptance probability would be 
$\min\left(1, \frac{\sum_{i \in M} w_i}{w_0 + \sum_{j\not= I, j \in M} w_j}\right).$
Our theoretical analysis considers the standard case, where all $n$ components are considered. 
The reversibility still holds for this $m$-component variant.
See Appendix~\ref{app:soma_on_subset} for details.
\end{remark}

\section{CONVERGENCE OF SOMA}\label{sec:theory}
We study the convergence characteristics of SOMA in this section, by focusing on its convergence rate and comparison with Ran-IMwG and Sys-IMwG.
\begin{definition}[Geometric ergodicity]
A Markov chain with transition kernel $P(\cdot \mid \cdot)$ invariant to $\pi$ is geometric ergodic in total variation distance if the following inequality holds for some function $C$, constant $r \in (0,1)$ and $\pi$-almost all $x$:
\begin{equation}\label{eq:def_convergence_rate}
\|P^t (\cdot \mid x) - \pi(\cdot) \|_{\mathrm{TV}} \le C(x) r^t,
\end{equation}
where the left-hand side is the total variation distance between the target distribution and the distribution of the Markov chain at iteration $t$ when it started at $x$.
\end{definition}
The constant $r\in (0, 1)$ gives the convergence rate: the smaller $r$ is, the faster the convergence.

In Section~\ref{sec:acc_bound}, we first examine the acceptance probabilities of the three algorithms, which play a crucial role in determining convergence efficiency. 
In Section~\ref{sec:convergence}, we derive an upper bound on the convergence rate. In Section~\ref{sec:cost}, we provide implementation strategies for SOMA and compare its computational cost with other methods.

\subsection{Acceptance Rate}\label{sec:acc_bound}
For accept-reject-based Markov chains, a small acceptance probability is a clear signal for slow convergence~\citep{brown2022lower}.
For component-wise Markov chains, slow convergence of the Metropolis kernels makes the overall Metropolis within Gibbs sampler mix slowly~\citep{qin2025spectral}.

First, we analyze the coordinate-wise acceptance probability of the three algorithms. This is the probability of accepting a proposed move conditioning on the proposed point $y$ as well as the chosen coordinate $i$ for a swap move.

\begin{theorem}[Coordinate-wise acceptance]\label{thm:acc_compare}
Let $\alpha_i(y,x)$ denote the acceptance probability of replacing the $i$-th component of $x \in \mathbb{X}^n$ with $y \in \mathbb{X}$, given that index $i$ has been selected for the update.
For the SOMA sampler and the IMwG samplers, the corresponding Metropolis acceptance probability satisfies
\begin{equation}
\alpha^{\mathrm{IMwG}}_i (y,x) = \min\left\{1, \frac{w_i(y,x)}{w_0(y,x)}\right\}\le \alpha_i^{\mathrm{SOMA}}(y,x).
\end{equation}
\end{theorem}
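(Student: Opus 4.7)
The plan is to reduce the claimed inequality to a simple algebraic fact about the weights $w_0, w_1, \ldots, w_n$. First I would rewrite the IMwG acceptance probability using the $w_i$ notation defined in \eqref{eqn:wi} and \eqref{eqn:w0}. Direct cancellation of the common factor $\prod_{j \neq i} q(x_j)$ in numerator and denominator gives
\begin{equation*}
\frac{\pi(y, x_{-i})\, q(x_i)}{\pi(x_i, x_{-i})\, q(y)} = \frac{\pi([x_{-i},y])/(q(y)\prod_{j\neq i} q(x_j))}{\pi(x)/\prod_j q(x_j)} = \frac{w_i(y,x)}{w_0(y,x)},
\end{equation*}
so that $\alpha^{\mathrm{IMwG}}_i(y,x) = \min\{1, w_i/w_0\}$ holds by definition \eqref{eqn:imh-acceptance}. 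This puts both acceptance ratios on the same footing and reduces the theorem to showing $\min\{1, w_i/w_0\} \le \min\{1, W/(W+w_0-w_i)\}$ for non-negative weights.

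Next I would split into two cases according to the sign of $w_0 - w_i$. If $w_i \ge w_0$, then $w_0 - w_i \le 0$, so $W + w_0 - w_i \le W$, which gives $W/(W+w_0-w_i)\ge 1$ and hence $\alpha^{\mathrm{SOMA}}_i = 1$; meanwhile $\alpha^{\mathrm{IMwG}}_i = \min\{1, w_i/w_0\} = 1$ as well, so the inequality trivially holds with equality. The substantive case is $w_i < w_0$, where both acceptance probabilities are strictly less than $1$ and we need to establish $w_i/w_0 \le W/(W + w_0 - w_i)$.

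For this case I would clear denominators (both are positive since $W + w_0 - w_i > W \ge 0$ and $w_0 > 0$) to obtain the equivalent inequality
\begin{equation*}
w_i (W + w_0 - w_i) \le w_0 W,
\end{equation*}
which rearranges to $(W - w_i)(w_0 - w_i) \ge 0$. The second factor is non-negative by the assumption of this case, and the first factor is non-negative because $W = \sum_{j=1}^n w_j(y,x)$ and each $w_j(y,x) \ge 0$ by construction, so $W \ge w_i$.

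I do not anticipate a real obstacle here, as the entire argument is an algebraic verification once the IMwG acceptance is rewritten in the $(w_0, w_i)$ language. The only subtlety worth flagging explicitly in the write-up is the positivity of each $w_j$, which follows from $\pi$ and $q$ being densities, and the fact that the inequality is tight exactly when $W = w_i$, i.e.\ when $w_j(y,x) = 0$ for every $j \neq i$; this observation also hints at why SOMA strictly outperforms IMwG whenever the proposal $y$ is a plausible replacement for more than one component.
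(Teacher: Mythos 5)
Your proof is correct and follows essentially the same route as the paper: both arguments reduce to the observation that $W(y,x) - w_i(y,x) = \sum_{j\neq i} w_j(y,x) \ge 0$, the paper phrasing it as adding a common nonnegative quantity to the numerator and denominator of $w_i/w_0$, and you phrasing it as the factorization $(W - w_i)(w_0 - w_i) \ge 0$ after clearing denominators. The only quibble is your closing remark that equality holds exactly when $W = w_i$: equality also holds whenever $w_i \ge w_0$, since both sides then equal $1$ regardless of the other weights.
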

See complete proof in Section~\ref{sec:proofs_of_section_41}. 

Theorem~\ref{thm:acc_compare} indicates that, if all three methods choose to update the same component $i$, SOMA always has an acceptance probability at least as high as the other IMwG methods.
\begin{theorem}\label{thm:peskun} Let $P_{\mathrm{SOMA}}$ and $P_{\mathrm{RAN}}$ be the Markov transition kernels associated with SOMA and Ran-IMwG. We must have, for $\pi$-almost all $x,x'$ with $d(x,x') = 1$, 
\begin{equation}\label{eqn:peskun-x}
P_{\mathrm{SOMA}}(x' \mid x) \le n P_{\mathrm{RAN}}(x' \mid x).
\end{equation}
\end{theorem}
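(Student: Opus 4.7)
The plan is to write both transition kernels explicitly when $d(x,x')=1$ and then reduce the statement to a purely algebraic inequality between weighted acceptance probabilities. Using Theorem~\ref{thm:rev}, if $x'$ is obtained from $x$ by replacing the $i$-th coordinate with $y$, then
\begin{equation*}
P_{\mathrm{SOMA}}(x' \mid x) = q(y)\,\frac{w_i(y,x)}{W(y,x)}\,\alpha_i^{\mathrm{SOMA}}(y,x),
\end{equation*}
while by the construction of Ran-IMwG (uniform index choice followed by an IMH swap, together with the identity $\alpha_i^{\mathrm{IMwG}}=\min\{1, w_i/w_0\}$ recorded in Theorem~\ref{thm:acc_compare}),
\begin{equation*}
P_{\mathrm{RAN}}(x' \mid x) = \frac{1}{n}\,q(y)\,\min\!\left\{1,\frac{w_i(y,x)}{w_0(y,x)}\right\}.
\end{equation*}
After cancelling the common factor $q(y)$ and multiplying by $n$, the target inequality \eqref{eqn:peskun-x} reduces to the purely algebraic claim
\begin{equation*}
\frac{w_i}{W}\,\min\!\left\{1,\frac{W}{W+w_0-w_i}\right\} \;\le\; \min\!\left\{1,\frac{w_i}{w_0}\right\},
\end{equation*}
where I suppress the arguments $(y,x)$.

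From here I would split into two cases according to the sign of $w_0-w_i$. When $w_0\le w_i$, both $\min$ terms equal $1$ (the first because $W+w_0-w_i\le W$, the second because $w_i/w_0\ge 1$), and the inequality collapses to $w_i/W\le 1$, which holds since $W=\sum_{j}w_j$ is a sum of nonnegative weights containing $w_i$. When $w_0>w_i$, both $\min$ terms are strict fractions, so the inequality becomes $w_i/(W+w_0-w_i)\le w_i/w_0$; cancelling $w_i$ (or noting the case $w_i=0$ is trivial) reduces this to $w_0\le W+w_0-w_i$, again the same bound $w_i\le W$.

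Finally I would handle the degenerate cases and the permutation bookkeeping implicit in the definition of $d$. If $W(y,x)=0$ then every $w_i=0$ and both kernels assign zero mass to $x'$, so the bound is vacuous. When $d(x,x')=1$ but the single differing coordinate is exposed only after a nontrivial permutation, both $P_{\mathrm{SOMA}}(x'\mid x)$ and $P_{\mathrm{RAN}}(x'\mid x)$ vanish, since each algorithm alters at most one literal coordinate per step; the bound holds trivially on the corresponding $\pi$-null set. I do not expect a real obstacle here: the content of the proof is the two-line case split above, and the key qualitative observation is that SOMA's index-selection penalty $w_i/W$ exactly compensates for its boosted acceptance probability $W/(W+w_0-w_i)$ relative to Ran-IMwG's uniform $1/n$ selection combined with the smaller IMH acceptance $w_i/w_0$.
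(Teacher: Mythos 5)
Your proof is correct and follows essentially the same route as the paper's: both write the two kernel densities explicitly (using the form of $K^{\mathrm{SOMA}}$ from Theorem~\ref{thm:rev} and the uniform-index IMH form for Ran-IMwG) and reduce the claim to the elementary fact that $\sum_{j\neq i} w_j \ge 0$, i.e.\ $w_i \le W$. The only difference is presentational — you resolve the two $\min$'s by an explicit case split on the sign of $w_0 - w_i$, whereas the paper rewrites $\min\{w_1/w_0,1\} = w_1\min\{1/w_1,1/w_0\}$ and adds $\sum_{j\ge 2} w_j$ to both denominators in one step; the content is identical.
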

\Cref{thm:peskun} indicates an ordering between the asymptotic variances of the samplers. We will discuss this more in Section~\ref{sec:cost}. \Cref{thm:peskun} can also be seen as a consequence of \citet[Theorem 1]{pozza2025fundamental}, and we include this alternative proof in Section~\ref{sec:multiproposal-1}.

Next, we analyze the overall acceptance probabilities. 
Let $A(x)$ be the probability of accepting a move from $x$. 
For SOMA and Ran-IMwG, the coordinate $i$ is chosen randomly, and we have
\begin{align}
A^{\mathrm{SOMA}}(x) &= \int_{\mathbb{X}} q(y) \sum_{i=1}^{n}  \left(\frac{w_i(y,x)}{W(y,x)} \right) \cdot \alpha_i^{\mathrm{SOMA}}(y,x)\mathrm{d}y, \label{eqn:soma-accept} \\
A^{\mathrm{RAN}}(x) &= \int_{\mathbb{X}} q(y) \sum_{i=1}^{n} \left( \frac{1}{n}\right) \cdot \alpha_i^{\mathrm{IMwG}}(y,x) \mathrm{d}y. \label{eqn:ran-accept}
\end{align}
\eqref{eqn:soma-accept} and \eqref{eqn:ran-accept} reiterate Remark~\ref{remark:soma-vs-imwgibbs} that Ran-IMwG uses fixed weights while SOMA adaptively decides which component to update.
Sys-IMwG has a predetermined update order, and we write the probability of accepting a move to change the $i$-th component as
\begin{equation}\label{eqn:sys-accept}
A_i^{\mathrm{SYS}}(x) = \int_{\mathbb{X}} q(y) \alpha^{\mathrm{IMwG}}_i(y,x) \mathrm{d}y. 
\end{equation}

We establish in \Cref{thm:acc_soma_vs_ran} that, from each current state $x$, SOMA has a higher acceptance probability than RanScan.

\begin{theorem}[Comparison of acceptance probabilities]\label{thm:acc_soma_vs_ran}
For any $x\in\mathbb{X}^n$, we have:
\begin{equation}
A^{\mathrm{SOMA}}(x) \ge A^{\mathrm{RAN}}(x).
\end{equation}
\end{theorem}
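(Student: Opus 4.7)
The plan is to show that the integrand of $A^{\mathrm{SOMA}}(x)$ in \eqref{eqn:soma-accept} dominates the integrand of $A^{\mathrm{RAN}}(x)$ in \eqref{eqn:ran-accept} pointwise in $y$, after which integrating against $q(y)\,\mathrm{d}y$ yields the result. I split the pointwise comparison into two independent steps, each handling one of the two differences between the SOMA and Ran-IMwG integrands: the acceptance probability $\alpha_i^{\mathrm{SOMA}}$ vs.\ $\alpha_i^{\mathrm{IMwG}}$, and the selection weight $w_i/W$ vs.\ the uniform $1/n$.

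For the first step, Theorem~\ref{thm:acc_compare} already gives $\alpha_i^{\mathrm{SOMA}}(y,x) \ge \alpha_i^{\mathrm{IMwG}}(y,x)$ for every $i$, so from \eqref{eqn:soma-accept}
\begin{equation*}
A^{\mathrm{SOMA}}(x) \;\ge\; \int_{\mathbb{X}} q(y) \sum_{i=1}^n \frac{w_i(y,x)}{W(y,x)}\, \alpha_i^{\mathrm{IMwG}}(y,x)\, \mathrm{d}y.
\end{equation*}
It therefore suffices to verify the pointwise inequality $\sum_{i=1}^n \frac{w_i}{W} \alpha_i^{\mathrm{IMwG}} \ge \frac{1}{n}\sum_{i=1}^n \alpha_i^{\mathrm{IMwG}}$ for every $y$. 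For the second step, observe that with $y$ and $x$ held fixed, the only quantity that varies with $i$ is $w_i = w_i(y,x)$, and $\alpha_i^{\mathrm{IMwG}} = \min(1, w_i/w_0)$ is a non-decreasing function of $w_i$. Hence the sequences $(w_i)_{i=1}^n$ and $(\alpha_i^{\mathrm{IMwG}})_{i=1}^n$ are concordantly ordered, so Chebyshev's sum inequality gives
\begin{equation*}
n \sum_{i=1}^n w_i\, \alpha_i^{\mathrm{IMwG}} \;\ge\; \Big(\sum_{i=1}^n w_i\Big)\Big(\sum_{i=1}^n \alpha_i^{\mathrm{IMwG}}\Big) \;=\; W \sum_{i=1}^n \alpha_i^{\mathrm{IMwG}},
\end{equation*}
which is exactly the required pointwise bound after dividing by $nW$. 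Combining the two steps completes the proof.

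I do not anticipate a serious obstacle. The main conceptual point—and the only step that is not entirely routine—is recognizing that SOMA's state-dependent selection weights $w_i/W$ are positively correlated with the IMwG acceptances $\alpha_i^{\mathrm{IMwG}}$ through their common monotone dependence on $w_i$, which is precisely what allows the weighted average to dominate the uniform average. Degenerate values of $y$ with $w_0(y,x) = 0$ or $W(y,x) = 0$ are inconsequential: at such $y$ either both integrands vanish or the Chebyshev inequality reduces to a triviality, so the pointwise comparison still goes through.
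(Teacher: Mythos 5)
Your proof is correct, but it takes a genuinely different and substantially shorter route than the paper's. The paper establishes the same pointwise comparison of integrands as a standalone result (Lemma~\ref{lemma:acc_compare}) and proves it by direct case analysis: three cases on the ordering of $(w_0,w_1,w_2)$ for $n=2$, followed by three further cases for general $n$ (split according to whether $k\le n-\sqrt{n}$, $k\in(n-\sqrt{n},n-1]$, or $k=n$, where $k$ counts the weights below $w_0$), the middle case requiring a two-variable optimization of an auxiliary function $G(\alpha,\beta;n)$. You instead factor the pointwise inequality into two monotonicity statements: $\sum_i \frac{w_i}{W}\alpha_i^{\mathrm{SOMA}} \ge \sum_i \frac{w_i}{W}\alpha_i^{\mathrm{IMwG}}$, which is immediate from Theorem~\ref{thm:acc_compare} termwise, and $\sum_i \frac{w_i}{W}\alpha_i^{\mathrm{IMwG}} \ge \frac{1}{n}\sum_i \alpha_i^{\mathrm{IMwG}}$, which is Chebyshev's sum inequality applied to the concordant sequences $(w_i)$ and $\bigl(\min(1,w_i/w_0)\bigr)$. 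Both steps are valid (the weights $w_i/W$ are non-negative, and $\alpha_i^{\mathrm{IMwG}}$ is non-decreasing in $w_i$ for fixed $w_0>0$; the degenerate sets $\{W=0\}$ or $\{w_0=0\}$ are handled as you note). Your chain of inequalities in fact reproves Lemma~\ref{lemma:acc_compare} in its entirety, so the appendix casework could be replaced by your argument. Beyond brevity, your decomposition isolates the conceptual reason the weighted average dominates the uniform one --- SOMA's selection probabilities are positively correlated with the coordinate-wise acceptance probabilities --- whereas the paper's direct computation obscures this; the only thing the paper's explicit calculation buys is visibility into where the inequality is tight or loose in each regime of the weights.
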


\begin{corollary}\label{corollary:tv-ran-soma}
Let $P_{\mathrm{RAN}}$ be the Markov transition kernels associated with Ran-IMwG, then for all $t$,
the total variation distance in \eqref{eq:def_convergence_rate} satisfied
\begin{align*}
\|P^t_{\mathrm{RAN}} (\cdot \mid x) - \pi(\cdot) \|_{\mathrm{TV}} & \ge \left(1-A^{\mathrm{RAN}}(x)\right)^t \ge \left(1-A^{\mathrm{SOMA}}(x)\right)^t.
\end{align*}
\end{corollary}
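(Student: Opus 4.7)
The plan is to exhibit an event on which $P^t_{\mathrm{RAN}}(\cdot \mid x)$ and $\pi$ give different masses, and use the variational definition of the total variation distance to turn the gap into the desired lower bound. The natural test event is the singleton $\{x\}$: a rejection-based chain holds at its starting state with a probability that can be read off directly from $A^{\mathrm{RAN}}(x)$.

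First, I would observe that, by the definition~\eqref{eqn:ran-accept}, $A^{\mathrm{RAN}}(x)$ is exactly the one-step probability that Ran-IMwG moves away from $x$, averaging over the random coordinate and the independent proposal draw. Consequently, the probability that all $t$ consecutive Ran-IMwG proposals are rejected, while the chain sits at $x$ throughout, equals $(1 - A^{\mathrm{RAN}}(x))^t$. Since this trajectory is one way (but not the only way) for the chain to be at $x$ at time $t$, I obtain $P_{\mathrm{RAN}}^t(\{x\} \mid x) \ge (1 - A^{\mathrm{RAN}}(x))^t$.

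Second, by the standard variational bound $\|P_{\mathrm{RAN}}^t(\cdot \mid x) - \pi(\cdot)\|_{\mathrm{TV}} \ge P_{\mathrm{RAN}}^t(B \mid x) - \pi(B)$ for any measurable $B$, I would take $B = \{x\}$ and use that $\pi$ is absolutely continuous (implicit throughout the paper's density notation) so $\pi(\{x\}) = 0$, yielding the first claimed inequality. The second inequality is then immediate from \Cref{thm:acc_soma_vs_ran}: $A^{\mathrm{SOMA}}(x) \ge A^{\mathrm{RAN}}(x)$ gives $1 - A^{\mathrm{SOMA}}(x) \le 1 - A^{\mathrm{RAN}}(x)$, and raising both sides (each in $[0,1]$) to the $t$-th power preserves the direction.

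The only real subtlety, and thus the main obstacle, is the atomlessness hypothesis on $\pi$. For the continuous targets that motivate the paper, such as the data-imputation posterior of \Cref{ex:damcmc}, $\pi(\{x\}) = 0$ is automatic. For discrete exchangeable targets like the Bernoulli--Laplace setting of \Cref{ex:bern}, however, $\pi(\{x\})$ need not vanish, and the singleton-event argument loses an additive term $\pi(\{x\})$. I would flag this limitation explicitly rather than patch it here, noting that a careful treatment would either coarsen the test event to a permutation orbit of $x$ and use marginal equivalence (\Cref{lemma:marginal_equivalence}), or carry out a separate combinatorial accounting.
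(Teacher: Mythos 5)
Your proof is correct and follows essentially the same route as the paper's: both lower-bound the probability that the chain remains at $x$ for $t$ steps by $\left(1-A^{\mathrm{RAN}}(x)\right)^t$ (the paper phrases this as a test-function induction, you phrase it as the all-proposals-rejected event), test against the singleton $\{x\}$, and invoke Theorem~\ref{thm:acc_soma_vs_ran} for the second inequality. The atomlessness caveat you flag, namely $\pi(\{x\})=0$, is implicitly assumed and glossed over in the paper's own proof, so your explicit remark is a point in your favor rather than a gap.
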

Corollary~\ref{corollary:tv-ran-soma} follows \Cref{thm:acc_soma_vs_ran} and \citet[Theorem 2]{brown2022lower}. It suggests that if $A^{\mathrm{SOMA}}(x) \to 0$ (e.g., as $n \to \infty$), then the lower bound on the convergence rates for both SOMA and Ran-IMwG approaches 1, indicating slow convergence. However, if $A^{\mathrm{SOMA}}(x)$ remains bounded away from zero, Ran-IMwG may still suffer from poor convergence, while SOMA can potentially avoid this issue by maintaining a higher acceptance probability.

We now give lower bounds for the overall acceptance probability from the current state $x$ under $\epsilon$-DP.

\begin{theorem}[Local acceptance probability]\label{thm:acc_bound}
Let $A^{\mathrm{SOMA}}(x)$, $A^{\mathrm{RAN}}(x)$ and $A_i^{\mathrm{SYS}}(x)$ be defined as in \eqref{eqn:soma-accept}, \eqref{eqn:ran-accept} and \eqref{eqn:sys-accept} respectively. These are the overall acceptance probabilities from a current state $x$. Under $\epsilon$-DP, we have
\begin{align}
&A^{\mathrm{SOMA}}(x) \ge \frac{n}{n+e^\epsilon-1}, \ 
A^{\mathrm{RAN}}(x) \ge \frac{1}{e^\epsilon}, \ \text{and}\ A_i^{\mathrm{SYS}}(x) \ge \frac{1}{e^\epsilon} \ \forall i. \label{eqn:accept-lower-bound}
\end{align}
\end{theorem}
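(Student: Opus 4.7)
The plan is to lower bound each acceptance probability pointwise in $y$ and then integrate against $q(y)\,\mathrm{d}y$. The two IMwG bounds fall out in a line; the SOMA bound requires a short algebraic inequality, which I prove using Cauchy--Schwarz.

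For $A_i^{\mathrm{SYS}}(x)$ and $A^{\mathrm{RAN}}(x)$, I observe that $w_i(x_i, x) = w_0(y, x)$ directly from the definitions \eqref{eqn:wi} and \eqref{eqn:w0}, so Assumption~\ref{assump:ratio_bound} applied at $y = x_i$ (and extended to all $y$ under the bounded-weights sufficient condition stated after the assumption) gives $w_0(y,x)/w_i(y,x) \le M$. Hence $\alpha_i^{\mathrm{IMwG}}(y,x) = \min(1, w_i/w_0) \ge 1/M$ pointwise, and plugging into \eqref{eqn:sys-accept} and \eqref{eqn:ran-accept} immediately yields $A_i^{\mathrm{SYS}}(x) \ge 1/M$ and $A^{\mathrm{RAN}}(x) = \tfrac{1}{n}\sum_i \int q(y)\,\alpha_i^{\mathrm{IMwG}}(y,x)\,\mathrm{d}y \ge 1/M$.

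For the SOMA bound, I first apply the identity $\min(1, a/b) = a/\max(a,b)$ to rewrite the integrand of \eqref{eqn:soma-accept} as $\sum_i w_i/[W + (w_0 - w_i)_+]$, where $(z)_+ := \max(z,0)$. Setting $\mu := \min_j w_j(y,x)$, $v_i := w_i/\mu \in [1, M]$, $v_0 := w_0/\mu \in (0, M]$, and $V := \sum_i v_i \ge n$, a quick case check ($v_0 \le v_i$ versus $v_0 > v_i$) shows $(v_0 - v_i)_+ \le M - v_i$, so the integrand is bounded below by $\sum_i v_i/(V+M-v_i)$. I then apply Cauchy--Schwarz in Engel form,
\[
\sum_{i=1}^n \frac{v_i}{V+M-v_i} \;=\; \sum_{i=1}^n \frac{v_i^2}{v_i(V+M-v_i)} \;\ge\; \frac{V^2}{V(V+M) - \sum_i v_i^2},
\]
use the power-mean inequality $\sum_i v_i^2 \ge V^2/n$ in the denominator, and simplify to obtain the lower bound $Vn/[(n-1)V + Mn]$. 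This is at least $n/(n+M-1)$ exactly when $V \ge n$, which holds since $v_i \ge 1$. Integrating against $q(y)$ closes the SOMA bound.

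The main obstacle is the pointwise inequality $\sum_i v_i/(V+M-v_i) \ge n/(n+M-1)$. A first instinct is to weaken each $\min(1, W/(W+w_0-w_i)) \ge W/(W+(M-1)w_i)$ pointwise and apply Jensen to the resulting sum, but $v \mapsto v/(V+M-v)$ is concave and Jensen points in the wrong direction. The Cauchy--Schwarz route succeeds because it converts the sum of fractions into a sum-of-squares form compatible with the power-mean inequality, after which $V \ge n$ (the encoding of $v_i \ge 1$) plugs in tightly to deliver the $n+M-1$ denominator and hence the claimed lower bound.
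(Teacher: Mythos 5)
Your proof is correct, and for the SOMA bound it takes a genuinely different route from the paper. The paper's argument is pointwise and much shorter: writing $T_i = W - w_i$, it shows directly that each coordinate-wise acceptance probability satisfies $\alpha_i^{\mathrm{SOMA}}(y,x) = \min\{1, 1 - \frac{1 - w_i/w_0}{1 + T_i/w_0}\} \ge 1 - \frac{1-1/M}{1+(n-1)/M} = \frac{n}{n+M-1}$ using only $w_i/w_0 \ge 1/M$ and $T_i/w_0 \ge (n-1)/M$; since the selection probabilities sum to one, the bound on $A^{\mathrm{SOMA}}$ is immediate. This is a strictly stronger intermediate statement than yours (every $\alpha_i^{\mathrm{SOMA}}$, not just the $w_i/W$-weighted average, is at least $n/(n+M-1)$), obtained with less machinery. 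Your Engel-form Cauchy--Schwarz plus power-mean argument is valid --- I checked the rewriting $\min(1,W/(W+w_0-w_i)) = W/(W+(w_0-w_i)_+)$, the replacement $(v_0-v_i)_+ \le M - v_i$, the bound $nV/[(n-1)V+nM]$, and its monotonicity in $V \ge n$ --- but it is more elaborate than needed and yields only the averaged bound. One point worth tightening in both directions: your justification of $w_0(y,x)/w_i(y,x) \le M$ (``apply the assumption at $y=x_i$, then extend'') does not actually follow from the literal statement of Assumption~\ref{assump:ratio_bound}, which only constrains ratios $w_i/w_j$ for $1 \le i,j \le n$ at a common proposal point $y$; the bounded-weights sufficient condition gives ratio $M_2/M_1$, not $M$. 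The paper's own proof silently uses the same extension ($w_i/w_0 \ge 1/M$), consistent with how the assumption is instantiated in the $\epsilon$-DP example, so this is a shared implicit reading of the assumption rather than a defect unique to your argument --- but your derivation of it via the substitution $y = x_i$ is not a proof and should be replaced by simply invoking the extended assumption.
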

The lower bounds on acceptance probability place the three algorithms into a `no clear signal of slow convergence' scenario. 
We also highlight that when $n$ is large, the acceptance probability of SOMA approaches 1.
As a result, SOMA becomes almost rejection-free in high dimensions.

In fact, our experiments in \ref{sec:exp_linear} for Bayesian linear regression already reports SOMA acceptance rate between 92\% and 99.5\% for sample size $n = 10$, depending on the privacy loss budget $\epsilon.$

\subsection{Upper Bound of the Convergence Rate}\label{sec:convergence}

In this section, we use coupling-based techniques to analyze the convergence rate of the algorithms. By constructing a faithful coupling (see Appendix~\ref{sec:coupling} for technical details and the coupling kernel), we establish the following upper bounds on the convergence rates for the $n=2$ case:

\begin{theorem}[Upper bound on convergence rate]\label{thm:convergence_rate}
For the case $n=2$, under $\epsilon$-DP, the convergence rate of the SOMA algorithm is upper bounded by
\begin{equation}\label{eq:rate_soma}
r_{\mathrm{SOMA}} \le \frac{e^{2\epsilon}+3e^\epsilon-2 + \sqrt{e^{4\epsilon} + 2e^{3\epsilon} + 9e^{2\epsilon} - 8e^\epsilon}}{2(1+e^\epsilon)^2}.
\end{equation}
For comparison, the upper bounds on the convergence rates for the Ran-IMwG and Sys-IMwG samplers are given by
\begin{align}\label{eq:rate_ran_sys}
r_{\mathrm{RAN}} &\le \frac{3e^\epsilon-2+\sqrt{e^{2\epsilon}+4e^\epsilon-4}}{4e^\epsilon},\notag \\
r_{\mathrm{SYS}} &\le \frac{\sqrt{(e^\epsilon+1)(e^\epsilon-1)}}{e^\epsilon},
\end{align}
respectively.
\end{theorem}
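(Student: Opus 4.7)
The plan is to invoke the coupling inequality of Theorem~\ref{lem:coup_ineq} with the coupled kernel from Algorithm~\ref{alg:soma_coupling} (and its analogues for Ran-IMwG and Sys-IMwG), and to control the meeting time $\tau$ by tracking the distance process $D_t = d(\Phi^{(t)}, \tilde{\Phi}^{(t)})$. For $n=2$, $D_t$ takes values in $\{0,1,2\}$ with state $0$ absorbing (by the faithfulness property built into Algorithm~\ref{alg:soma_coupling}). I will stochastically dominate $D_t$ by a sub-stochastic Markov chain on the transient set $\{1,2\}$, so that $\mathbb{P}(\tau > t)$ is controlled by the $t$-th power of a $2\times 2$ matrix $Q$, and the convergence rate is bounded above by $\rho(Q)$.

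For SOMA, I will enumerate the transitions of $D_t$ under Algorithm~\ref{alg:soma_coupling}. Since one iteration changes only a single component in each chain, $p_{2\to 0}=0$ and the only nontrivial transition probabilities are $p_{2\to 1}$, $p_{1\to 0}$, and the distance-inflating $p_{1\to 2}$. Lemma~\ref{lem:couple_distance} already supplies $p_{2\to 1}\ge \frac{2}{M+1}$ and $p_{1\to 0}\ge \frac{2}{(M+1)^2}$; what remains is an upper bound on $p_{1\to 2}$, obtained under Assumption~\ref{assump:ratio_bound} by walking through every combination of $(I,\tilde{I})$ produced by the maximal coupling of the two multinomials, together with the four accept/reject outcomes induced by the common uniform $U$. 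The resulting worst-case matrix $Q$ on $\{1,2\}$ has a closed-form largest eigenvalue; matching its characteristic polynomial to \eqref{eq:rate_soma} reduces to a direct algebraic verification (the denominator $2(M+1)^2$ and discriminant $M^4+2M^3+9M^2-8M$ strongly suggest entries of the form $\frac{3M-1}{(M+1)^2}$, $\frac{M(M-1)}{(M+1)^2}$, $\frac{2}{M+1}$, $\frac{M-1}{M+1}$, which I expect to obtain naturally from the case analysis).

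The Ran-IMwG analysis proceeds identically with the uniform-index variant of Algorithm~\ref{alg:soma_coupling}, using the second bound in Lemma~\ref{lem:couple_distance}, producing the first formula in \eqref{eq:rate_ran_sys}. Sys-IMwG is more delicate: because a single iteration updates a fixed coordinate, the distance alone is not a Markov chain, and I will need to enlarge the state to the pair (distance, which coordinate currently differs). Analyzing two consecutive iterations — one full deterministic sweep — and reducing to a $2\times 2$ contraction matrix should yield the bound $\sqrt{(M+1)(M-1)}/M = \sqrt{1-1/M^2}$, a form consistent with an off-diagonal-only, two-step composition.

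The main obstacle is the careful bookkeeping for $p_{1\to 2}$ under SOMA and for the enlarged state under Sys-IMwG. In each case I need to combine (i) an upper bound on the maximal-coupling disagreement probability $\mathbb{P}(I\neq\tilde{I})$ expressed via the total variation between the two multinomial weight vectors, (ii) pointwise upper bounds on the SOMA acceptance gaps using Assumption~\ref{assump:ratio_bound}, and (iii) the fact that inconsistent accept/reject decisions on the common $y$ can cause the distance to grow. Once the correct worst-case sub-stochastic $Q$ is identified, the remaining spectral-radius computation is a routine $2\times 2$ eigenvalue calculation whose output matches \eqref{eq:rate_soma} and \eqref{eq:rate_ran_sys}.
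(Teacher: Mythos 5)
Your plan is correct and follows essentially the same route as the paper: couple via Algorithm~\ref{alg:soma_coupling}, dominate the distance chain $d_t\in\{0,1,2\}$ by a worst-case transition matrix built from Lemma~\ref{lem:couple_distance} (for $p_{1\to 0},p_{2\to 1}$) plus an upper bound on $p_{1\to 2}$, take the second eigenvalue, and handle Sys-IMwG by grouping a full two-step sweep into one transition; the matrix entries you anticipate are exactly those in the paper's $Q_d$. The one piece you defer — the case analysis over $(I,\tilde I)$ and the common-uniform accept/reject outcomes giving $p_{1\to 2}\le M(M-1)/(M+1)^2$ — is precisely the content of the paper's Lemma~\ref{lemma:increase}, so nothing in your outline is off track.
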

The proof constructs a transition matrix of the between-chain distance $d(x^{(t)}, \tilde{x}^{(t)})$. The hitting time to the absorbing state $d = 0$ is the coupling time, which yields an upper bound on the convergence rate through the coupling inequality. See Appendix~\ref{sec:coupling} for details.

\begin{remark}
These bounds connect the privacy budget $\epsilon$ to sampler efficiency. Interestingly, stronger privacy (smaller $\epsilon$ and larger noise) leads to faster convergence. This occurs as the data imputation posterior becomes more diffusive and hence easier to approximate by IMH using $q(\cdot) = f_{\theta}(\cdot)$. This is also confirmed by our experiments; see Figure~\ref{fig:syn_acc_compare}.
\end{remark}

\begin{remark}
The analysis we provide in Theorem~\ref{thm:convergence_rate} has focused on $n = 2$.
It is natural to ask whether Theorem~\ref{thm:convergence_rate} can be extended to the case where $n > 2$. Deriving explicit convergence rate upper bounds proves to be difficult for $n \ge 2$. The main challenge lies in controlling the probability of expansion, i.e. $d(x^\prime, \tilde{x}^\prime) = d(x,\tilde{x}) + 1 $. While we provide a bound in Lemma~\ref{lemma:increase}, the result, when used in conjunction with Lemma~\ref{lem:couple_distance}, does not lead to a `contractive coupling' for $n > 2$, in the sense of \citet[Definition 1.9]{blanca2022mixing} and \citet{bubley1997path}.
In fact, a qualitative comparison between Ran- and Sys-Scan Gibbs would also be challenging when $n > 2$, and we refer readers to \citet[Remark 4.4 and Section 6]{qin2022convergence} for more discussions on the technical hurdles.
\end{remark}

\section{NUMERICAL EXPERIMENTS}\label{sec:exp}

In this section, we empirically compare SOMA with Sys-IMwG and Ran-IMwG samplers, 
in terms of convergence rates and acceptance probabilities, which have been studied in Section~\ref{sec:theory}.
We study a differentially private data release scenario in Section~\ref{sec:exp_syn},
and in Section~\ref{sec:exp_linear}, we embed SOMA into a data augmentation MCMC framework for private Bayesian linear regression,
highlighting SOMA's versatility. Moreover, Appendix~\ref {sec:exp_hist} provides an example of data imputation based on perturbed histograms.

\subsection{Imputation Given a Privatized Mean}\label{sec:exp_syn}
We first consider a canonical privacy problem: releasing a privatized mean of confidential data drawn from a bounded domain.
Assuming $X_i \in [0, 1]$ for each $i$, with model them by $X_i \sim \text{Beta}(a_0, b_0)$ independently with $n=2$.
Let $Y \sim \textrm{Laplace}(({X_1+X_2})/{2}, {1}/{\epsilon})$, which satisfy $\epsilon$-DP, be the perturbed mean. 
Given $Y = y$, the posterior density of the confidential data is given by
\begin{align*}\label{eqn:synthetic_beta}
\pi(x\mid y) \propto \ &\exp\left(-\frac{\epsilon}{2}\left|y - \frac{x_1+x_2}{2}\right|\right) \cdot (x_1x_2)^{a_0-{}1}(1-x_1)^{b_0-1}(1-x_2)^{b_0-1}.
\end{align*}
Figure~\ref{fig:syn_trace} gives the 2-dimensional trace plots of the three methods over 500 iterations, starting from the same initial state $(0.3, 0.3)$ with $a_0=b_0=10$ and $\epsilon=20$. 
SOMA traverses the higher density regions of $\pi$ more rapidly and can explore the state space more efficiently.
In contrast, Ran-IMwG gets stuck in some local regions, resulting in an under-exploration of the entire state space.
\begin{figure}[ht]
\centering
\includegraphics[width=0.70\textwidth]{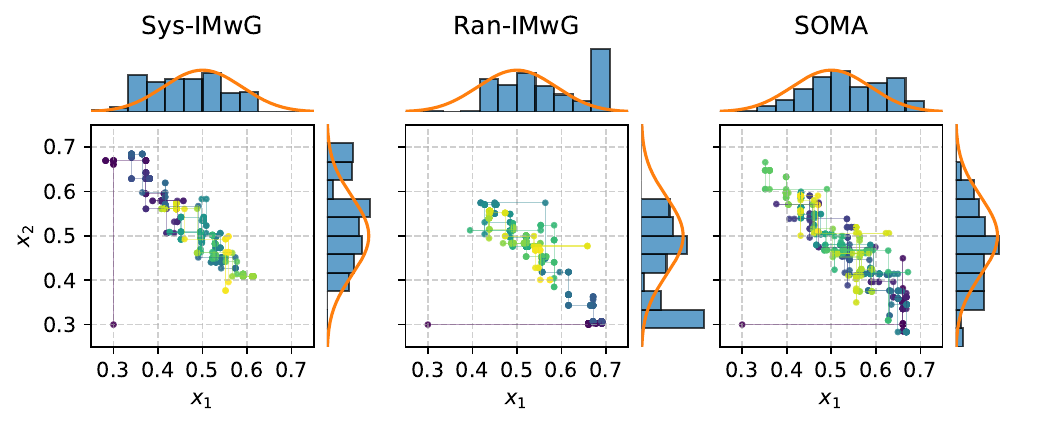}
\caption{Trace plots of different methods. The starting point is marked in deep purple, with subsequent samples in lighter colors. The orange density curve represents the true marginal posterior. SOMA achieves better exploration compared to others.}
\label{fig:syn_trace}
\end{figure}

Next, we quantitatively compare the methods using their empirical acceptance probabilities in Figure~\ref{fig:syn_acc_compare}. 
SOMA has a higher acceptance rate than Ran-IMwG and SYS-IMwG, agreeing with the results proved in Section~\ref{sec:theory}.
\begin{figure}[htbp]
\centering
\begin{subfigure}[t]{0.50\textwidth}
    \vspace{0pt}
    \includegraphics[width = 1\textwidth]{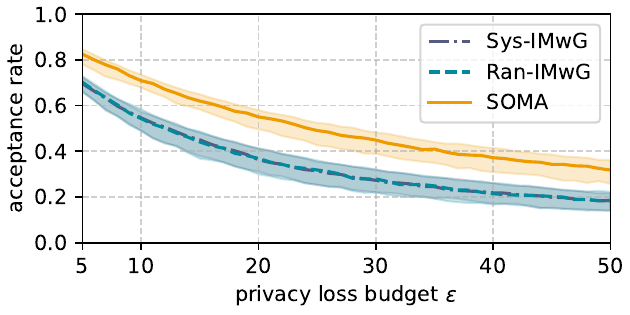}
\end{subfigure}
\caption{
Average acceptance rate comparison for varying values of $\epsilon$. 
}
\label{fig:syn_acc_compare}
\end{figure}

Finally, in challenging scenarios with a strong correlation between $X_1$ and $X_2$, SOMA also exhibits the fastest convergence rate; See Appendix~\ref{appendix:details_syn} for more results.

\subsection{Bayesian Linear Regression Given Privatized Data}\label{sec:exp_linear}

This section concerns the full data augmentation MCMC algorithm targeting the joint distribution $\pi(\theta, x \mid s_\mathrm{dp})$ of \eqref{eqn:joint-posterior} for linear regression. 

The mechanism for releasing perturbed summaries has been described in Example~\ref{example:linearregression}.
We will use conjugate priors for $\theta$, which makes the inference step described in \Cref{sec:imputation} tractable. In this case, the convergence of the marginal $\theta$ chain is dictated mainly by the convergence rate of the imputation step.
Appendix~\ref{appendix:details_pbl} describes the data generating parameters for simulating the dataset.

First, consider the small sample size ($n = 10$) case. The dimension of the joint chain is $n \cdot (p+2) + (p+1) = 43.$
For a small privacy loss budget like $\epsilon = 3$, 
SOMA's acceptance rate is 99.49\%, making it nearly rejection-free, whereas IMwG methods accept around 94.80\%.
SOMA's advantages become more evident under a larger privacy budget ($\epsilon=30$), where it achieves an acceptance rate about 91.91\% (compared to 50\% for IMwG baselines) and couples significantly faster. Full results are available in Appendix~\ref{tab:pbl_coupling1}. 

Next we compare the convergence of the marginal $\theta$ chain at $n = 100.$ 
We no longer use coupling-based diagnosis as the dimension of the state space is $n\cdot(p+2) + (p+1) =403$ at this point.
We use traditional convergence diagnosis such as the $\hat{R}$ (R-hat or the Gelman-Rubin diagnosis) and effective sample size~\citep[ESS]{aki2020rank}. 
A large $\hat{R}$ indicates poorer mixing, while $\hat{R} \approx 1$ is generally considered indicative of good mixing. 
As shown in Figure~\ref{fig:pbl_diagnostic}-A, the chains utilizing SOMA require approximately 800 iterations to achieve $\hat{R}<1.05$, while both Sys-IMwG and Ran-IMwG require about 1,500 iterations. 
As shown in Figure~\ref{fig:pbl_diagnostic}-B, using SOMA within DAMCMC results in a higher ESS than the other two methods.

\begin{figure}[htbp]
\hspace{0.25\textwidth}
\begin{subfigure}[t]{0.50\textwidth}
    \vspace{0pt}
    \includegraphics[width = 1\textwidth]{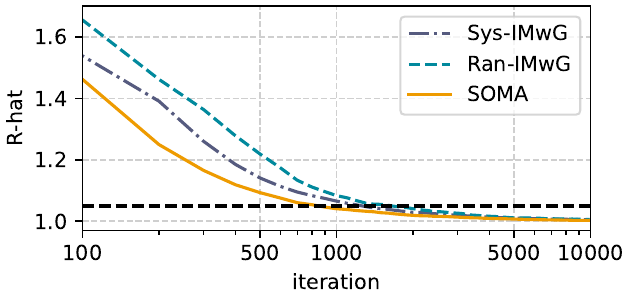}
\end{subfigure}
\hspace{-0.54\textwidth}
\begin{subfigure}[t]{0.02\textwidth}
    \vspace{2pt}
    \textbf{A}
\end{subfigure}

\hspace{0.25\textwidth}
\begin{subfigure}[t]{0.50\textwidth}
    \vspace{2pt}
    \includegraphics[width = 1\textwidth]{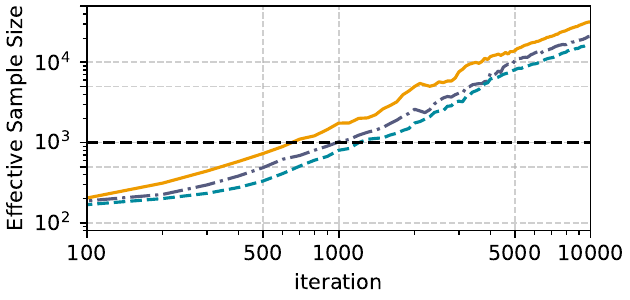}
\end{subfigure}
\hspace{-0.54\textwidth}
\begin{subfigure}[t]{0.02\textwidth}
    \vspace{2pt}
    \textbf{B}  
\end{subfigure}
\caption{
Convergence diagnosis using independent chains.
\textbf{A.} $\hat{R}$ diagnostic across iterations for three methods. The dashed line is 1.05.
\textbf{B.} ESS across iterations for three methods
}
\label{fig:pbl_diagnostic}
\end{figure}

\section{CENSUS DATA EXPERIMENT}\label{sec:realdata}
We apply SOMA to the privacy-preserving analysis of compositional data using the 2019 ATUS microdata file, which is collected by the U.S. Bureau of Labor Statistics. The data is available at \url{https://www.bls.gov/tus/datafiles-2019.htm}. 
The confidential data $\{x_i\}_{i=1}^n$ represent
the fraction of time spent on personal care ($x_{i1}$), eating and drinking ($x_{i2}$), and all other activities ($x_{i3}$). The privacy mechanism has been reviewed in Example~\ref{example:compositionaldata}.
We use a privacy budget $\epsilon=10$ and a clipping lower bound $a=0.0006$.

The data is split by female and male. 
After data pre-processing, the sample consists of 3,528 females and 3,128 males. 
We model the data with a Dirichlet distribution with parameter vector $\alpha = (\alpha_1, \alpha_2, \alpha_3)$ and specify a $\mathrm{Gamma}(1.0, 0.1)$ prior for each $\alpha_j$, $j=1,2,3$.
The same model, budget, clipping lower bound, and priors have also been used in \citet{guo2024differentially,awan2024statistical}, 

Unlike in Section~\ref{sec:exp_linear}, in this problem, the inference step $p(\alpha \mid x_1, \ldots, x_n)$ cannot be sampled exactly, and we resort to using a slice sampler \citep{neal2003slice} to sample from it approximately. 
We compare using SOMA or IMwG for the imputation step.

Table~\ref{tab:atus_diagnostics} presents the chain diagnostics, comparing using SOMA versus using IMwG for the imputation step.
We report the $\hat{R}$ after $10^5$ iterations and the ESS after $10^6$ iterations based on 20 independent runs, which shows that the SOMA-based update achieves a lower $\hat{R}$ and higher ESS across all dimensions.

\begin{table}[tb]
\centering
\caption{Diagnostic Summary for SOMA and IMwG}
\label{tab:atus_diagnostics}
\begin{tabular}{@{}cc cccc @{}}
\multirow{2}{*}{} & \multirow{2}{*}{Method} & \multicolumn{2}{c}{Female} & \multicolumn{2}{c}{Male} \\
\cmidrule(lr){3-4} \cmidrule(lr){5-6}
& & $\hat{R}$ & ESS & $\hat{R}$ & ESS \\
\midrule
\multirow{2}{*}{$\alpha_1$} & SOMA & \textbf{1.0469} & \textbf{2754.5} & \textbf{1.0492} & \textbf{3406.8} \\
                            & IMwG & 1.1041 & 2373.5 & 1.1994 & 2249.0 \\
\midrule
\multirow{2}{*}{$\alpha_2$} & SOMA & \textbf{1.0333} & \textbf{3576.5} & \textbf{1.0344} & \textbf{4510.3} \\
                            & IMwG & 1.0816 & 2965.0 & 1.1546 & 2777.4 \\
\midrule
\multirow{2}{*}{$\alpha_3$} & SOMA & \textbf{1.0481} & \textbf{2713.6} & \textbf{1.0506} & \textbf{3342.5} \\
                            & IMwG & 1.1068 & 2348.7 & 1.2043 & 2216.6 \\
\end{tabular}
\end{table}

Table~\ref{tab:mean_estimates} presents the estimated mean time allocations, $E[X \mid s_{\mathrm{dp}}]$. 
We use these posterior estimates to test for significant gender-based differences, evaluating the null hypothesis that the absolute difference in expected time allocation for each activity $j$ is no more than one percentage point ($H_{0,j}: |E[X_j|\text{Female}] - E[X_j|\text{Male}]| \le 0.01$). 
Our analysis leads to the rejection of the null hypotheses for personal care ($H_{0,1}$) and other activities ($H_{0,3}$), while the null hypothesis for eating and drinking ($H_{0,2}$) is not rejected. 
Both the point estimates and hypothesis testing conclusions are consistent with the prior findings in \citet{guo2024differentially}.

\begin{table}[tb]
\centering
\caption{Estimated Mean Time Allocation $E[X \mid s_{\mathrm{dp}}]$}
\vspace{-5pt}
\label{tab:mean_estimates}
\begin{tabular}{@{}lcccc@{}}
\multirow{2}{*}{Activity} & \multicolumn{2}{c}{Female} & \multicolumn{2}{c}{Male} \\
\cmidrule(lr){2-3} \cmidrule(lr){4-5}
 & SOMA & IMwG & SOMA & IMwG \\
\midrule
Personal Care        & 0.4112 & 0.4112 & 0.3919 & 0.3919 \\
Eating \& drinking  & 0.0507 & 0.0506 & 0.0508 & 0.0508 \\
Other activities     & 0.5381 & 0.5382 & 0.5573 & 0.5574 \\
\end{tabular}
\end{table}

\bibliographystyle{apalike}
\bibliography{refs}

\appendix

\section{PROOFS AND TECHNICAL RESULTS}
\subsection{Proofs of Section \ref{sec:soma}}\label{sec:proofs_of_section_3}
\begin{proof}[Proof of Theorem \ref{thm:rev}]
We can see that the proposal consists of two steps: proposing a new entry $y$ and choosing an index $i$ among $\{1,\ldots,n\}$. This proposal moves $x$ by at most one entry, so $K(x^\prime\mid x)=0$ whenever $d(x,x^\prime) > 1$. 
Without loss of generality, let's assume $x$ and $x'$ differ only on component 1, i.e., $x^\prime=[x_{-1},y]$.

To verify the detailed balance condition, we should derive the `backward proposal kernel' $K(x \mid x^\prime)$. In this proposal kernel, the current state is $x'$, the proposed point is $x_1$, the index $i$ is selected according to weights $w_i(x_1, x')$ defined in \eqref{eqn:wi} \& \eqref{eqn:w0}. Notice that from $x'$, we can only move to $[x'_{-1},x_1] = x$  or $[x^{\prime}_{-i}, x_1]$ which is equivalent to $[x_{-i}, y]$ for $i \ge 2$ up to shuffling indices, in terms of having equal density under $\pi$. 
Using our simplified weight definitions, we have
\begin{align*}
w_1(x_1, x')  &=\eta(s_\mathrm{dp} \mid s([x'_{-1}, x_1])) = \eta(s_\mathrm{dp} \mid s(x)) = w_0(y,x),\\
w_k(x_1,x') &=\eta(s_\mathrm{dp} \mid s([x'_{-k},x_1])) = \eta(s_\mathrm{dp} \mid s([x_{-k}, y])) = w_k(y, x),  \quad \text{for } k \ge 2.
\end{align*}
Using these weights, the backward proposal kernel density is 
\begin{equation*}
K(x \mid x^\prime)=f(x_1\mid\theta)\frac{w_1(x_1, x')}{\sum_{k=1}^n w_k(x_1, x')} = f(x_1\mid\theta)\frac{w_0(y,x)}{w_0(y,x) + \sum_{k=2}^n w_k(y,x)}.
\end{equation*}
We also have 
\begin{equation}\label{eqn:detailed-balance}
\frac{\pi(x^\prime) K(x \mid x^\prime)}{\pi(x) K(x^\prime \mid x)} = 
\frac{\pi(x^\prime) f(x_1\mid\theta)\frac{w_0(y,x)}{w_0 + \sum_{j=2}^n w_j(y,x)}}{\pi(x) f(y\mid\theta)\frac{w_1}{\sum_{j=1}^n w_j(y,x)}}= \frac{\sum_{j=1}^n w_j(y,x)}{w_0 + \sum_{j=2}^n w_j(y,x)},
\end{equation}
now the following identity holds:
\begin{align*}
\pi(x')f(x_1\mid\theta)w_0(y,x) &= \left( \eta(s_\mathrm{dp}\mid s(x')) \prod_{j\neq 1} f(x_j\mid\theta) f(y\mid\theta) \right) f(x_1\mid\theta) \eta(s_\mathrm{dp}\mid s(x)) \\
&= \left( \eta(s_\mathrm{dp}\mid s(x)) \prod_{j\neq 1} f(x_j\mid\theta) f(x_1\mid\theta) \right) f(y\mid\theta) \eta(s_\mathrm{dp}\mid s(x')) \\
&= \pi(x)f(y\mid\theta)w_1(y,x).
\end{align*}
Plugging \eqref{eqn:detailed-balance} into the Metropolis-Hastings acceptance ratio
$
\alpha (x^\prime \mid x) = \min \left(1, \frac{\pi(x^\prime) K(x \mid x^\prime)}{\pi(x) K(x^\prime \mid x)}\right),$
yields \eqref{eqn:acceptprob}, which makes the Markov transition kernel reversible with repsect to $\pi.$
\end{proof}

\begin{proof}[Proof of Corollary \ref{coro:sufficientconditions}]
For the general cases, assume $x$ and $x'$ differ only on component 1, i.e., $x^\prime=[x_{-1},y]$.
The weights are now given by
\begin{align*}
w_1(x_1, x')  &=c\cdot\frac{\pi([x'_{-1}, x_1])}{q(x_1)\prod_{j=2}^n q(x_j^{\prime})}=c\cdot\frac{\pi(x)}{\prod_{j=1}^n q(x_j)} = w_0(y,x),\\
w_k(x_1,x') &=c\cdot\frac{\pi([x'_{-k},x_1])}{q(x_1)q(y)\prod_{2 \le j \le n, j \not=k}q(x^{\prime}_j)} = c\cdot\frac{\pi([x_{-k}, y])}{q(y)\prod_{1 \le j \le n, j \not=k}q(x_j)} = w_k(y, x),  \quad \text{for } k \ge 2,
\end{align*}
and we have the corresponding backward proposal kernel as
\begin{equation*}
K(x \mid x^\prime)=q(x_1)\frac{w_1(x_1, x')}{\sum_{k=1}^n w_k(x_1, x')} = q(x_1)\frac{w_0(y,x)}{w_0(y,x) + \sum_{k=2}^n w_k(y,x)},
\end{equation*}
which yields
\begin{equation*}
\frac{\pi(x^\prime) K(x \mid x^\prime)}{\pi(x) K(x^\prime \mid x)} = 
\frac{\pi(x^\prime) q(x_1)\frac{w_0(y,x)}{w_0 + \sum_{j=2}^n w_j(y,x)}}{\pi(x) q(y)\frac{w_1}{\sum_{j=1}^n w_j(y,x)}}= \frac{\sum_{j=1}^n w_j(y,x)}{w_0 + \sum_{j=2}^n w_j(y,x)}.
\end{equation*}
Now since
\begin{equation*}\pi(x')q(x_1)w_0(y,x) = \pi(x')q(x_1)\frac{c\pi(x)}{\prod_{j=1}^n q(x_j)}  = c\pi(x) \frac{q(y)}{q(y)} \frac{\pi([x_{-1},y])}{\prod_{j=2}^n q(x_j)}  = \pi(x) q(y) w_1(y,x),
\end{equation*}
thus, the Metropolis-Hastings acceptance ratio
$
\alpha (x^\prime \mid x) = \min \left(1, \frac{\pi(x^\prime) K(x \mid x^\prime)}{\pi(x) K(x^\prime \mid x)}\right)$
still can yields \eqref{eqn:acceptprob}, which makes the Markov transition kernel reversible with respect to $\pi$.
\end{proof}

\subsection{Generalization to SOMA with $m < n$ components}
\label{app:soma_on_subset}

We now provide a detailed discussion on the properties of the SOMA sampler when adapted to operate on a subset $M \subseteq \{1, \dots, n\}$ of size $m$.

First, when $m = 1$, SOMA becomes deterministic or random scan, depending on how the index set is chosen.

Second, each SOMA kernel using $m$ components still satisfies detailed balance. 
\begin{lemma}\label{lem:soma_detailed_balanced_on_subset}
Use same conditions as \Cref{coro:sufficientconditions}.
Let $P(x' \mid x, M)$ be the transition probability for a given index set $M$. 
The SOMA sampler operating on $M$ satisfies the detailed balance condition with respect to $\pi(x)$:
\begin{equation}\label{eq:detailed_balance_on_M}
\pi(x) P(x' \mid x, M) = \pi(x') P(x \mid x', M).
\end{equation}
\end{lemma}
\begin{proof}
We verify detailed balance for $x \not=x'$ and this is to check 
$\pi(x) K(x' \mid x, M) \alpha(x' \mid x, M) = \pi(x') K(x \mid x', M) \alpha(x \mid x'. M) $ Let $i$ be the index where $x_i \not= x^{\prime}_i$. If $i \not\in M$, then both sides of \eqref{eq:detailed_balance_on_M} are zero. 
The proposal density for moving from $x$ to $x'$ (by proposing $y = x'_i$ and selecting index $i \in M$) has the form
\begin{equation*}
K(x' \mid x, M) = q(x_i') \frac{w_i(x'_i,x)}{W_{M}(y,x)}, \quad \text{where }\ W_{M}(x'_i,x) = \sum_{j \in M} w_j(x'_i,x),
\end{equation*}
and the backward move from $x'$ to $x$ involves proposing the $x_i$ to replace $x_i'$, which yields
\begin{equation*}
K(x \mid x', M) = q(x_i) \frac{w_i(x_i, x')}{W'_{M}(x_i, x')}, \quad \text{where }\ W'_{M}(x_i, x') = \sum_{j \in M} w_j(x_i, x').
\end{equation*}
Recall also that 
\begin{align*}
\alpha(x' \mid x, M) &=  \min\left(1, \frac{W_{M}(x'_i,x)}{w_0(x'_i,x) + \sum_{j\not=i, j \in M} w_j(x'_i, x)}\right) = \min\left(1, \frac{W_{M}(x'_i,x)}{w_0(x'_i,x)  + W_{M}(x'_i,x) - w_i(x'_i,x)}\right), 
\end{align*}
By construction of the weights \eqref{eqn:wi} and \eqref{eqn:w0}, we have $w_0(x'_i, x) = w_i(x_i,x')$ and $w_0(x_i,x') = w_i(x'_i, x).$ More importantly $w_j(x_i', x) = w_j(x_i, x')$ for each $j \in M$ due to permutation invariance.
Now we have 
\begin{align*}
W'_{M}(x_i, x') &= w_i(x_i, x') + \sum_{j \not=i,j \in M} w_j(x_i, x') \\&= w_0(x'_i, x) + \sum_{j\not=i, j \in M}w_j(x'_i,x) \\&= w_0(x'_i, x)  - w_i(x'_i, x) + W_M(x'_i,x).
\end{align*}
We can check that
\begin{align*}
\frac{\pi(x') K(x \mid x', M)}{\pi(x) K(x' \mid x, M)}&=  \frac{\pi(x') \cdot q(x_i) \frac{w_i(x_i, x')}{W'_{M}(x_i, x')}}{\pi(x)\cdot  q(x'_i) \frac{w_i(x'_i,x)}{W_{M}(x'_i,x)}} = \frac{\pi(x') \cdot q(x_i) w_i(x_i, x') W_{M}(x'_i,x)}{\pi(x)\cdot  q(x'_i) w_i(x'_i,x) W'_{M}(x_i, x')}  \\
&= \frac{\pi(x') q(x_i) \cdot w_0(x'_i,x) }{\pi(x) q(x'_i) \cdot w_0(x_i,x') } \cdot \frac{W_M(x'_i, x)}{w_0(x'_i) - w_i(x'_i,x) + W_M(x'_i,x)} \\
&= \frac{W_{M}(x'_i,x)}{W_{M}(x'_i,x) + w_0(x'_i,x) - w_i(x'_i,x)}.
\end{align*}
Setting $\alpha(x' \mid x, M) = \min\left(1, \frac{\pi(x') K(x \mid x', M)}{\pi(x) K(x' \mid x, M)}\right)$ satisfies detailed balance and hence makes the Markov transition kernel reversible with respect to $\pi.$
\end{proof}

Finally, 
SOMA remains reversible with respect to the target distribution $\pi(x)$ when using randomly selected subsets. Assume that at each iteration, a subset $M$ of size $m$ is drawn uniformly at random from all possible subsets of $\{1, \dots, n\}$ of that size.
The overall transition kernel $P_m(x' \mid x)$ becomes
\begin{equation*}
P_m(x' \mid x) = \frac{1}{\binom{n}{m}} \sum_{M} P(x' \mid x, M).
\end{equation*}

With Lemma~\ref{lem:soma_detailed_balanced_on_subset}, the reversibility for the random subset case is straightforward by verifying the detailed balance equation as
\begin{equation*}
\pi(x) P_m(x' \mid x) = \frac{1}{\binom{n}{m}}\pi(x) \sum_{M} P(x' \mid x, M) = \frac{1}{\binom{n}{m}}\pi(x') \sum_{M}P(x \mid x', M) = \pi(x') P_m(x \mid x').
\end{equation*}
Thus, the SOMA sampler with randomly chosen subsets is still reversible with respect to the target distribution $\pi(x)$.

\subsection{SOMA as a multi-proposal sampler}\label{sec:multiproposal-1}
In Section~\ref{sec:soma}, we mention that SOMA can be viewed as a special case of multi-proposal MCMC methods. We provide a more detailed discussion here.

We present another proof of Theorem~\ref{thm:peskun}.
\begin{proof}
SOMA simultaneously considers $n$ proposed states $[x_{-1}, y]$, $[x_{-2}, y], \ldots, [x_{-n}, y]$, with (unnormalized) weights $w_i(y, x).$ 
In this view, SOMA is indeed a multi-proposal MCMC method. 
The Markov transition kernel of SOMA  $P_{\mathrm{SOMA}}(\cdot \mid x)$ is a special case of the $P^{(K)}( \cdot \mid x)$ from \citet{pozza2025fundamental} with $K = n$. 
SOMA has a special way of constructing multiple proposals and a weight function to choose among the proposed states. 

In SOMA, the `marginal proposal' for each component $i$ is in fact the IMH proposal $f(\cdot\mid\theta)$. Let 
\begin{equation*}
Q_i(x' \mid x) = \begin{cases}
    q(x'_i) & \text{if } x'_j = x_j \quad \forall j \not= i \\
    0 & \text{otherwise}
\end{cases}
\end{equation*}
denote the IMH proposal kernel for the $i$-th component, defined on the full state space  $\mathbb{X}^n.$
Then $\widetilde{Q}(\cdot \mid x) = \frac{1}{n} Q_i(\cdot \mid x)$
is the overall proposal kernel for Ran-IMwG: it proposes an index $i$ uniformly and proposes a new from $\mathbb{X}$.
We denote the Markov transition kernel associated with proposal $\widetilde{Q}$ and 
acceptance probability \eqref{eqn:imh-acceptance} by $\widetilde{P}.$
It coincides with the Markov transition kernel of Ran-IMwG, denoted $P_{\mathrm{RAN}}(\cdot \mid x).$
In other words, Ran-IMwG is the `single proposal' kernel that we should compare SOMA with. \citet[Theorem 1]{pozza2025fundamental} states that 
\begin{equation*}
P^{(n)}(A\setminus \{x\} \mid  x) \le n \widetilde{P}(A \setminus \{x\} \mid x)
\end{equation*}
for $\pi$-almost all $x$ and for all measurable $A \subset \mathbb{X}$. 
In the context of SOMA and Ran-IMwG, this is 
\begin{equation*}
P_{\mathrm{SOMA}}(A \setminus \{x\} \mid x) \le n P_{\mathrm{RAN}}(A \setminus \{x\} \mid x),
\end{equation*}
which is equivalent to \eqref{eqn:peskun-x} in Theorem~\ref{thm:peskun}.
\end{proof}

\subsection{Asymptotic Variance of SOMA and Computation Cost}
\label{sec:cost}
A consequence of Theorem~\ref{thm:peskun} is that, a \emph{serial} implementation of SOMA would not be competitive against Ran-IMwG in terms of statistical efficiency once we consider their computation costs.

We can measure efficiency with the asymptotic variance in a central limit theorem for an MCMC estimator~\citep{jones2004markovchainclt}: given a (measurable) test function $h$ and a kernel $P$ invariant to $\pi$, our estimand is $\mathbb{E}_{\pi}(h(X))$ and its variance is $\mathrm{Var}_{\pi}(h(X)) < \infty$. Using MCMC samples drawn from $P$, we denote the asymptotic variance by $\mathrm{Var}(h, P)$ \citep{neal2004improving}.
\begin{corollary}\label{corollary:limitation}
For any function $h$ integrable with respect to $\pi$, we have
\begin{equation}\label{eqn:asymptoticvariance}
     \mathrm{Var}(h, P_{\mathrm{RAN}}) \le n \mathrm{Var}(h, P_{\mathrm{SOMA}}) + (n-1) \mathrm{Var}_{\pi}(h).
\end{equation}
\end{corollary}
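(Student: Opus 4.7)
The plan is to reduce the inequality to two well-known ingredients: a Peskun-style off-diagonal domination and the closed-form relation between the asymptotic variance of a Markov kernel and that of its lazy version. The bridge between the two will be the explicit lazy kernel
\begin{equation*}
    P^{\star} \;=\; \tfrac{1}{n} P_{\mathrm{SOMA}} + \tfrac{n-1}{n}\, I,
\end{equation*}
which is a genuine Markov kernel, and, being a convex combination of a $\pi$-reversible kernel and the identity, is itself $\pi$-reversible.

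First I would establish the off-diagonal comparison $P^{\star}(x' \mid x) \le P_{\mathrm{RAN}}(x' \mid x)$ for $\pi$-almost all $x \ne x'$. When $d(x,x')=1$ this is exactly Theorem~\ref{thm:peskun} divided by $n$; when $d(x,x')>1$ it is trivial because Theorem~\ref{thm:rev} gives $P_{\mathrm{SOMA}}(x'\mid x)=0$ and $P_{\mathrm{RAN}}(x' \mid x)=0$ as well, so both sides vanish. Since $P^{\star}$ and $P_{\mathrm{RAN}}$ are both $\pi$-reversible, I can now invoke Peskun's ordering in the form of Tierney (1998)\nocite{tierney1998note} to conclude
\begin{equation*}
    \mathrm{Var}(h, P_{\mathrm{RAN}}) \;\le\; \mathrm{Var}(h, P^{\star}).
\end{equation*}

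The second step is to evaluate $\mathrm{Var}(h, P^{\star})$ in terms of $\mathrm{Var}(h, P_{\mathrm{SOMA}})$. Using the spectral representation of asymptotic variance, for a $\pi$-reversible kernel $P$ with spectral measure $E_h$ one has $\mathrm{Var}(h,P) = \int \frac{1+\lambda}{1-\lambda}\, dE_h(\lambda)$ and $\mathrm{Var}_{\pi}(h) = \int dE_h(\lambda)$. Under the lazy mixture $P^{\star} = \alpha P_{\mathrm{SOMA}} + (1-\alpha) I$ with $\alpha = 1/n$, each eigenvalue $\lambda$ of $P_{\mathrm{SOMA}}$ becomes $\alpha \lambda + (1-\alpha)$ for $P^{\star}$, and a short manipulation of $\frac{1+\alpha\lambda+(1-\alpha)}{1-\alpha\lambda-(1-\alpha)} = \frac{2}{\alpha(1-\lambda)} - 1$ yields the identity
\begin{equation*}
    \mathrm{Var}(h, P^{\star}) \;=\; \tfrac{1}{\alpha}\, \mathrm{Var}(h, P_{\mathrm{SOMA}}) \;+\; \tfrac{1-\alpha}{\alpha}\, \mathrm{Var}_{\pi}(h) \;=\; n\, \mathrm{Var}(h, P_{\mathrm{SOMA}}) + (n-1)\,\mathrm{Var}_{\pi}(h).
\end{equation*}
Chaining this with the Peskun inequality gives the claimed bound.

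The potential nuisance, rather than a real obstacle, is ensuring that Peskun's theorem applies in the required generality (non-discrete $\mathbb{X}^n$, possibly non-positive kernels). This is handled by noting that $P^{\star}$ has an atom of mass at least $(n-1)/n$ at the current state, so both kernels are $\pi$-reversible sub-Markov operators on $L^2(\pi)$ to which Tierney's general-state-space version of Peskun ordering applies directly. The lazy-chain identity can alternatively be derived without spectral calculus by conditioning on the geometric number of non-trivial $P_{\mathrm{SOMA}}$-moves in a run of $P^{\star}$ and summing the resulting covariance series, which avoids any appeal to the spectral theorem if a purely probabilistic argument is preferred.
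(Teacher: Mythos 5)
Your proposal is correct and follows essentially the same route as the paper: the paper's proof is a one-line citation of Theorem~\ref{thm:peskun} together with Lemma~33 of \citet{andrieu2018uniform}, and that lemma's content is precisely your argument --- compare $P_{\mathrm{RAN}}$ off-diagonally with the lazy kernel $\tfrac{1}{n}P_{\mathrm{SOMA}}+\tfrac{n-1}{n}I$, apply the Peskun--Tierney ordering, and then use the lazy-chain identity for asymptotic variance. The only difference is that you supply the internals of the cited lemma explicitly (and your spectral computation checks out), whereas the paper treats it as a black box.
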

\begin{proof}
    This result follows from Theorem~\ref{thm:peskun} and \citet[Lemma 33]{andrieu2018uniform}.
\end{proof}

Since RanScan involves only ${w_i}/{w_0}$ for a specific $i$ but SOMA requires $\{w_j\}_{0 \le j \le n}$, the cost of SOMA could be $\mathcal{O}(n)$ that of RanScan with a serial implementation. When $nT$ iterations of RanScan has the same runtime with $T$ iterations of SOMA, and these estimators have variances $(nT)^{-1}\mathrm{Var}(h, P_{\mathrm{RAN}})$ and $T^{-1}\mathrm{Var}(h, P_{\mathrm{SOMA}})$ respectively. Equation \eqref{eqn:asymptoticvariance} suggests $nT$ iterations of RanScan are asymptotically more efficient than $T$ iterations of SOMA, implemented serially.

Luckily, computing the weights in parallel can reduce the runtime cost of SOMA. 
Oftentimes, the target distribution $\pi$ has the structure to make this parallel/vectorized implementation feasible.
SOMA can thus have a computational complexity comparable to Ran-IMwG and Sys-IMwG under moderate values of $n$, using a vectorized implementation. The costs would have the same order under a parallel implementation.
These allow one to harness the efficiency gained from SOMA's design.
We give Example~\ref{ex:additive_summary}, which continues with the perturbed summary statistics privacy mechanism in Section~\ref{sec:exp_linear}.

\begin{example}[Record additive privacy mechanisms]\label{ex:additive_summary}
Privacy mechanisms that add data-independent noise to a query of the form $s(x) = \sum_{i=1}^n t(x_i)$ have densities
$\eta(s_{\mathrm{dp}} \mid s(x)) = g(s_{\mathrm{dp}}, \sum_{i=1}^n t_i(x_i))$ for tractable and known functions $g$ and $t$. 
\citet[Assumption 2]{ju2022data} coined the term `record additivity' to describe how $\eta(s_{\mathrm{dp}} \mid s(x))$ depends on an additive summary of $x$, and connects this property with implementation efficiency of data augmentation methods.
With the widely used Laplace mechanism $s_{\mathrm{dp}} \sim \mathrm{Laplace}(s(x), {\Delta}/{\epsilon})$ and the weights $w_i = \eta(s_{\mathrm{dp}} \mid s([x_{-i}, y]))$, we have 
\begin{equation}\label{eqn:laplace}
    \log\left(\frac{w_i}{w_0}\right) = \frac{\Delta}{\epsilon} \left(\left\|s_{\mathrm{dp}} - \sum_{j=1}^n t(x_j)\right\|_1 -  \left\|s_{\mathrm{dp}} - \sum_{j=1,j \neq i}^n t(x_j) - t(y)\right\|_1\right).
\end{equation}
In \eqref{eqn:laplace}, $\Delta$ is a sensitivity parameter related to $t(\cdot)$ and $\epsilon$ is the privacy loss budget.
\end{example}

\Cref{example:perturbedhistogram,example:compositionaldata,example:linearregression} are all instances of record addivite privacy mechanisms.

The swap-one-out structure in \eqref{eqn:laplace} enables parallel or vectorized implementation, which can reduce SOMA’s computational complexity to a large extent.
We include empirical run time comparisons in Section~\ref{sec:runtime}, showing the runtime cost of SOMA is less than twice that of the Ran/Sys-IMwG methods with $n = 100$ components. In this case, instead of comparing the asymptotic variance of using $T$ iterations of SOMA with $nT$ iterations of Ran-IMwG, we should use $2T$ iterations of Ran-IMwG.

\subsection{Proofs of Section~\ref{sec:acc_bound}}\label{sec:proofs_of_section_41}
\begin{proof}[Proof of Theorem~\ref{thm:acc_compare}]
Consider the Metropolis acceptance probability for updating the $i$-th component of $x \in \mathbb{X}^n$ with $y \in \mathbb{X}$ when index $i$ is selected. For the SOMA sampler, this probability is given by  
\begin{equation*}
\alpha_i^{\mathrm{SOMA}}(y,x)=\min\left\{1, \frac{W(y, x)}{W(y, x) + w_0(y, x) - w_i(y, x)}\right\}=\min\left\{1, \frac{W(y, x) - w_i(y, x) + w_i(y, x)}{W(y, x) - w_i(y, x) + w_0(y, x)}\right\}.
\end{equation*}
Since $W(y, x) - w_i(y, x) = \sum_{j\neq i}^n w_j(y, x)\ge 0$, we have
\begin{equation*}
\alpha_i^{\mathrm{SOMA}}(y,x)\ge \min\left\{1, \frac{w_i(y,x)}{w_0(y,x)}\right\} = \alpha^{\mathrm{IMwG}}_i (y,x),
\end{equation*}
which ends the proof.
\end{proof}

\begin{proof}[Proof of Theorem~\ref{thm:peskun}]
Suppose $x$ and $x'$ differ on component $i$ and, without loss of generality, assume $i = 1$. Then, starting from the right-hand side, 
\begin{align*}
    &\quad n \cdot P_{\mathrm{RAN}}(x' \mid x) \\
    &= n \cdot \frac{1}{n} f(x'_1\mid\theta)\alpha_1^{\mathrm{IMwG}}(x'_1,x)  =n \cdot \frac{1}{n} f(x'_1\mid\theta)\min\left(\frac{w_1}{w_0}, 1\right) \\
    &= f(x'_1\mid\theta) \cdot w_1 \cdot \min\left(\frac{1}{w_1}, \frac{1}{w_0}\right) \\
    & \ge f(x'_1\mid\theta)\cdot  w_1  \cdot \min\left(\frac{1}{w_1 + \sum_{i=2}^n w_i}, \frac{1}{w_0 + \sum_{i=2}^n w_i}\right) \\
    &= f(x'_1\mid\theta)\cdot \frac{w_1}{w_1 + \sum_{i=2}^n w_i} \min\left( 1, \frac{w_1 + \sum_{i=2}^n w_i}{w_0 + \sum_{i=2}^n w_i}\right) \\
    &= P_{\mathrm{SOMA}}(x' \mid x).
\end{align*}
\end{proof}

\begin{lemma}\label{lemma:acc_compare}
For any $x\in\mathbb{X}^n$ and $y\in\mathbb{X}$, the following inequality holds:
\begin{equation}
\sum_{i=1}^{n}  \left(\frac{w_i(y,x)}{W(y,x)} \right) \cdot \alpha_i^{\mathrm{SOMA}}(y,x) \ge \sum_{i=1}^{n} \left( \frac{1}{n}\right) \cdot \alpha_i^{\mathrm{IMwG}}(y,x).
\end{equation}
\end{lemma}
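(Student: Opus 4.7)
The plan is to sandwich the claim between two simpler inequalities,
\begin{equation*}
\sum_{i=1}^{n} \frac{w_i(y,x)}{W(y,x)} \alpha_i^{\mathrm{SOMA}}(y,x)
\;\ge\; \sum_{i=1}^{n} \frac{w_i(y,x)}{W(y,x)} \alpha_i^{\mathrm{IMwG}}(y,x)
\;\ge\; \frac{1}{n}\sum_{i=1}^{n}\alpha_i^{\mathrm{IMwG}}(y,x),
\end{equation*}
and handle each step with a separate argument.

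The first step is an immediate consequence of Theorem~\ref{thm:acc_compare}: the pointwise domination $\alpha_i^{\mathrm{SOMA}}(y,x) \ge \alpha_i^{\mathrm{IMwG}}(y,x)$ is preserved after summing against the non-negative weights $w_i(y,x)/W(y,x)$.

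The second step is the less obvious one, since we are replacing the state-dependent weights $w_i/W$ with uniform weights $1/n$. The key observation is that $\alpha_i^{\mathrm{IMwG}}(y,x) = \min\{1, w_i/w_0\}$ depends on the index $i$ only through $w_i$ (with $w_0$ fixed) and is non-decreasing in $w_i$. Therefore, if we relabel the indices so that $w_1 \le w_2 \le \cdots \le w_n$, the companion sequence $\alpha_1^{\mathrm{IMwG}} \le \cdots \le \alpha_n^{\mathrm{IMwG}}$ is ordered in the same direction. Chebyshev's sum inequality for similarly ordered sequences then gives $n\sum_i w_i \alpha_i^{\mathrm{IMwG}} \ge W \sum_i \alpha_i^{\mathrm{IMwG}}$, and dividing by $nW$ yields the second inequality.

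The subtlety I expect lies in correctly identifying why the adaptive weighting helps: the pointwise domination $\alpha_i^{\mathrm{SOMA}} \ge \alpha_i^{\mathrm{IMwG}}$ by itself is insufficient because the two averages use different weight vectors, and the monotonicity of $\alpha_i^{\mathrm{IMwG}}$ in $w_i$ is the structural fact that makes SOMA's concentration on large-$w_i$ indices beneficial rather than harmful. Once the co-monotonicity is spotted, the invocation of Chebyshev's sum inequality is a one-line calculation and everything else is routine.
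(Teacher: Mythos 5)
Your proof is correct, and it takes a genuinely different --- and considerably shorter --- route than the paper's. The paper proves the lemma by direct case analysis: it first handles $n=2$ by splitting on the relative order of $w_0,w_1,w_2$, then treats general $n$ by normalizing $w_0=1$, letting $k$ count the weights below $w_0$, and splitting into the cases $k\le n-\sqrt{n}$, $k\in(n-\sqrt{n},\,n-1]$, and $k=n$, the middle case requiring a two-variable calculus argument over $(\alpha,\beta)$. Your decomposition replaces all of this with two clean observations: the pointwise domination $\alpha_i^{\mathrm{SOMA}}(y,x)\ge\alpha_i^{\mathrm{IMwG}}(y,x)$ already established in Theorem~\ref{thm:acc_compare} (which is proved independently of this lemma, so there is no circularity), and the fact that $\alpha_i^{\mathrm{IMwG}}=\min\{1,w_i/w_0\}$ is a non-decreasing function of $w_i$ with $w_0$ common to all indices, so that $(w_i)$ and $(\alpha_i^{\mathrm{IMwG}})$ are similarly ordered and Chebyshev's sum inequality applies via the identity
\begin{equation*}
n\sum_{i} w_i\,\alpha_i^{\mathrm{IMwG}} - \Bigl(\sum_i w_i\Bigr)\Bigl(\sum_i \alpha_i^{\mathrm{IMwG}}\Bigr) = \tfrac{1}{2}\sum_{j,k}\bigl(w_j-w_k\bigr)\bigl(\alpha_j^{\mathrm{IMwG}}-\alpha_k^{\mathrm{IMwG}}\bigr)\ \ge\ 0.
\end{equation*}
Beyond brevity, your argument isolates the structural reason the adaptive weighting helps --- SOMA upweights exactly those indices where the IMwG acceptance probability is already large --- which the paper's computation obscures. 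The only caveat, a trivial one, is that dividing by $W$ requires $W>0$, which holds whenever the selection probabilities in Stage 1 are well defined.
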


\begin{proof}[Proof of Lemma~\ref{lemma:acc_compare}]
We begin with the base case where $n=2$. Pick any $x\in\mathbb{X}^2$ and $y\in\mathbb{X}$. We aim to show
\begin{equation}
\sum_{i=1}^{2}  \left(\frac{w_i(y,x)}{W(y,x)} \right) \cdot \alpha_i^{\mathrm{SOMA}}(y,x) \ge \sum_{i=1}^{2} \left( \frac{1}{2}\right) \cdot \alpha_i^{\mathrm{IMwG}}(y,x).
\end{equation}
For brevity, define $w_i=w_i(y, x)$ for $i=0,1,2$. Rewriting both sides:
\begin{align*}
\mathrm{LHS}&=\sum_{i=1}^{2}  \frac{w_i}{w_1+w_2} \cdot \min\left\{ 1,\frac{w_1+w_2}{w_0+w_1+w_2-w_i} \right\} \\
& = \min\left\{ \frac{w_1}{w_1+w_2},\frac{w_1}{w_0+w_2} \right\} + \min\left\{ \frac{w_2}{w_1+w_2},\frac{w_2}{w_0+w_1} \right\},
\end{align*}
and
\begin{align*}
\mathrm{RHS}&=\frac12 \left(\min\left\{ 1,\frac{w_1}{w_0} \right\} + \min\left\{ 1,\frac{w_2}{w_0} \right\}\right).
\end{align*}

Without loss of generality, we assume $w_1 \le w_2$. We split into three cases based on the relative sizes of $(w_0,w_1,w_2)$.

\noindent \textbf{Case (a)}. $w_0 \le w_1 \le w_2$.
In this scenario, $\mathrm{LHS} = \mathrm{RHS}=1$, so the inequality holds trivially.

\noindent \textbf{Case (b)}. $w_1 \le w_0 \le w_2$.
In this case, we have 
\begin{align*}
\mathrm{LHS} - \mathrm{RHS}&=
\left(\frac{w_1}{w_0+w_2} + \frac{w_2}{w_1+w_2}\right) - \frac{1}{2}\left( \frac{w_1}{w_0} + 1 \right) \\
&=\frac{w_2\left(w_0^2-w_1^2\right)+\left(w_0-w_1\right)\left(w_2^2-w_0w_1\right)}{2w_0\left(w_0+w_2\right)\left(w_1+w_2\right)}\\
&\ge 0,
\end{align*}
hence $\mathrm{LHS} \ge \mathrm{RHS}$.

\noindent \textbf{Case (c)}. $w_1 \le w_2 \le w_0$.
By a similar calculation,
\begin{align*}
\mathrm{LHS} - \mathrm{RHS}&=
\left(\frac{w_1}{w_0+w_2} + \frac{w_2}{w_0+w_1}\right) - \frac{1}{2}\left( \frac{w_1}{w_0} + \frac{w_2}{w_0} \right) \\
&=\frac{w_0\left(w_1-w_2\right)^2+\left(w_1+w_2\right)\left(w_0^2-w_1w_2\right)}{2w_0\left(w_0+w_1\right)\left(w_0+w_2\right)}\\
&\ge 0,
\end{align*}
thus $\mathrm{LHS} \ge \mathrm{RHS}$ here as well.

In all cases, $\mathrm{LHS} \ge \mathrm{RHS}$ holds for $n=2$.

Now, we extend the argument to general $n$. 
Let $w_0,w_1,\ldots,w_n$ be positive weights. 
Without loss of generality, assume $w_0=1$ and $w_1\le w_2\le\ldots\le w_{k} \le w_0 \le w_{k+1} \le\ldots \le w_n$, and denote $W=w_1+\ldots+w_n$. 
Then, we express the two sides as
\begin{align*}
\mathrm{LHS}&=\sum_{i=1}^{k}\left( \frac{w_i}{W+w_0-w_i} \right)+\sum_{i=k+1}^n \frac{w_i}{W},\\
\mathrm{RHS}&=\sum_{i=1}^k \frac{w_i}{nw_0}+\sum_{i=k+1}^n \frac{1}{n}.
\end{align*}
Note that $\sum_{i=1}^n w_i/W = 1$ and $\sum_{i=1}^n 1/n = 1$, and define $z_i=1-w_i$ for all $i=1,\ldots,k$, we rewrite
\begin{align}
1-\mathrm{LHS}&=\sum_{i=1}^{k}\left( \frac{w_i}{W}-\frac{w_i}{W+1-w_i} \right)
=\sum_{i=1}^{k}\left( \frac{w_i(1-w_i)}{W(W+1-w_i)} \right)
=\sum_{i=1}^{k}\left( \frac{z_i(1-z_i)}{W(W+z_i)} \right),\label{eq:one_minus_lhs} \\
1-\mathrm{RHS}&=\sum_{i=1}^k \left( \frac{1}{n} - \frac{w_i}{n} \right) \
=\sum_{i=1}^{k}\left( \frac{1-w_i}{n} \right) 
=\sum_{i=1}^{k}\left( \frac{z_i}{n} \right).\label{eq:one_minus_rhs}
\end{align}
Next, we consider the following cases: $k\le n-\sqrt{n}$, $k \in (n-\sqrt{n},n-1]$ and $k = n$.

\noindent \textbf{Case (a)}. We assume $k\le n-\sqrt{n}$. Using this bound, we have
\begin{equation*}
W=\sum_{i=1}^k w_i + \sum_{j=k+1}^n w_j
\ge \sum_{j=k+1}^n w_j \ge \sum_{j=k+1}^n w_0 = n-k \ge \sqrt{n},
\end{equation*}
based on Equations \eqref{eq:one_minus_lhs} and \eqref{eq:one_minus_rhs}, since $z_i\in[0, 1]$ for all $i=1,\ldots,n$, we bound
\begin{equation*}
1-\mathrm{LHS} = \sum_{i=1}^{k}\left( \frac{z_i(1-z_i)}{W(W+z_i)} \right)
\le \sum_{i=1}^{k}\left( \frac{z_i(1-z_i)}{W\times W} \right)
\le \sum_{i=1}^{k}\left( \frac{z_i(1-z_i)}{n} \right)
\le \sum_{i=1}^{k}\left( \frac{z_i}{n} \right) = 1-\mathrm{RHS},
\end{equation*}
it gives $\mathrm{LHS} \ge \mathrm{RHS}$.

\noindent \textbf{Case (b)}. We assume $k \in (n-\sqrt{n},n-1]$. Denote $S=\sum_{i=1}^k z_i \le k \le n-1$, we have the following inequality:
\begin{equation}\label{eq:ineq_n-s}
W=\sum_{i=1}^k (1-z_i) + \sum_{j=k+1}^n w_j
\ge \sum_{i=1}^k (1-z_i) + \sum_{j=k+1}^n 1 = n - \sum_{i=1}^k z_i=n-S > 0,
\end{equation}
and
\begin{equation}\label{eq:ineq_ess}
\sum_{i=1}^k z_i^2\ge \frac{\left( \sum_{i=1}^k z_i \right)^2}{k} = \frac{S^2}{k}.
\end{equation}
Based on Equations \eqref{eq:ineq_n-s} and \eqref{eq:ineq_ess}, it holds that
\begin{align*}
1-\mathrm{LHS}&=\sum_{i=1}^{k}\left( \frac{z_i(1-z_i)}{W(W+z_i)} \right)
\le \sum_{i=1}^{k}\left( \frac{z_i(1-z_i)}{(n-S)^2} \right)
=\frac{\sum_{i=1}^{k}z_i-\sum_{i=1}^{k}z_i^2}{(n-S)^2}
\le \frac{S - \frac{S^2}{k}}{(n-S)^2}.
\end{align*}
Since $1-\mathrm{RHS}=S/n$, to prove $\mathrm{LHS} \ge \mathrm{RHS}$, we only need to show
\begin{equation}
\text{LHS} - \text{RHS} \ge \frac{S}{n} - \frac{S - \frac{S^2}{k}}{(n-S)^2} = \frac{S}{kn(n-S)^2}\times \left[nS+k\left( (n-S)^2-n \right)\right]\ge 0.
\end{equation}
Let $G = nS+k\left( (n-S)^2-n \right)$. It suffices to show $G \ge 0.$ Base on the fact that $S=\sum_{i=1}^k z_i\in[0, k]$ and $k \in (n-\sqrt{n},n-1]$, we take $\alpha=S/k\in[0,1]$ and $\beta=k/n\in(\frac{n-\sqrt{n}}{n},\frac{n-1}{n}]$. Now $G$ has the form
\begin{equation}
G=n^2\beta\left( n(\alpha\beta-1)^2 + \alpha - 1 \right),
\end{equation}
and 
\begin{equation*}
\frac{\partial G}{\partial \alpha} = n^2\beta\left( 1 + 2n\beta (\alpha\beta-1) \right)\le
n^2\beta\left( 1 + 2n \cdot \left(\frac{n-\sqrt{n}}{n}\right) \left(1\cdot \frac{n-1}{n}-1\right) \right) = -1 + \frac{2\sqrt{n}}{n}.
\end{equation*}
We have $\frac{\partial G}{\partial \alpha} \le 0$ for all $n\ge 4$. 
As for $n=3$, 
\begin{equation*}
\frac{\partial G(\alpha,\beta; n=3)}{\partial \alpha}=9\beta \left( 1+6\beta (\alpha\beta-1) \right)\le 9\beta \left( 6\beta^2 - 6\beta + 1\right) \le 0, \quad \forall \  \beta \in \left(\frac{3-\sqrt{3}}{3},\frac{2}{3}\right].
\end{equation*}
Thus for all $\alpha\in[0, 1]$, $\beta\in(\frac{n-\sqrt{n}}{n},\frac{n-1}{n}]$ and $n\ge 3$,
\begin{equation}
G(\alpha,\beta;n)\ge G(1,\beta;n)=n^2\beta n(\beta-1)^2\ge 0.
\end{equation}
This proves $\mathrm{LHS} \ge \mathrm{RHS}$ under case (b).

\noindent \textbf{Case (c)}. We assume $k = n$. In this case, since $w_0\ge w_i$ holds for all $i=1,\ldots,n$, thus $W-w_i=\sum_{j\neq i}^n w_j \le (n-1) w_0$. We derive
\begin{align*}
\mathrm{LHS}=\sum_{i=1}^n \frac{w_i}{W+w_0-w_i}\ge \sum_{i=1}^n \frac{w_i}{n w_0}=\mathrm{RHS}.
\end{align*}
Thus, the inequality holds in all cases.
\end{proof}

\begin{proof}[Proof of Theorem~\ref{thm:acc_soma_vs_ran}]
Based on Lemma~\ref{lemma:acc_compare}, integrating both sides over $f(y\mid\theta)$ yields
\begin{equation*}
\int_{\mathbb{X}} f(y\mid\theta) \sum_{i=1}^{n}  \left(\frac{w_i(y,x)}{W(y,x)} \right) \cdot \alpha_i^{\mathrm{SOMA}}(y,x)\mathrm{d}y \ge 
\int_{\mathbb{X}} f(y\mid\theta) \sum_{i=1}^{n} \left( \frac{1}{n}\right) \cdot \alpha_i^{\mathrm{IMwG}}(y,x) \mathrm{d}y,
\end{equation*}
which means $A^{\mathrm{SOMA}}(x)\ge A^{\mathrm{RAN}}(x)$ for all $x\in\mathbb{X}^n$.
\end{proof}

\begin{proof}[Proof of Corollary~\ref{corollary:tv-ran-soma}]
Let $x\in\mathbb{X}^n$ and define $\varphi(\cdot)=\mathbb{I}_{x}(\cdot)$, which maps $\mathbb{X}^n$ to $[0, 1]$. Now for any function $\psi:\mathbb{X}^n\rightarrow[0, 1]$ and any $x^\prime\in\mathbb{X}^n$, it holds that
\begin{equation*}
\int_{\mathbb{X}^n} \psi(\nu) \mathrm{d} P(\nu \mid x^\prime) = \int_{\mathbb{X}^n} \psi(\nu) \alpha(x^\prime, \nu)\mathrm{d} K(\nu \mid x^\prime) + [1-A(x^\prime)]\psi(x^\prime)\ge [1-A(x^\prime)]\psi(x^\prime),
\end{equation*}
since the first term is nonnegative. To bound the $t$-step transition, we iteratively apply this inequality. Using the Chapman-Kolmogorov decomposition, it gives
\begin{equation*}
\int_{\mathbb{X}^n} \varphi(\nu) \mathrm{d} P^t(\nu \mid x) = \int_{\mathbb{X}^n} \left( \int_{\mathbb{X}^n} \varphi(v) \mathrm{d} P(v \mid u) \right) \mathrm{d} P^{t-1}(u \mid x).
\end{equation*}
For each $u$, the inner integral satisfies $\int_{\mathbb{X}^n} \varphi(v) \mathrm{d} P(v \mid u) \ge [1 - A(u)]\varphi(u)$. Substituting this bound,
\begin{equation*}
\int_{\mathbb{X}^n} \varphi(\nu) \mathrm{d} P^t(\nu \mid x) \ge \int_{\mathbb{X}^n} [1 - A(u)]\varphi(u) \mathrm{d} P^{m-1}(u \mid x).
\end{equation*}
Since $\varphi(u) = \mathbb{I}_x(u)$, the integral reduces to $[1 - A(x)] \cdot \int_{\mathbb{X}^n} \varphi(u) \mathrm{d} P^{m-1}(u \mid x)$. By induction, we have
\begin{equation*}
\int_{\mathbb{X}^n} \varphi \mathrm{d} P^t(\cdot \mid x) \ge [1 - A(x)]^t \varphi(x).
\end{equation*}
Thus, it holds that
\begin{align*}
\|P^t (\cdot \mid x) - \pi(\cdot) \|_{\mathrm{TV}} &= \sup_\psi \left|\int_{\mathbb{X}^n} \psi\mathrm{d} P^t (\cdot \mid x) - \int_{\mathbb{X}^n} \psi\mathrm{d} \pi(\cdot)\right| \\
&\ge \left|\int_{\mathbb{X}^n} \varphi\mathrm{d} P^t (\cdot \mid x) - \int_{\mathbb{X}^n} \varphi\mathrm{d} \pi(\cdot)\right| \\
& = \int_{\mathbb{X}^n} \varphi\mathrm{d} P^t (x, \cdot) \\
&\ge [1-A(x)]^t\varphi(x) \\
&=[1-A(x)]^t.
\end{align*}
Hence when $A^{\mathrm{SOMA}}(x) \ge A^{\mathrm{RAN}}(x)$ holds for all $x$, we conclude that
\begin{equation}
\|P^t_{\mathrm{RAN}} (\cdot \mid x) - \pi(\cdot) \|_{\mathrm{TV}} \ge \left(1-A^{\mathrm{RAN}}(x)\right)^t \ge \left(1-A^{\mathrm{SOMA}}(x)\right)^t.
\end{equation}
\end{proof}

\begin{proof}[Proof of Theorem~\ref{thm:acc_bound}]
We first prove that $A^{\mathrm{SOMA}}(x) \ge {n}/({n+e^\epsilon-1})$.
Let $T_i(y, x) = \sum_{j=1,j\neq i}^n w_j(y, x)$. Under the definition of $\epsilon$-DP, we have $w_i(y, x)/w_0(y, x)\ge 1/e^\epsilon$, and $T_i(y, x)/w_0(y, x)\ge (n-1)/e^\epsilon$ holds for all $x$ and $y$. Now the overall acceptance probability $A^{\mathrm{SOMA}}(x)$ for SOMA has the form
\begin{align}
A^{\mathrm{SOMA}}(x) &= \int_{\mathbb{X}} f(y\mid\theta) \sum_{i=1}^n \left[\mathbb{P}(I=i) \min\left\{1, \frac{W(y, x)}{W(y, x)+w_0(y, x)-w_i(y, x)}\right\}\right]\mathrm{d}y \notag \\
&=\int_{\mathbb{X}} f(y\mid\theta) \sum_{i=1}^n\left[ \mathbb{P}(I=i)\min\left\{1, \frac{T_i(y, x) + w_i(y, x)}{T_i(y, x) + w_0(y, x)}\right\}\right] \mathrm{d}y \notag \\
&=\int_{\mathbb{X}} f(y\mid\theta) \sum_{i=1}^n\left[ \mathbb{P}(I=i)\min\left\{1, 1 - \frac{1 - w_i(y, x)/w_0(y, x)}{1 + T_i(y, x)/w_0(y, x)}\right\} \right] \mathrm{d}y \notag \\
&\ge \int_{\mathbb{X}} f(y\mid\theta) \sum_{i=1}^n\left[ \mathbb{P}(I=i) \min\left\{1, 1 - \frac{1 - 1/e^\epsilon}{1 + (n-1)/e^\epsilon}\right\} \right] \mathrm{d}y \notag \\
&= 1 - \frac{1 - 1/e^\epsilon}{1 + (n-1)/e^\epsilon} \notag \\
&= \frac{n}{n+e^\epsilon-1}.
\end{align}
Similarly, we have
\begin{align}
A^{\mathrm{RAN}}(x) &= \int_{\mathbb{X}} f(y\mid\theta) \sum_{i=1}^n \left[\mathbb{P}(I=i) \min\left\{1, \frac{w_i(y, x)}{w_0(y, x)}\right\}\right]\mathrm{d}y \notag \\
&\ge \int_{\mathbb{X}} f(y\mid\theta) \sum_{i=1}^n\left[ \mathbb{P}(I=i) \min\left\{1, \frac{1}{e^\epsilon}\right\} \right] \mathrm{d}y \notag \\
&= \frac{1}{e^\epsilon},
\end{align}
and
\begin{align}
A_i^{\mathrm{SYS}}(x) &= \int_{\mathbb{X}} f(y\mid\theta) \min\left\{1, \frac{w_i(y, x)}{w_0(y, x)}\right\}\mathrm{d}y \notag \\
&\ge \int_{\mathbb{X}} f(y\mid\theta) \min\left\{1, \frac{1}{e^\epsilon}\right\} \mathrm{d}y \notag \\
&= \frac{1}{e^\epsilon}.
\end{align}
Thus, the proof is complete. 
\end{proof}

\subsection{The Coupling Kernel of SOMA in Section~\ref{sec:convergence}}\label{sec:coupling}

We now use coupling-based techniques to analyze the convergence rate of the algorithms. 
Our proof leverages the coupling lemma. It entails constructing two chains on a common probability space such that they eventually meet and remain coupled.

\begin{theorem}[Coupling inequality~\citep{rosenthal1995minorization}]
\label{lem:coup_ineq}
Let $\{(\Phi^{(t)}, \tilde{\Phi}^{(t)})\}_{t=0}^{\infty}$ be a pair of Markov chains, each marginally associated with an irreducible transition kernel $P$ and stationary distribution $\pi$. Suppose that
(i). if $\Phi^{(t_0)} = \tilde{\Phi}^{(t_0)}$ for some $t_0$, then $\Phi^{(t)} = \tilde{\Phi}^{(t)}$ for all $t > t_0$; and
(ii). $\tilde{\Phi}^{(0)} \sim \pi$.
Then 
\begin{equation}\label{eq:coup_ineq}
\|P^t ( \cdot \mid x) - \pi(\cdot) \|_{\mathrm{TV}} \le \mathbb{P}(\tau > t).
\end{equation}
where the meeting time is defined by $\tau = \min \{t:\ \Phi^{(t)} = \tilde{\Phi}^{(t)} \}$.
\end{theorem}

To utilize \eqref{eq:coup_ineq}, we must construct such a faithful coupling.
We create a coupling of SOMA in the two stages: (1) use a common proposed state $y$ and select a common index $I$ using a maximal coupling of Multinomial random variables~\citep{thorisson2000coupling,jacob2020couplings}; (2) accept and reject the proposed states with a common random number. 

Assume we have two chains $\Phi$ and $\tilde{\Phi}$ with corresponding states $x^{(t)}$ and $\tilde{x}^{(t)}$ at time $t$. 
Here is a coupling kernel.

1. Propose the same single offer $y\sim f(y\mid\theta)$ for both chains $\Phi$ and $\tilde{\Phi}$.

2. For chain $\Phi$, calculate weight $w_0$ and $w_i$; For chain $\tilde\Phi$, calculate weight $\tilde{w}_0$ and $\tilde{w}_i$. Denote $W=\sum_{j=1}^n w_j$ and $\tilde{W}=\sum_{j=1}^n \tilde{w}_j$.

3. Compute the component-wise minimum of selection probabilities and residuals:
\begin{equation}\label{eqn:ui}
u_i = \min \left\{\frac{w_i}{W}, \frac{\tilde{w}_i}{\tilde{W}}\right\},\quad
v_i = \frac{w_i}{W} - u_i,\quad \tilde{v}_i = \frac{\tilde{w}_i}{\tilde{W}} - u_i.
\end{equation}
In \eqref{eqn:ui},
we have $\sum_{i=1}^n u_i+\sum_{i=1}^n v_i = \sum_{i=1}^n u_i+\sum_{i=1}^n \tilde{v}_i=1$, and $\min\{v_i,\tilde{v}_i\}=0$ for all $i$.
After that, we select component $I$ for chain $\Phi$ and select component $\tilde{I}$ for chain $\tilde{\Phi}$. With probability $u_i$, we select the same component $I=\tilde{I}=i$ for both chains. If the same component is not selected, then for chain $\Phi$, we select the $i$-th component with the remaining probability $v_i$, and for chain $\tilde{\Phi}$, we select the $i$-th component with the remaining probability $\tilde{v}_i$.

As illustrated in Figure~\ref{fig:maximialcouplingofmultionomial}, we first draw a random number $U\sim\mathrm{Uniform}(0,1)$. If $U\le\sum_{j=1}^{n}u_j$, we find the unique index $i$ such that $U\in(\sum_{j=1}^{i-1}u_j,\sum_{j=1}^{i}u_j],$ and set $I=\tilde I=i$, i.e., both chains therefore choose the same component. Otherwise $U>\sum_{j=1}^{n}u_j$, we then define the residual value $U':=U-\sum_{j=1}^{n}u_j\in(0,1-\sum_{j=1}^{n}u_j]$, locating $i,\tilde{i}$ such that $U'\in(\sum_{k=1}^{i-1}v_k,\sum_{k=1}^{i}v_k]$, $U'\in(\sum_{k=1}^{\tilde{i}-1}\tilde{v}_k,\sum_{k=1}^{\tilde{i}}\tilde{v}_k]$ and set $I=i,\tilde{I}=\tilde{i}$.
\begin{figure}
    \centering
    
\begin{tikzpicture}
\draw[fill=blue!30] (0,0) rectangle (1.5,1) node[midway] {$u_1$};
\draw[fill=blue!30] (1.5,0) rectangle (2.5,1) node[midway] {$u_2$};
\draw[fill=blue!30] (2.5,0) rectangle (3.5,1) node[midway] {$\cdots$};
\draw[fill=blue!30] (3.5,0) rectangle (5,1) node[midway] {$u_n$};
\draw[fill=red!30] (5,0) rectangle (6.4,1) node[midway] {$v_1$};
\draw[fill=red!30] (6.4,0) rectangle (7.2,1) node[midway] {$v_4$};
\draw[fill=red!30] (7.2,0) rectangle (8.5,1) node[midway] {$\cdots$};
\draw[fill=red!30] (8.5,0) rectangle (10.0,1) node[midway] {$v_{n-1}$};

\draw[fill=blue!30] (0,-1.5) rectangle (1.5,-0.5) node[midway] {$u_1$};
\draw[fill=blue!30] (1.5,-1.5) rectangle (2.5,-0.5) node[midway] {$u_2$};
\draw[fill=blue!30] (2.5,-1.5) rectangle (3.5,-0.5) node[midway] {$\cdots$};
\draw[fill=blue!30] (3.5,-1.5) rectangle (5,-0.5) node[midway] {$u_n$};
\draw[fill=green!30] (5,-1.5) rectangle (5.6,-0.5) node[midway] {$\tilde{v}_2$};
\draw[fill=green!30] (5.6,-1.5) rectangle (6.9,-0.5) node[midway] {$\tilde{v}_3$};
\draw[fill=green!30] (6.9,-1.5) rectangle (8.3,-0.5) node[midway] {$\cdots$};
\draw[fill=green!30] (8.3,-1.5) rectangle (10.0,-0.5) node[midway] {$\tilde{v}_n$};

\draw[->] (6.0, 1.25) -- (6.37, 1.05) node[midway, above] {$v_2$};
\draw[->] (6.8, 1.25) -- (6.43, 1.05) node[midway, above] {$v_3$};
\draw[->] (10, 1.25) -- (10, 1.05) node[midway, above] {$v_n$};
\draw[->] (5.0, -2) -- (5.0, -1.55) node[midway, below] {$\tilde{v}_1$};
\draw (-0.5, 0.5) node {$\Phi:$};
\draw (-0.5, -1.0) node {$\tilde{\Phi}:$};
\end{tikzpicture}
    \caption{Maximal coupling of $\mathrm{Multinomial}(w)$ and $\mathrm{Multinomial}(\widetilde{w})$.}
    \label{fig:maximialcouplingofmultionomial}
\end{figure}

4. Calculate the acceptance probability
\begin{equation}
\alpha=\min\left\{ 1, \frac{W}{W+w_0-w_I} \right\},\quad 
\tilde{\alpha}=\min\left\{ 1, \frac{\tilde{W}}{\tilde{W}+\tilde{w}_0-\tilde{w}_I} \right\},
\end{equation}
and generate $\xi\sim\mathrm{Uniform}(0,1)$. 
For chain $\Phi$, set new state $x^{(t+1)}$ by replacing $x^{(t)}_I$ with $y$ if $\xi<\alpha$. For chain $\tilde{\Phi}$, set new state $\tilde{x}^{(t+1)}$ by replacing $\tilde{x}^{(t)}_{\tilde{I}}$ with $y$ if $\xi<\tilde{\alpha}$.

We now provide the following property of the coupling kernel.
\begin{lemma}\label{lem:couple_distance}
For any current state $(x,\tilde{x})$, let $(x^\prime,\tilde{x}^\prime)$ be samples from the coupled SOMA kernel $P_{\mathrm{SOMA-coupled}}$.
Under $\epsilon$-DP, we must have 
\begin{align}
&\quad P_{\mathrm{SOMA-coupled}}\left(d(x^\prime, \tilde{x}^\prime) = d(x,\tilde{x}) - 1 \mid x, \tilde{x}\right) \notag\\
&\geq  \frac{n}{n+M-1} \cdot \frac{d(x,\tilde{x})}{d(x,\tilde{x}) + e^\epsilon (n-d(x,\tilde{x}))}.
\end{align}
As a comparison, for the coupled Ran-IMwG kernel $P_{\mathrm{Ran-coupled}}$, the bound is given by
\begin{equation}
P_{\mathrm{Ran-coupled}}\left(d(x^\prime, \tilde{x}^\prime) = d(x,\tilde{x}) - 1 \mid x, \tilde{x}\right) \geq  \frac{1}{e^\epsilon} \cdot \frac{d(x,\tilde{x})}{n}.
\end{equation}
\end{lemma}

\begin{proof}[Proof of Lemma~\ref{lem:couple_distance}]
Let the set of coupled indices be $\mathcal{I} = \{i:x_i = \tilde{x}_i\}$.
In the coupling construction, if $I$ and $\tilde{I}$ are both on uncoupled positions and both chains accept the proposed component $y$, then the distance between the chains will reduce by 1.  This argument leads to 
{\allowdisplaybreaks
\begin{align*}
&\quad P_{\mathrm{SOMA-coupled}}\left(d(x^\prime, \tilde{x}^\prime) = d(x,\tilde{x}) - 1 \mid x, \tilde{x}\right)  \\
& = \int_{\mathbb{X}}  \sum_{i\in\mathcal{I}^c} \sum_{j\in\mathcal{I}^c} \mathbb{P}(I = i, \tilde{I} = j \mid x,\tilde{x}, y) \cdot \min(\alpha_i, \tilde{\alpha}_j; x, \tilde{x}, y)  f(y\mid\theta) \mathrm{d}y \\
& \geq \frac{n}{n+e^\epsilon-1} \cdot \mathbb{P}(I \in \mathcal{I}^c, \tilde{I} \in \mathcal{I}^c \mid x, \tilde{x},y)\\
&= \frac{n}{n+e^\epsilon-1} \cdot \min\left( \frac{\sum_{i\in \mathcal{I}^c} w_i}{\sum_{i=1}^n w_i}, \frac{\sum_{i\in \mathcal{I}^c} \tilde{w_i}}{\sum_{i=1}^n \tilde{w}_i} \right) \\
& \geq \frac{n}{n+e^\epsilon-1} \cdot \frac{d(x,\tilde{x})}{d(x,\tilde{x}) + e^\epsilon (n-d(x,\tilde{x}))}.
\end{align*}
}
The first inequality uses Theorem~\ref{thm:acc_bound}. 
The second lower bound is achieved when ${w_j}/{w_i} = e^\epsilon$ for all $j \in \mathcal{I}$ and all $i \in \mathcal{I}^c$. We also note that the two index sets have cardinality $d(x,\tilde{x})$ and $n - d(x, \tilde{x})$ respectively.

Similarly, for the coupled RanScan-IMwG kernel, we have
\begin{equation*}
P_{\mathrm{Ran-coupled}}\left(d(x^\prime, \tilde{x}^\prime) = d(x,\tilde{x}) - 1 \mid x, \tilde{x}\right) \ge \inf_{i\in\mathcal{I}^c}\left[\alpha_i^{\mathrm{IMwG}}(x,y)\right] \cdot \mathbb{P}(I = \tilde{I} \in \mathcal{I}^c) = \frac{1}{e^\epsilon} \cdot \frac{d(x,\tilde{x})}{n}.
\end{equation*}
\end{proof}

\begin{lemma}\label{lemma:increase}
For any current state $(x,\tilde{x})$, let $(x^\prime,\tilde{x}^\prime)$ be samples from the coupled SOMA kernel $P_{\mathrm{SOMA-coupled}}$. Under $\epsilon$-DP, we must have 
\begin{align}
&\quad P_{\mathrm{SOMA-coupled}}\left(d(x^\prime, \tilde{x}^\prime) = d(x,\tilde{x}) + 1 \mid x, \tilde{x}\right) \notag\\
& \leq \frac{(n-d)e^\epsilon}{(n-d)e^\epsilon+d} - \frac{n}{n+e^\epsilon-1}\cdot \frac{(n-d)}{(n-d)+de^\epsilon},
\end{align}
where we use $d=d(x,\tilde{x})$ for simplicity.

Specifically, for the case when $n=2$ and $d(x,\tilde{x})=1$, we have
\begin{equation}
P_{\mathrm{SOMA-coupled}}\left(d(x^\prime, \tilde{x}^\prime) = d(x,\tilde{x}) + 1 \mid x, \tilde{x}\right) \le \frac{e^\epsilon(e^\epsilon-1)}{(e^\epsilon+1)^2}.
\end{equation}
\end{lemma}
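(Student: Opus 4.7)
My plan is to upper bound $P(d' = d+1)$ via its complement. Because a single step of SOMA changes at most one coordinate of $x$ and at most one of $\tilde x$, the permutation-adjusted distance $d(x', \tilde x')$ can differ from $d = d(x, \tilde x)$ by at most $1$, so
\begin{equation*}
  P(d' = d+1) \le 1 - P(d' \le d).
\end{equation*}
I would then lower bound $P(d' \le d)$ by exhibiting two disjoint events $E_1, E_2 \subseteq \{d' \le d\}$ whose probabilities account for the two terms appearing in the stated RHS. Fix a permutation realizing $d(x, \tilde x) = d$ and let $\mathcal{I} = \{i : x_i = \tilde x_i\}$, so $|\mathcal{I}^c| = d$.

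For $E_1$ I would take ``both chains select an uncoupled coordinate,'' i.e.\ $\{I \in \mathcal{I}^c\} \cap \{\tilde I \in \mathcal{I}^c\}$. On this event neither chain modifies a coupled position, so $d$ cannot increase. Invoking the same maximal-coupling argument used in the proof of Lemma~\ref{lem:couple_distance} together with Assumption~\ref{assump:ratio_bound} gives
\begin{equation*}
  P(E_1) \ge \min\!\left(\sum_{i \in \mathcal{I}^c} \frac{w_i}{W},\ \sum_{i \in \mathcal{I}^c} \frac{\tilde w_i}{\tilde W}\right) \ge \frac{d}{d + M(n-d)},
\end{equation*}
where the second inequality uses the adversarial weight configuration bounded by $M$. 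Note that $1 - d/(d + M(n-d)) = (n-d)M / ((n-d)M + d)$, which matches the first term of the claimed bound.

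For $E_2$ I would take $\{I = \tilde I \in \mathcal{I}\} \cap \{\text{both chains accept}\}$. On this event both chains replace coordinate $I$ by the common proposal $y$, so coordinate $I$ remains coupled and $d' = d$. The maximal coupling of multinomials yields $P(I = \tilde I \in \mathcal{I}) \ge (n-d)/((n-d) + dM)$ by the same weight-ratio bound. Moreover, the proof of Theorem~\ref{thm:acc_bound} actually shows pointwise that $\alpha_i^{\mathrm{SOMA}}(y, x) \ge n/(n+M-1)$ for every $y, x, i$; since Algorithm~\ref{alg:soma_coupling} uses a common $U \sim \mathrm{Uniform}(0,1)$, we have $P(\text{both accept} \mid y, I, \tilde I) = \min(\alpha_I, \tilde \alpha_{\tilde I}) \ge n/(n+M-1)$. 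Combining,
\begin{equation*}
  P(E_2) \ge \frac{n}{n+M-1} \cdot \frac{n-d}{(n-d) + dM}.
\end{equation*}
Since $E_1$ and $E_2$ are disjoint (they specify $I \in \mathcal{I}^c$ vs.\ $I \in \mathcal{I}$), summing their lower bounds and plugging into $P(d'=d+1) \le 1 - P(E_1) - P(E_2)$ produces the general bound.

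For the sharper bound in the special case $n = 2$, $d = 1$, I would switch to direct enumeration: with only one coupled index and one uncoupled index, the event $\{d' = 2\}$ requires that the coupled coordinate becomes uncoupled, which happens only when at least one chain accepts an update at that position while the other does not make a matching update. There are only four configurations of $(I, \tilde I) \in \{1,2\}^2$ and a handful of accept/reject outcomes coordinated by the shared $U$; writing them out with the explicit weights and acceptance probabilities from \eqref{eqn:acceptprob}, then optimizing over weight configurations constrained by Assumption~\ref{assump:ratio_bound}, should yield $M(M-1)/(M+1)^2$. The main obstacle I anticipate is the $E_1$ step in the general-$n$ argument: the claim $P(I \in \mathcal{I}^c,\ \tilde I \in \mathcal{I}^c) \ge \min(\beta, \tilde\beta)$ is an event-level maximal-coupling bound that does not automatically follow from the index-level maximal coupling of Algorithm~\ref{alg:soma_coupling}, so lower-bounding $P(E_1)$ rigorously likely requires either a careful reordering of the residual in the construction (so that $\mathcal{I}^c$-indices pair within $\mathcal{I}^c$) or a weakening to the diagonal subevent $\{I = \tilde I \in \mathcal{I}^c\}$ while re-checking that the resulting bound is still of the required form.
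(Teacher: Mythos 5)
Your decomposition is genuinely different from the paper's: you lower-bound the two ``good'' events $E_1=\{I\in\mathcal{I}^c,\ \tilde I\in\mathcal{I}^c\}$ and $E_2=\{I=\tilde I\in\mathcal{I},\text{ both accept}\}$ and pass to the complement, whereas the paper directly enumerates the ``bad'' configurations of $(I,\tilde I)$ and bounds their total mass by $\sum_{i\in\mathcal{I}}(u_i+\tilde v_i)-\frac{n}{n+M-1}\sum_{i\in\mathcal{I}}u_i=\frac{\sum_{i\in\mathcal{I}}\tilde w_i}{\tilde W}-\frac{n}{n+M-1}\sum_{i\in\mathcal{I}}u_i$. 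The two routes yield the same formula, but they are not equally robust, and the difference sits exactly at the point you flag yourself.

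The gap is in $P(E_1)$: you need the \emph{event-level} maximal-coupling inequality $\mathbb{P}(I\in\mathcal{I}^c,\tilde I\in\mathcal{I}^c\mid y)\ge\min\bigl(\sum_{i\in\mathcal{I}^c}w_i/W,\ \sum_{i\in\mathcal{I}^c}\tilde w_i/\tilde W\bigr)$, and this genuinely fails for the index-ordered residual stacking of Algorithm~\ref{alg:soma_coupling} once $n\ge3$. For instance, with residuals $v=(c,0,0,c)$, $\tilde v=(0,c,c,0)$ and $\mathcal{I}^c=\{1,3\}$, the residual branch never places both indices in $\mathcal{I}^c$ (the overlap pairs $1$ with $2$ and $4$ with $3$), even though each marginal puts residual mass $c$ on $\mathcal{I}^c$. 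Your proposed fallback of retreating to the diagonal subevent $\{I=\tilde I\in\mathcal{I}^c\}$ only yields $\sum_{i\in\mathcal{I}^c}u_i\ge d/(1+(n-1)M)$, which is strictly smaller than the required $d/(d+M(n-d))$ whenever $d\ge2$, so it does not recover the stated constant. The paper sidesteps the issue entirely: after relabeling so that $\mathcal{I}^c=\{1,\dots,d\}$ and assuming WLOG $\sum_{i\le d}v_i\ge\sum_{i\le d}\tilde v_i$, every distance-increasing configuration forces $\tilde I\in\mathcal{I}$, so the bad mass is controlled by the single-chain marginal $\sum_{i\in\mathcal{I}}\tilde w_i/\tilde W\le\frac{(n-d)M}{(n-d)M+d}$ minus the diagonal correction $\frac{n}{n+M-1}\sum_{i\in\mathcal{I}}u_i$; no joint off-diagonal probability over $\mathcal{I}^c$ is ever needed. (Your $E_2$ step, $\sum_{i\in\mathcal{I}}u_i\ge(n-d)/((n-d)+dM)$, is the same claim the paper itself makes, and it is immediate in the $n=2$, $d=1$ case that Theorem~\ref{thm:convergence_rate} actually uses.) Your plan for the sharper $n=2$ bound by direct enumeration of the four $(I,\tilde I)$ configurations is sound and is essentially what the paper does.
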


\begin{proof}\label{proof:increase}
We start by considering the general case. Using the same notation as in Section~\ref{sec:coupling}, suppose that the first $d$ elements of $x$ and $\tilde{x}$ differ, i.e., $x_i\neq\tilde{x}_i$ for $i=1,\ldots,d$ and $x_j=\tilde{x}_j$ for $j=d+1,\ldots,n$. Without loss of generality, assume that $\sum_{i=1}^d v_i \ge \sum_{i=1}^d \tilde{v}_i$. Now consider the following cases:

\noindent \textbf{Case (a)}. $I = \tilde{I} \in \{1,\ldots, d\}$. This case does not increase the coupling distance.

\noindent \textbf{Case (b)}. $I = \tilde{I} \in \{d+1,\ldots, n\}$. This case can increase the coupling distance only when one chain accepts the update while the other rejects it. The probability of this occurring is given by $|\alpha_I - \tilde{\alpha}_{\tilde{I}}|$.

\noindent \textbf{Case (c)}. $I \in \{1,\ldots, d\}, \tilde{I} \in \{1,\ldots, d\}, I \neq \tilde{I}$. This case does not increase the coupling distance.

\noindent \textbf{Case (d)}. $I \in \{1,\ldots, d\}, \tilde{I} \in \{d+1,\ldots, n\}$. This case increases the coupling distance only if the update to chain $\Phi$ is rejected while the update to chain $\tilde{\Phi}$ is accepted. The probability is given by $\max\{\tilde{\alpha}_{\tilde{I}} - \alpha_I, 0\}$

\noindent \textbf{Case (e)}. $I \in \{d+1,\ldots, n\}, \tilde{I} \in \{1,\ldots, d\}$. This case cannot occur, due to our assumption that $\sum_{i=1}^d v_i \ge \sum_{i=1}^d \tilde{v}_i$.

\noindent \textbf{Case (f)}. $I \in \{d+1,\ldots, n\}, \tilde{I} \in \{d+1,\ldots, n\}, I \neq \tilde{I}$. In this case, accepting an update in either chain increases the coupling distance. The corresponding probability is $\max\{\alpha_I, \tilde{\alpha}_{\tilde{I}}\}$.

\begin{figure}
    \centering
    
\begin{tikzpicture}
\draw[fill=blue!30] (0,0) rectangle (1.0,1) node[midway] {$u_1$};
\draw[fill=blue!30] (1.0,0) rectangle (2.0,1) node[midway] {$\cdots$};
\draw[fill=blue!30] (2.0,0) rectangle (3.0,1) node[midway] {$u_d$};
\draw[fill=blue!30] (3.0,0) rectangle (4.0,1) node[midway] {$u_{d+1}$};
\draw[fill=blue!30] (4.0,0) rectangle (5.0,1) node[midway] {$\cdots$};
\draw[fill=blue!30] (5.0,0) rectangle (6.0,1) node[midway] {$u_n$};
\draw[fill=red!30] (6.0,0) rectangle (9.0,1) node[midway] {$v_1,\ldots,v_d$};
\draw[fill=red!30] (9.0,0) rectangle (11.0,1) node[midway] {$v_{d+1},\ldots,v_n$};

\draw[fill=blue!30] (0,-1.5) rectangle (1.0,-0.5) node[midway] {$u_1$};
\draw[fill=blue!30] (1.0,-1.5) rectangle (2.0,-0.5) node[midway] {$\cdots$};
\draw[fill=blue!30] (2.0,-1.5) rectangle (3.0,-0.5) node[midway] {$u_d$};
\draw[fill=blue!30] (3.0,-1.5) rectangle (4.0,-0.5) node[midway] {$u_{d+1}$};
\draw[fill=blue!30] (4.0,-1.5) rectangle (5.0,-0.5) node[midway] {$\cdots$};
\draw[fill=blue!30] (5.0,-1.5) rectangle (6.0,-0.5) node[midway] {$u_n$};
\draw[fill=green!30] (6.0,-1.5) rectangle (8.0,-0.5) node[midway] {$\tilde{v}_1,\ldots,\tilde{v}_d$};
\draw[fill=green!30] (8.0,-1.5) rectangle (11.0,-0.5) node[midway] {$\tilde{v}_{d+1},\ldots,\tilde{v}_n$};

\draw[decorate, decoration = {brace,mirror,amplitude=5pt}] (0.05,-1.7) -- (2.95,-1.7) node[midway,above=-25pt] {Case (a)};
\draw[decorate, decoration = {brace,mirror,amplitude=5pt}] (3.05,-1.7) -- (5.95,-1.7) node[midway,above=-25pt] {Case (b)};
\draw[decorate, decoration = {brace,mirror,amplitude=5pt}] (6.05,-1.7) -- (7.95,-1.7) node[midway,above=-25pt] {Case (c)};
\draw[decorate, decoration = {brace,mirror,amplitude=5pt}] (8.05,-1.7) -- (8.95,-1.7) node[midway,above=-25pt] {Case (d)};
\draw[decorate, decoration = {brace,mirror,amplitude=5pt}] (9.05,-1.7) -- (10.95,-1.7) node[midway,above=-25pt] {Case (f)};

\draw (-0.5, 0.5) node {$\Phi:$};
\draw (-0.5, -1.0) node {$\tilde{\Phi}:$};
\end{tikzpicture}
    \caption{Possible subscript choices for coupled chains.}
    \label{fig:subscripts_choices}
\end{figure}

Let $\mathcal{I}=\{d+1,\ldots,n\}$ be the coupled index, then we have:
{\allowdisplaybreaks
\begin{align}
&\quad P_{\mathrm{SOMA-coupled}}\left(d(x^\prime, \tilde{x}^\prime) = d(x,\tilde{x}) + 1 \mid x, \tilde{x}\right)  \notag\\
& = \int_{\mathbb{X}} \sum_{i\in\mathcal{I}} \mathbb{P}(I=\tilde{I}=i \mid x, \tilde{x}, y) \cdot (|\alpha_i - \tilde{\alpha}_j|; x, \tilde{x}, y) f(y\mid\theta) \mathrm{d}y \quad (\text{Case (b)}) \notag \\
& \quad + \int_{\mathbb{X}} \sum_{i\in\mathcal{I}^c} \sum_{j\in\mathcal{I}} \mathbb{P}(I=i, \tilde{I}=j \mid x, \tilde{x}, y) \cdot (\max\{\tilde{\alpha}_j - \alpha_i, 0\}; x, \tilde{x}, y) f(y\mid\theta) \mathrm{d}y \quad (\text{Case (d)}) \notag \\
& \quad + \int_{\mathbb{X}} \sum_{i\in\mathcal{I}} \sum_{j\in\mathcal{I}} \mathbb{P}(I=i, \tilde{I}=j, i \neq j \mid x, \tilde{x}, y) \cdot (\max\{\alpha_i, \tilde{\alpha}_j\}; x, \tilde{x}, y) f(y\mid\theta) \mathrm{d}y. \quad (\text{Case (f)})\label{eq:d_plus_2}
\end{align}
}
Based on the fact that $\max\{\tilde{\alpha}_j - \alpha_i, 0\} \le \max\{\alpha_i, \tilde{\alpha}_j\} \le 1$, we have the following control:
{\allowdisplaybreaks
\begin{align}
&\quad P_{\mathrm{SOMA-coupled}}\left(d(x^\prime, \tilde{x}^\prime) = d(x,\tilde{x}) + 1 \mid x, \tilde{x}\right)  \notag\\
& \le \int_{\mathbb{X}} \sum_{i=d+1}^n u_i \cdot (|\alpha_i - \tilde{\alpha}_i|; x, \tilde{x}, y) f(y\mid\theta) \mathrm{d}y + \int_{\mathbb{X}} \sum_{i=d+1}^n \tilde{v}_i \cdot 1 \cdot f(y\mid\theta)\mathrm{d}y \notag \\
& \le \int_{\mathbb{X}} \left[ \left(1-\frac{n}{n+e^\epsilon-1}\right) \sum_{i=d+1}^n u_i + \sum_{i=d+1}^n \tilde{v}_i \right] f(y\mid\theta)\mathrm{d}y \notag \\
& = \int_{\mathbb{X}} \sum_{i=d+1}^n (\tilde{v}_i+u_i) f(y\mid\theta)\mathrm{d}y -\frac{n}{n+e^\epsilon-1} \int_{\mathbb{X}} \sum_{i=d+1}^n u_i f(y\mid\theta)\mathrm{d}y \notag\\
& = \int_{\mathbb{X}} \frac{\sum_{i=d+1}^n \tilde{w}_i}{\tilde{W}} f(y\mid\theta)\mathrm{d}y -\frac{n}{n+e^\epsilon-1} \int_{\mathbb{X}} \sum_{i=d+1}^n u_i f(y\mid\theta)\mathrm{d}y
.\label{eq:d_plus_3}
\end{align}
}
Now, under the $\epsilon$-DP, $\frac{\sum_{i=d+1}^n w_i}{W}$ has the maximum value $\frac{(n-d)e^\epsilon}{(n-d)e^\epsilon+d}$ when $w_1=w_2=\ldots=w_d$ and $e^\epsilon w_1=w_{d+1}=w_{d+2}=\ldots=w_n$. As for $\frac{\sum_{i=d+1}^n w_i}{W}$, it has the minimum value $\frac{(n-d)}{(n-d)+de^\epsilon}$ when $\tilde{w}_1=\tilde{w}_2=\ldots=\tilde{w}_d$ and $\tilde{w}_1/e^\epsilon=\tilde{w}_{d+1}=\tilde{w}_{d+2}=\ldots=\tilde{w}_n$. In this case, $\sum_{i=d+1}^n u_i=\sum_{i=d+1}^n \min\left\{ \frac{w_i}{W},\frac{\tilde{w_i}}{\tilde{W}} \right\}$ has the minimun value $\frac{(n-d)}{(n-d)+de^\epsilon}$. It then gives
\begin{equation*}
P_{\mathrm{SOMA-coupled}}\left(d(x^\prime, \tilde{x}^\prime)  d(x,\tilde{x}) + 1 \mid x, \tilde{x}\right) 
\le \frac{(n-d)e^\epsilon}{(n-d)e^\epsilon+d} - \frac{n}{n+e^\epsilon-1}\cdot \frac{(n-d)}{(n-d)+de^\epsilon}.
\end{equation*}

For the case $n=2$ and $d(x,\tilde{x})=1$, Case (f) cannot occur and we have
{\allowdisplaybreaks
\begin{align}
&\quad P_{\mathrm{SOMA-coupled}}\left(d(x^\prime, \tilde{x}^\prime) = d(x,\tilde{x}) + 1 \mid x, \tilde{x}\right)  \notag\\
& \le \int_{\mathbb{X}} u_2 \cdot |\alpha_2 - \tilde{\alpha}_2| \cdot f(y\mid\theta) \mathrm{d}y + \int_{\mathbb{X}} \tilde{v}_2 \cdot \max\{ \tilde{\alpha}_2 - \alpha_1, 0 \} f(y\mid\theta) \mathrm{d}y \notag\\
& \le \left( 1 - \frac{2}{e^\epsilon+1} \right) \int_{\mathbb{X}} u_2 \cdot f(y\mid\theta) \mathrm{d}y + \left( 1 - \frac{2}{e^\epsilon+1} \right) \int_{\mathbb{X}} \tilde{v}_2 \cdot f(y\mid\theta) \mathrm{d}y \notag\\
& = \left( 1 - \frac{2}{e^\epsilon+1} \right) \int_{\mathbb{X}} \frac{\tilde{w}_2}{\tilde{W}} \cdot f(y\mid\theta) \mathrm{d}y \notag\\
& \le \left( 1 - \frac{2}{e^\epsilon+1} \right) \frac{e^\epsilon}{e^\epsilon+1} \notag\\
& = \frac{e^\epsilon(e^\epsilon-1)}{(e^\epsilon+1)^2}
.\label{eq:d_plus_n_eq_2}
\end{align}
}
\end{proof}

\begin{proof}[Proof of Theorem~\ref{thm:convergence_rate}]
Define $d_t = d(x^{(t)}, \tilde{x}^{(t)})$ as the number of differing components at time $t$. The sequence $\mathcal{D} := \{d_t\}_{t=0}^\infty$ forms a Markov chain with discrete state space $\{0, 1, \dots, n\}$. The meeting time of the two chains is 
$\tau = \min \{t : d_t = 0\}.$
For the chain $\mathcal{D}$, let $P_d = (p_{i,j})_{n \times n}$ denote its transition matrix, where $p_{i,j} := \mathbb{P}(d_{t+1} = j \mid d_t = i)$. For our proposed coupling method in Algorithm 2, the matrix $P_d$ for the case $n = 2$ is given by:
\begin{equation}
P_d = \begin{pmatrix}
1 & 0 & 0 \\
p_{10} & p_{11} & p_{12} \\
0 & p_{21} & p_{22}
\end{pmatrix}.
\end{equation}
Our strategy is to find upper bound for $p_{12}$ and lower bounds for $p_{10}$ and $p_{21}$.These bounds lead to a kernel $Q_d$ that stochastically dominate $P_d$:
\begin{equation}
Q_d := \begin{pmatrix}
1 & 0 & 0 \\
q_{10} \le p_{10} & q_{11} & q_{12} \ge p_{12} \\
0 & q_{21} \le p_{21} & q_{22} \ge p_{22}
\end{pmatrix}.
\end{equation}The absorbing time of $Q$ is an upper bound on the absorbing time of $P$, which is the meeting time of our coupled Markov chains.

\textbf{Convergence rate of SOMA.}
Next, we find the lower bounds for $p_{10}$ and $p_{21}$. 
Based on Lemma~\ref{lem:couple_distance} with $n=2$, we have
\begin{equation*}
p_{10} \ge \frac{2}{2+e^\epsilon-1}\frac{1}{1+e^\epsilon} = \frac{2}{(e^\epsilon+1)^2},
\end{equation*}
and
\begin{equation*}
p_{21} \ge \frac{2}{2+e^\epsilon-1}\frac{2}{2} = \frac{2}{e^\epsilon+1}.
\end{equation*}

For the upper bound of $p_{12}$, Lemma~\ref{lemma:increase} gives $p_{12} \le \frac{e^\epsilon(e^\epsilon-1)}{(e^\epsilon+1)^2}$.

We define the matrix $Q_d$ with row sums equal to 1, representing the stochastic dominance of $P_d$:
\begin{equation}
Q_d := \begin{pmatrix}
1 & 0 & 0 \\
\frac{2}{(e^\epsilon+1)^2} & \frac{3e^\epsilon-1}{(e^\epsilon+1)^2} & \frac{e^\epsilon(e^\epsilon-1)}{(e^\epsilon+1)^2} \\
0 & \frac{2}{e^\epsilon+1} & \frac{e^\epsilon-1}{e^\epsilon+1}
\end{pmatrix}.
\end{equation}
For $Q_d$, the largest eigenvalue is $\lambda_1 = 1$ and the second largest eigenvalue is:
\begin{equation}
\lambda_2 = \frac{e^{2\epsilon} + 3e^\epsilon - 2 + \sqrt{e^{4\epsilon} + 2e^{3\epsilon} + 9e^{2\epsilon} - 8e^{\epsilon}}}{2(1 + e^\epsilon)^2}.
\end{equation}
Thus, we have $\mathbb{P}(\tau > t) \le C \lambda_2^t$, providing the upper bound for the convergence rate of the SOMA algorithm.

\textbf{Convergence rate of RanScan-IMwG.} Let
\begin{equation}
P_d^{\mathrm{RAN}} = \begin{pmatrix}
1 & 0 & 0 \\
p_{10}^{\mathrm{RAN}} & p_{11}^{\mathrm{RAN}} & p_{12}^{\mathrm{RAN}} \\
0 & p_{21}^{\mathrm{RAN}} & p_{22}^{\mathrm{RAN}}
\end{pmatrix}.
\end{equation}
denote its transition matrix. 
Based on Lemma~\ref{lem:couple_distance} with $n=2$, we have
\begin{equation*}
p_{10}^{\mathrm{RAN}}\ge\frac{1}{2e^\epsilon}, \quad
p_{21}^{\mathrm{RAN}}\ge\frac{1}{e^\epsilon},
\end{equation*}
and
\begin{equation*}
p_{12}^{\mathrm{RAN}}\le \frac12 \mathbb{P}(\alpha < \xi < \tilde{\alpha} \ \mathrm{or}\ \tilde{\alpha} < \xi < \alpha) \le \frac12 \left( 1 - \frac{1}{e^\epsilon}\right).
\end{equation*}
Thus, we define the corresponding stochastic dominance matrix as
\begin{equation}
Q_d^{\mathrm{RAN}} := \begin{pmatrix}
1 & 0 & 0 \\
\frac{1}{2e^\epsilon} & \frac{1}{2} & \frac12 - \frac{1}{2e^\epsilon} \\
0 & \frac{1}{e^\epsilon} & 1-\frac{1}{e^\epsilon}
\end{pmatrix},
\end{equation}
The second largest eigenvalue of $Q_d^{\mathrm{RAN}}$ is then
\begin{equation}
\lambda_2^{\mathrm{RAN}} = \frac{3e^\epsilon-2+\sqrt{e^{2\epsilon}+4e^\epsilon-4}}{4e^\epsilon}.
\end{equation}

\textbf{Convergence rate of SysScan-IMwG.} Since this algorithm updates components in a fixed order, the indices chosen for updating in the two coupled chains are always identical. For $n=2$, we view every two consecutive iterations as one overall update and denote the new time index by $h=2t$. Let
\begin{equation}
P_h^{\mathrm{SYS}} = \begin{pmatrix}
1 & 0 & 0 \\
p_{10}^{\mathrm{SYS}} & p_{11}^{\mathrm{SYS}} & p_{12}^{\mathrm{SYS}} \\
p_{20}^{\mathrm{SYS}} & p_{21}^{\mathrm{SYS}} & p_{22}^{\mathrm{SYS}}
\end{pmatrix}
\end{equation}
be the transition matrix for the chain $\{d_h\}_{h=0}^\infty$. In this case, we have the following inequalities:
\begin{align*}
p_{10}^{\mathrm{SYS}} &\ge \min\{\alpha,\tilde{\alpha}\} \min\{\alpha,\tilde{\alpha}\} \ge \frac{1}{e^{2\epsilon}}, \\
p_{20}^{\mathrm{SYS}} &\ge \min\{\alpha,\tilde{\alpha}\} \min\{\alpha,\tilde{\alpha}\} \ge \frac{1}{e^{2\epsilon}},
\end{align*}
and
\begin{align*}
p_{12}^{\mathrm{SYS}} \le \min\{1-\alpha,1-\tilde{\alpha}\} \min\{1-\alpha,1-\tilde{\alpha}\} \ge \left(1 - \frac{1}{e^\epsilon}\right)^2, \\
p_{22}^{\mathrm{SYS}} \le \min\{1-\alpha,1-\tilde{\alpha}\} \min\{1-\alpha,1-\tilde{\alpha}\} \ge \left(1 - \frac{1}{e^\epsilon}\right)^2.
\end{align*}
Accordingly, the stochastic dominance matrix is defined as
\begin{equation}
Q_h^{\mathrm{SYS}} := \begin{pmatrix}
1 & 0 & 0 \\
\frac{1}{e^{2\epsilon}} & 1-\frac{1}{e^{2\epsilon}}-\left(1 - \frac{1}{e^\epsilon}\right)^2 & \left(1 - \frac{1}{e^\epsilon}\right)^2 \\
\frac{1}{e^{2\epsilon}} & 1-\frac{1}{e^{2\epsilon}}-\left(1 - \frac{1}{e^\epsilon}\right)^2 & \left(1 - \frac{1}{e^\epsilon}\right)^2
\end{pmatrix}.
\end{equation}
The second largest eigenvalue of $Q_h^{\mathrm{SYS}}$ is
\begin{equation*}
\lambda_2^{\mathrm{SYS}} = \frac{(e^\epsilon+1)(e^\epsilon-1)}{e^{2\epsilon}}.
\end{equation*}
Since the updates are performed every two iterations, we have
$\mathbb{P}(\tau > 2t) \le C \left(\lambda_2^{\mathrm{SYS}}\right)^t$,
or equivalently,
$\mathbb{P}(\tau > t) \le C \left(\sqrt{\lambda_2^{\mathrm{SYS}}}\right)^t$.
Thus, the convergence rate satisfies
\begin{equation}
r_{\mathrm{SYS}} \le \frac{\sqrt{(e^\epsilon+1)(e^\epsilon-1)}}{e^\epsilon}.
\end{equation}
This completes the proof.
\end{proof}

\section{EXPERIMENTAL DETAILS AND FURTHER RESULTS}\label{appendix:exp_detail}

\subsection{Data Imputation from Perturbed Histograms}\label{sec:exp_hist}
We apply SOMA to data imputation from perturbed histograms~\citep{wasserman2010statistical}. 
This is a specific example of \Cref{example:perturbedhistogram}.
Our experiments use proposal $f(x) = 1$ for $x \in \mathbb{X} = (0,1)$ and privacy level $\epsilon = 5$, and $m = 10$ equidistant bins. We vary the number of observations from $n = 2$ to $n = 60$. 

\begin{figure*}[htbp]
\hspace{0.0\textwidth}
\begin{subfigure}[t]{0.49\textwidth}
    \vspace{0pt}
    \includegraphics[width = 1\textwidth]{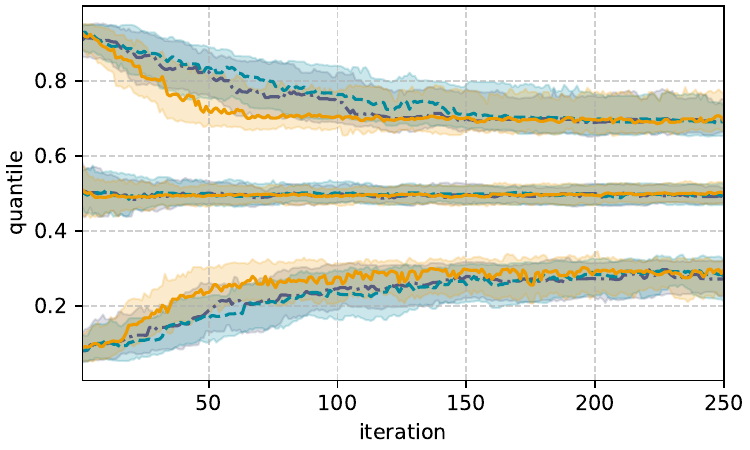}
\end{subfigure}
\hspace{-0.50\textwidth}
\begin{subfigure}[t]{0.02\textwidth}
    \vspace{2pt}
    \textbf{A}
\end{subfigure}
\hspace{0.47\textwidth}
\begin{subfigure}[t]{0.49\textwidth}
    \vspace{0pt}
    \includegraphics[width = 1\textwidth]{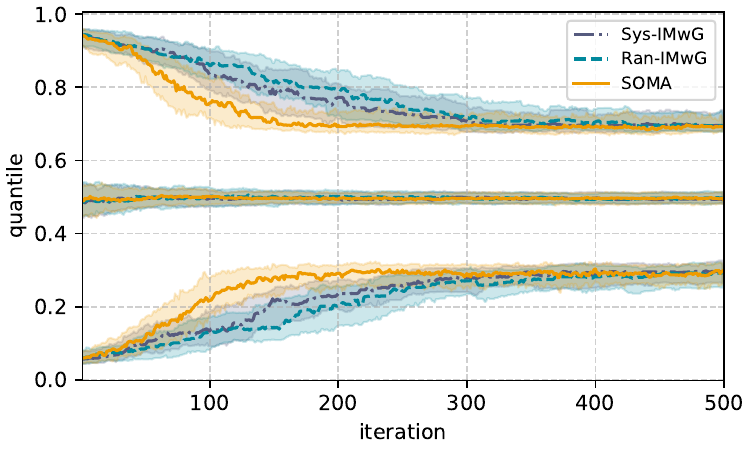}
\end{subfigure}
\hspace{-0.50\textwidth}
\begin{subfigure}[t]{0.02\textwidth}
    \vspace{2pt}
    \textbf{B}  
\end{subfigure}
\caption{
Quantiles of the states.
\textbf{A.} Components $n=20$.
\textbf{B.} Components $n=50$.
The solid lines represent the median, and the shaded regions indicate the quartiles over 100 repeated runs.
}
\label{fig:phist_quantile}
\end{figure*}
To compare the three samplers, we first compare the traceplot of the quartiles of $X$ given $S_\mathrm{dp}$ for $n=20$ and $n=50$. These traceplots suggest that SOMA converges more rapidly.
\begin{figure*}[ht]
\hspace{0.0\textwidth}
\begin{subfigure}[t]{0.49\textwidth}
    \vspace{0pt}
    \includegraphics[width = 1\textwidth]{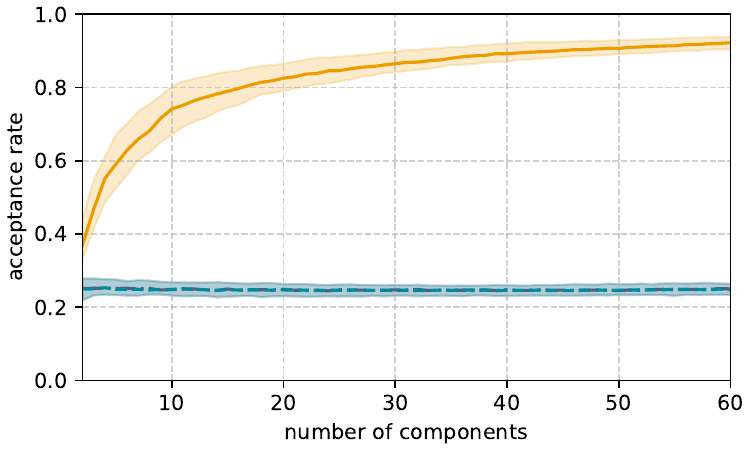}
\end{subfigure}
\hspace{-0.50\textwidth}
\begin{subfigure}[t]{0.02\textwidth}
    \vspace{2pt}
    \textbf{A}
\end{subfigure}
\hspace{0.47\textwidth}
\begin{subfigure}[t]{0.49\textwidth}
    \vspace{0pt}
    \includegraphics[width = 1\textwidth]{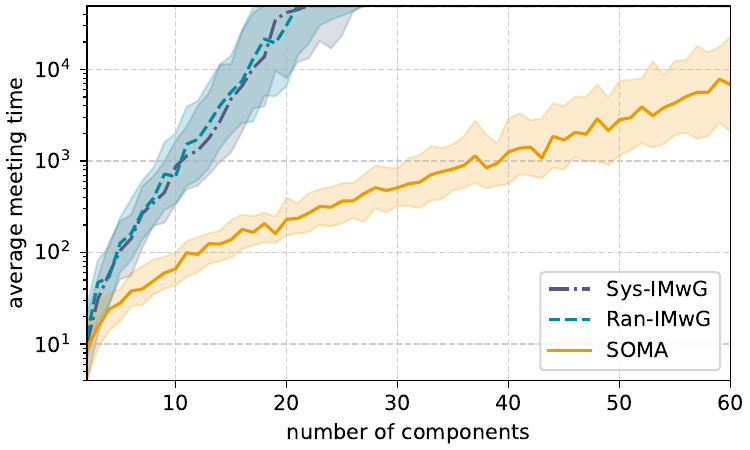}
\end{subfigure}
\hspace{-0.50\textwidth}
\begin{subfigure}[t]{0.02\textwidth}
    \vspace{2pt}
    \textbf{B}  
\end{subfigure}
\caption{
\textbf{A.} Average acceptance rate under $T=50,000$ iterations with different $n$.
\textbf{B.} Average meeting times for the coupled chains with different $n$.
The solid lines represent the median, and the shaded regions indicate the quartiles over 100 repeated runs.
}
\label{fig:phist_diff_n}
\end{figure*}

For a quantitative comparison, we examine their average acceptance rates and meeting times in Figure~\ref{fig:phist_diff_n}.
The advantage of SOMA over the other two methods is more pronounced as the number of observations grows. 
SOMA accepts proposed points more often and can be coupled faster.
In particular, to couple two chains at $n = 20$, SOMA requires only about 1\% of the iterations needed by the other two methods.

\begin{figure*}[t]
\hspace{0.0\textwidth}
\begin{subfigure}[t]{0.49\textwidth}
    \vspace{0pt}
    \includegraphics[width = 1\textwidth]{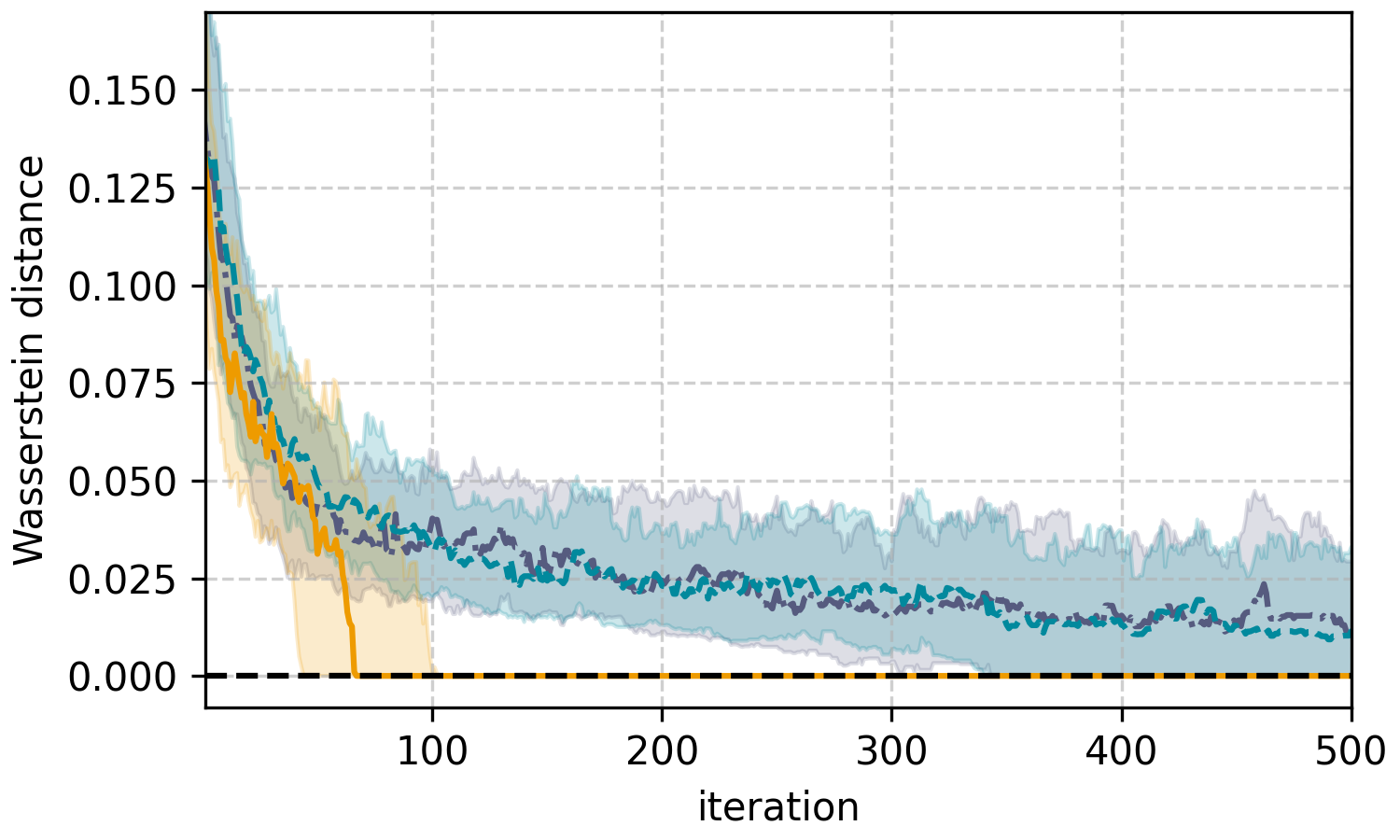}
\end{subfigure}
\hspace{-0.50\textwidth}
\begin{subfigure}[t]{0.02\textwidth}
    \vspace{2pt}
    \textbf{A}
\end{subfigure}
\hspace{0.47\textwidth}
\begin{subfigure}[t]{0.49\textwidth}
    \vspace{0pt}
    \includegraphics[width = 1\textwidth]{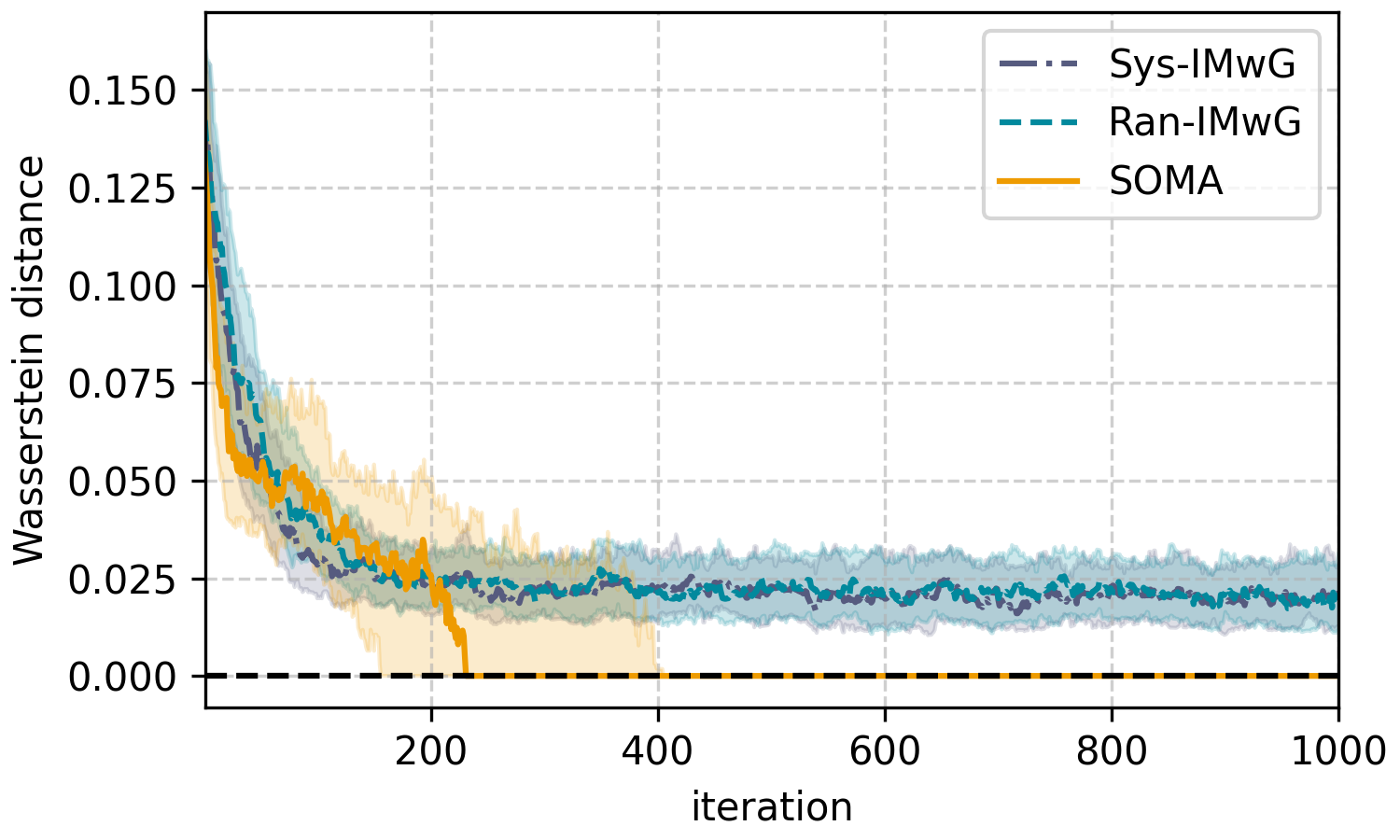}
\end{subfigure}
\hspace{-0.50\textwidth}
\begin{subfigure}[t]{0.02\textwidth}
    \vspace{2pt}
    \textbf{B}  
\end{subfigure}
\caption{
Wasserstein distance between states $X^{(t)}$ and $\tilde{X}^{(t)}$.
The number of components is $n = 10$ in \textbf{A} and $n = 20$ in \textbf{B}.
The solid lines represent the median, and the shaded regions indicate the quartiles over 100 repeated runs.
}
\label{fig:phist_was_dist}
\end{figure*}

In Figure~\ref{fig:phist_diff_n}-B, we insist that the two chains meet when $X = \tilde{X}$ on all components. 
It is possible that the two states $X^{(t)}$ and $\tilde{X}^{(t)}$ have already been `coupled' up to shuffling the indices. 
To test this, we examine the Wasserstein distance~\citep{cuturi2013sinkhorn} between empirical measures defined by the states, i.e., $W_2(X^{(t)}, \tilde{X}^{(t)})$ at $n = 10$ and $n = 20$ in Figure~\ref{fig:phist_was_dist}.
As shown in all three coupling kernels can shrink the Wasserstein distance rapidly during an initial phase. 
However, the coupling kernel of SOMA can keep contracting the two chains closer until they meet while the other two methods cannot further contract $X^{(t)}$ and $\tilde{X}^{(t)}$.
This experiment confirms that SOMA converges faster than the other two methods, as the coupling kernel of SOMA can make the chains meet faster with respect to a distance that is insensitive to component indices.

\subsection{Additional Details on Section~\ref{sec:exp_syn}}\label{appendix:details_syn}

\textbf{Estimating convergence rate with coupling time.}
One can estimate convergence rates $r$ by running independent coupled chains. 
With an empirical survival function of the coupling time $\mathbb{P}(\tau > t)$, 
if we regress $\log \mathbb{P}(\tau > t)$ on $t$, the slope gives an estimate of $\log(r)$.
Figure~\ref{fig:syn_lm} shows the fitted regression lines based on running 100 coupled chains independently with $a_0=b_0=10$ and $\epsilon=20$.
In this case, SOMA converges faster than the other two methods.
Notably, the empirical convergence rate suggests that the theoretical upper bound derived in Theorem~\ref{thm:convergence_rate} has been conservative
and that it is possible to derive sharper bounds for specific target densities.

\begin{figure}[htbp]
\centering
\includegraphics[width=0.5\textwidth]{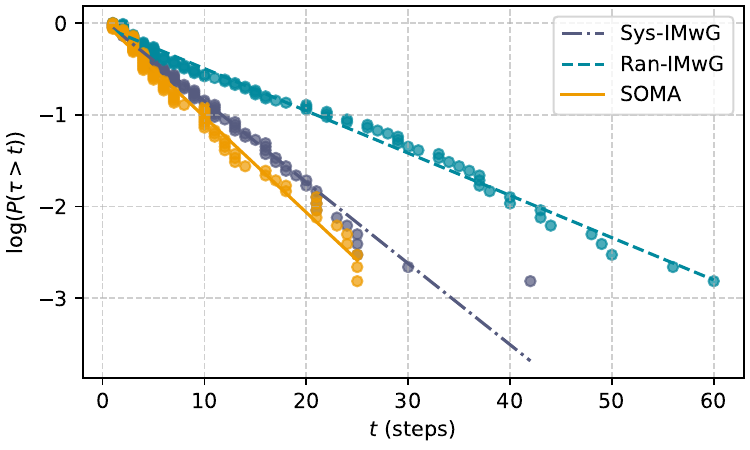}
\caption{Log survival function of the coupling time $\log \mathbb{P}(\tau > t)$ with $\epsilon = 20$.}
\label{fig:syn_lm}
\end{figure}

\textbf{Convergence rate under strong posterior correlation.}
Using coupled chains, we can estimate the convergence rates using the empirical distribution of the coupling time. This enables us to investigate the impact of posterior correlation $\rho_{\pi}(X_1,X_2)$ on convergence speed, illustrated in Figure~\ref{fig:syn_convergence_compare}.
Higher posterior correlation typically indicates a more challenging sampling scenario, 
and indeed all three samplers mix more slowly under stronger posterior correlation. 
In both simulation setups (with or without $\epsilon$-DP), SOMA is the most efficient in the high correlation scenario ($\rho < -0.4$) and Sys-IMwG is the most efficient when $\rho$ is small.

\begin{figure*}[htbp]
\hspace{0.0\textwidth}
\begin{subfigure}[t]{0.49\textwidth}
    \vspace{0pt}
    \includegraphics[width = 1\textwidth]{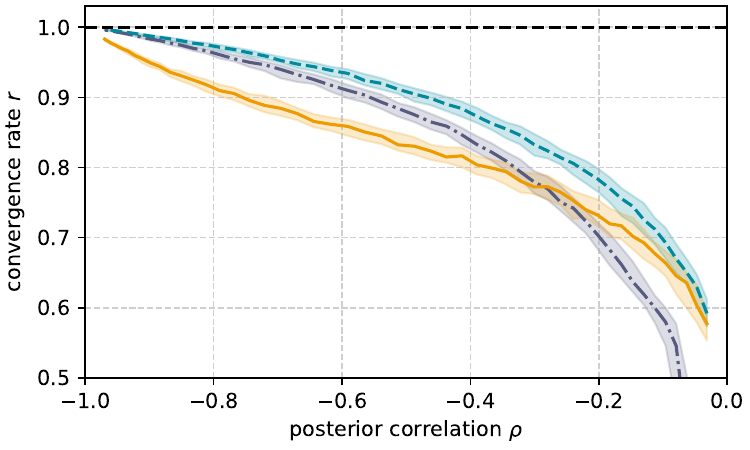}
\end{subfigure}
\hspace{-0.50\textwidth}
\begin{subfigure}[t]{0.02\textwidth}
    \vspace{2pt}
    \textbf{A}
\end{subfigure}
\hspace{0.47\textwidth}
\begin{subfigure}[t]{0.49\textwidth}
    \vspace{0pt}
    \includegraphics[width = 1\textwidth]{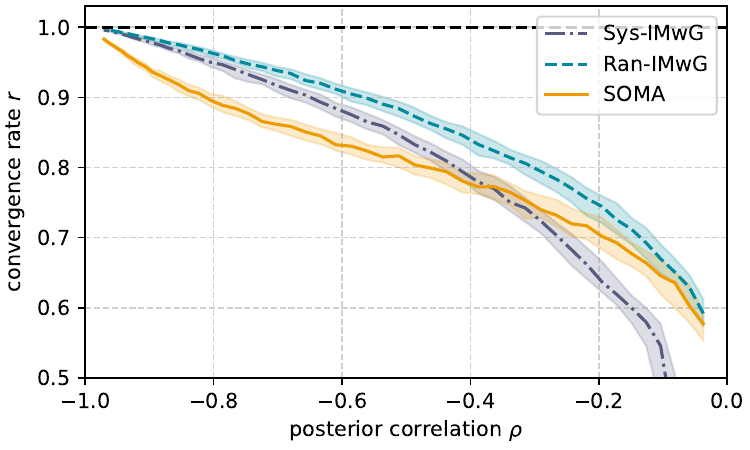}
\end{subfigure}
\hspace{-0.50\textwidth}
\begin{subfigure}[t]{0.02\textwidth}
    \vspace{2pt}
    \textbf{B}  
\end{subfigure}
\caption{
Convergence rates with different posterior correlations.
\textbf{A.} Using $\mathrm{Beta}(10, 10)$ priors on $X_i.$ This setting satisfies $\epsilon$-DP.
\textbf{B.} Using $\mathrm{Exp}(1)$ priors. The $\epsilon$-DP is violated.}
\label{fig:syn_convergence_compare}
\end{figure*}

\subsection{Details of Bayesian Linear Regression Given Privatized Data}\label{appendix:details_pbl}

We model the predictors with $x_{0,i} \sim\mathcal{N}_p(m, \Sigma)$, where $\Sigma=I_n$ and $m=(0.9,\ -1.17)$. Let $x=(\textbf{1}_{n\times 1}, x_0)$ represent the design matrix with the intercept, the outcomes $y$ satisfy $y\mid x_0\sim\mathcal{N}_n(x\beta, \sigma^2 I_n)$. 
We use conjugate priors on the coefficients $\beta$ and regression noise $\sigma$, i.e. the normal-inverse gamma (NIG) prior $(\beta,\sigma^2)\sim\mathrm{NIG}(\mu_0, \Lambda_0, a_0, b_0)$, which means $\sigma^2\sim\mathrm{InverseGamma}(a_0, b_0)$ and $\beta\mid\sigma^2\sim\mathcal{N}(\mu_0,\sigma^2 \Lambda_0^{-1})$. In this case, the posterior distribution is also a normal-inverse gamma distribution $\beta,\sigma^2 \mid x, y \sim\mathrm{NIG}(\mu_n, \Lambda_n, a_n, b_n)$
\begin{align*}
\mu_n&= (x^\top X + \Lambda_0)^{-1}(x^\top y + \mu_0^\top \Lambda_0) \\
\Lambda_n &= x^\top X + \Lambda_0 \\
a_n &= a_0 + \frac{n}{2} \\
b_n &= b_0 + \frac12(y^\top y + \mu_0^\top \Lambda_0 \mu_0 - \mu_n^\top \Lambda_n \mu_n).
\end{align*}
We set $a_0 = 10$, $b_0 = 10$, $\mu_0 = \mathbf{0}_{p+1}$, and $\Lambda_0 = 0.5 I_{p+1}$ with $p=2$ for our experiments.
The ground truth parameters $\theta^*=(\beta^*,\sigma^{2*})$ are denoted as
\begin{equation*}
\beta^* = (-1.79,\ -2.89,\ -0.66)^\top,\quad \sigma^{2*}=1.13.
\end{equation*}
For $\epsilon=30$, we simulate the privatized summary statistics $s_{\mathrm{dp}}$ from the model using the ground truth parameters $\theta^*$, resulting in
\begin{equation*}
s_{\mathrm{dp}} = \left(
\left(\begin{array}{c}
-0.5359 \\
-0.1340 \\
0.0851 \\
\end{array}\right),\
0.4365,\
\left(\begin{array}{ccc}
1.0000 & 0.1517 & -0.1787 \\
0.1517 & 0.0483 & -0.0280 \\
-0.1787 & -0.0280 & 0.0600 \\
\end{array}\right)
\right),
\end{equation*}
its corresponding vector form after removing duplicate entries is
\begin{equation*}
s_{\mathrm{dp}}^{\mathrm{vec}} = (-0.5359,\ -0.1340,\  0.0851,\  0.4365,\  0.1517,\ -0.1787,\  0.0483,\ -0.0280,\ 0.0600).
\end{equation*}
For $\epsilon=3$, the corresponding vector form of $s_{\mathrm{dp}}$ is given by
\begin{equation*}
s_{\mathrm{dp}}^{\mathrm{vec}} = (-0.5120,\ -0.1326,\  0.1554,\  0.4520,\  0.0471,\ -0.1216,\  0.0803,\ -0.0676,\ 0.0019).
\end{equation*}

\begin{table}[htbp]
\caption{Coupling time, convergence rate, acceptance rate for $n = 10$ and $\epsilon = 3$ and $\epsilon = 30.$ All values are presented as mean with quartiles.}
\label{tab:pbl_coupling1}
\centering
  \begin{tabular}{c|ccc}
    Method & Coupling Time & Convergence Rate & Acceptance Rate (\%) \\ 
    \hline
    \multicolumn{4}{c}{$n=10,\ \epsilon=3$} \\
    \hline
    Ran-IMwG & $156.04\ (46, 216)$ & $0.9937\ (0.9933, 0.9941)$ & $94.80\ (94.52, 95.11)$ \\ 
    Sys-IMwG & $\textbf{99.60}\ (\textbf{28}, \textbf{138})$ & $\textbf{0.9904}\ (\textbf{0.9895}, \textbf{0.9911})$ & $94.81\ (94.55, 95.09)$ \\ 
    SOMA & $152.98\ (42, 211)$ & $ 0.9937\ (0.9932, 0.9942)$ & $\textbf{99.49}\ (\textbf{99.44}, \textbf{99.55})$ \\ 
    \hline
    \multicolumn{4}{c}{$n=10,\ \epsilon=30$} \\
    \hline
    Ran-IMwG & $900.79\ (261, 1262)$ & $0.9989\ (0.9988, 0.9990)$ & $50.95\ (50.64, 51.25)$\\ 
    Sys-IMwG & $549.07\ (170, 764)$ & $0.9982\ (0.9980, 0.9983)$ & $50.94\ (50.66, 51.23)$\\ 
    SOMA & $\textbf{120.64}\ (\textbf{46}, \textbf{163})$ & $\textbf{0.9909}\ (\textbf{0.9902}, \textbf{0.9915})$ & $\textbf{91.91}\ (\textbf{91.80}, \textbf{92.01})$
  \end{tabular}
\end{table}

Table~\ref{tab:pbl_coupling1} report the coupling time of the joint posterior $(X, Y, \theta) \mid s_{\mathrm{dp}}$ as well as the convergence rate and the acceptance rate of the data imputation steps. For $\epsilon = 3$, all three methods have empirical acceptance probability higher than 94\% and they all converge fast. At $\epsilon = 30$, SOMA has an acceptance rate higher than $91\%$ while the other two IMwG samplers accept only $50\%$ of the proposed points. This translates into more rapid coupling and faster convergence of the joint $(X, Y, \theta)$ chain.

\begin{figure}[htbp]
    \centering
    \includegraphics[width=0.5\textwidth]{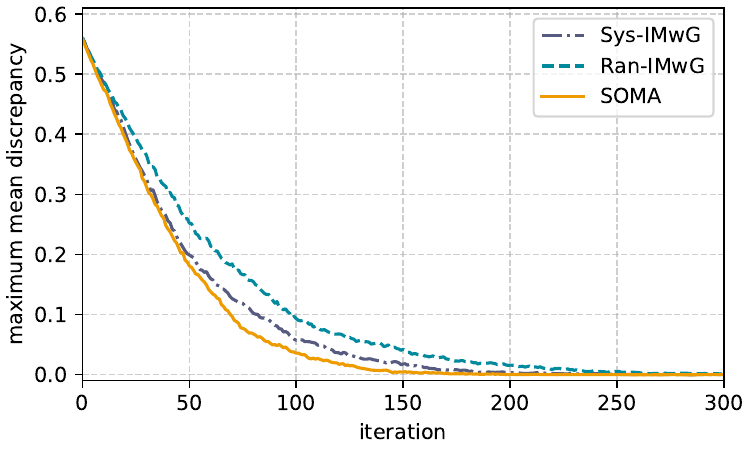}
    \caption{Evolution of the maximum mean discrepancy (MMD) between the empirical distribution of independent chains and the ground truth posterior. Lower values indicate closer agreement.}
    \label{fig:pbl_mmd}
\end{figure}

\begin{figure}[htbp]
\centering
\includegraphics[width=1.0\textwidth]{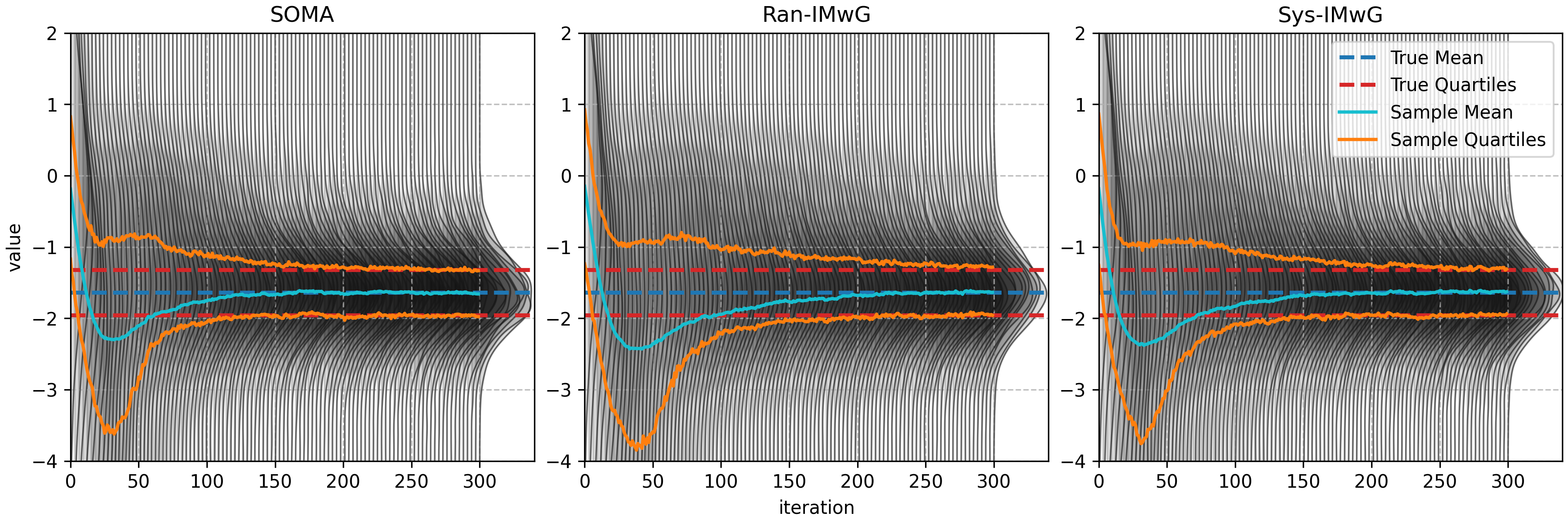}
\caption{Ridge plot of $\beta_0$ generated by running 1,000 independent single chains each for $T = 300$ iterations. The orange/cyan lines indicate the quartile/mean of $\beta_0$ across 1,000 chains. The red/blue dashed lines indicate the `true' posterior quartiles/mean, obtained from a long chain.}
\label{fig:pbl_ridge}
\end{figure}

Figure~\ref{fig:pbl_mmd} shows the Maximum Mean Discrepancy (MMD)~\citep{gretton2012kernel} between the states $\theta$ of the 1000 independent chains and the samples from the ground truth posterior distribution at each iteration under a challenging scenario where $n=100$ and $\epsilon=30$. The MMD for the SOMA decreases faster than for the other two methods, and approaches zero after 150 iterations. We give a ridge plot of the intercept term $\beta_0$ in Figure~\ref{fig:pbl_ridge}, showing how the posterior of $\beta$ given $s_{\mathrm{dp}}$ changes over MCMC iterations.
Based on visual inspections, it takes a DAMCMC algorithm using SOMA about 150 iterations to get convergence in $\beta_0$, while Ran-IMwG and Sys-IMwG require 300 and 200 iterations, respectively.

\subsection{Running Time for Different Methods}\label{sec:runtime}

Table~\ref{tab:runtime} summarizes the wall-clock times of Ran-IMwG, Sys-IMwG, and SOMA for each task, with CPU and GPU results shown separately.
With the vectorized implementation described in Section~\ref{sec:cost}, the time required by SOMA is almost less than twice that of the Ran/Sys-IMwG methods. In other words, although SOMA has to update the entire weight vector $(w_j)_{j = 0,\ldots,n}$, 
a vectorized/parallel implementation can effectively make the coefficient in front of $\mathcal{O}(n)$ small. All timings were obtained on an Intel Core i9‑10900K CPU and an Nvidia RTX 3090 GPU.

\begin{table}[htbp]
  \centering
  \tbl{Running time (seconds) for different deployment settings (mean $\pm$ sd)}
  {
  \begin{tabular}{lccc}
    Settings & Ran-IMwG & Sys-IMwG & SOMA \\
    \midrule
    \addlinespace[2pt]
    \multicolumn{4}{c}{\textbf{CPU deployment}}\\
    \addlinespace[2pt]
    \multicolumn{4}{l}{\emph{Two‑component synthetic example, 5000 iterations}}\\
    \hspace{1em}Beta prior, $\epsilon=5$   & $2.1578 \pm 0.0178$ & $1.4616 \pm 0.0140$ & $2.6208 \pm 0.0221$\\
    \hspace{1em}Beta prior, $\epsilon=10$  & $2.1750 \pm 0.0187$ & $1.4904 \pm 0.0183$ & $2.6344 \pm 0.0219$\\
    \hspace{1em}Beta prior, $\epsilon=50$  & $2.2103 \pm 0.0208$ & $1.5545 \pm 0.0186$ & $2.6941 \pm 0.0256$\\
    \hspace{1em}Exp.\ prior, $\epsilon=1$  & $1.8642 \pm 0.0218$ & $1.2175 \pm 0.0133$ & $2.3435 \pm 0.0160$\\
    \hspace{1em}Exp.\ prior, $\epsilon=2$  & $1.8794 \pm 0.0144$ & $1.2341 \pm 0.0107$ & $2.3652 \pm 0.0171$\\
    \hspace{1em}Exp.\ prior, $\epsilon=10$ & $1.9076 \pm 0.0176$ & $1.2896 \pm 0.0131$ & $2.4150 \pm 0.0228$\\
    \addlinespace[2pt]
    \multicolumn{4}{l}{\emph{Perturbed histograms, 10000 iterations}}\\
    \hspace{1em}$n=10,\ \epsilon=3$  & $1.2632 \pm 0.0213$ & $0.9367 \pm 0.0110$ & $1.6948 \pm 0.0163$\\
    \hspace{1em}$n=20,\ \epsilon=3$  & $1.2571 \pm 0.0089$ & $0.9405 \pm 0.0090$ & $1.6990 \pm 0.0160$\\
    \hspace{1em}$n=50,\ \epsilon=3$  & $1.2634 \pm 0.0111$ & $0.9457 \pm 0.0104$ & $1.7182 \pm 0.0164$\\
    \addlinespace[2pt]
    \multicolumn{4}{l}{\emph{Bayesian linear regression, 10000 iterations}}\\
    \hspace{1em}$n=100,\ \epsilon=30$ & $110.6691 \pm 0.4993$ & $102.5582 \pm 0.2933$ & $216.3009 \pm 0.4037$\\
    \hspace{1em}$n=10,\ \epsilon=30$  & $19.4360 \pm 0.0734$ & $18.4205 \pm 0.1278$ & $29.0447 \pm 0.1688$\\
    \hspace{1em}$n=10,\ \epsilon=3$   & $20.2298 \pm 0.0809$ & $18.8668 \pm 0.0845$ & $29.1283 \pm 0.1454$\\
    \midrule
    \addlinespace[2pt]
    \multicolumn{4}{c}{\textbf{GPU deployment}}\\
    \addlinespace[2pt]
    \multicolumn{4}{l}{\emph{Two‑component synthetic example, 5000 iterations}}\\
    \hspace{1em}Beta prior, $\epsilon=5$   & $8.5338 \pm 0.0562$ & $6.6322 \pm 0.0603$ & $11.3494 \pm 0.0712$\\
    \hspace{1em}Beta prior, $\epsilon=10$  & $8.5955 \pm 0.0485$ & $6.8184 \pm 0.0469$ & $11.2846 \pm 0.3150$\\
    \hspace{1em}Beta prior, $\epsilon=50$  & $8.7059 \pm 0.0683$ & $7.2848 \pm 0.0693$ & $11.2003 \pm 0.0848$\\
    \hspace{1em}Exp.\ prior, $\epsilon=1$  & $7.2237 \pm 0.1338$ & $5.6446 \pm 0.0837$ & $10.1485 \pm 0.0462$\\
    \hspace{1em}Exp.\ prior, $\epsilon=2$  & $7.3493 \pm 0.1152$ & $5.8079 \pm 0.0837$ & $10.2973 \pm 0.0523$\\
    \hspace{1em}Exp.\ prior, $\epsilon=10$  & $7.4223 \pm 0.1207$ & $6.2104 \pm 0.0721$ & $10.7829 \pm 0.0401$\\
    \addlinespace[2pt]
    \multicolumn{4}{l}{\emph{Perturbed histograms, 10000 iterations}}\\
    \hspace{1em}$n=10,\ \epsilon=3$  & $5.3686 \pm 0.1272$ & $4.4201 \pm 0.0355$ & $7.5973 \pm 0.1477$\\
    \hspace{1em}$n=20,\ \epsilon=3$  & $5.3741 \pm 0.0511$ & $4.4333 \pm 0.0452$ & $7.5172 \pm 0.0989$\\
    \hspace{1em}$n=50,\ \epsilon=3$  & $5.4059 \pm 0.0382$ & $4.4268 \pm 0.0313$ & $7.4140 \pm 0.0916$\\
    \addlinespace[2pt]
    \multicolumn{4}{l}{\emph{Bayesian linear regression, 10000 iterations}}\\
    \hspace{1em}$n=100,\ \epsilon=30$ & $534.7907 \pm 3.0974$ & $521.9027 \pm 2.3709$ & $953.7772 \pm 2.7545$\\
    \hspace{1em}$n=10,\ \epsilon=30$  & $82.4064 \pm 0.9935$ & $80.9015 \pm 1.1718$ & $125.8077 \pm 0.5780$\\
    \hspace{1em}$n=10,\ \epsilon=3$   & $84.0079 \pm 1.1297$ & $83.3762 \pm 0.8688$ & $126.1711 \pm 1.3987$\\
  \end{tabular}
  }
  \label{tab:runtime}
  \begin{tabnote}
    Running times are reported as mean $\pm$ standard deviation over 100 independent runs.  
    CPU and GPU deployments use identical algorithmic settings; differences reflect hardware acceleration only.
  \end{tabnote}
\end{table}

\end{document}